\newtheorem{theorem}{Theorem}
\newtheorem{proposition}[theorem]{Proposition}
\newtheorem{lemma}[theorem]{Lemma}
\theoremstyle{definition}
\newtheorem{definition}[theorem]{Definition}
\theoremstyle{remark}
\numberwithin{theorem}{section}
\numberwithin{proposition}{section}
\numberwithin{lemma}{section}
\numberwithin{corollary}{section}
\numberwithin{definition}{section}
\numberwithin{remark}{section}
\numberwithin{example}{section}
\newcommand{\be}{\begin{equation}}
\newcommand{\en}{\end{equation}}
\newcommand{\ben}{\begin{equation*}}
\newcommand{\enn}{\end{equation*}}
\newcommand{\bea}{\begin{eqnarray}}
\newcommand{\ena}{\end{eqnarray}}
\begin{document}
	
	\newlength\tindent
	\setlength{\tindent}{\parindent}
	\setlength{\parindent}{0pt}
	\renewcommand{\indent}{\hspace*{\tindent}}
	
	\begin{savenotes}
		\title{
			\bf{ 
				Short-time implied volatility of \\ additive normal tempered stable  processes \\
		}}
		\author{
			Michele Azzone$^{\ddagger\,\S}$ \& 
			Roberto Baviera$^\ddagger$ 
		}
		
		\maketitle
		
		\vspace*{0.11truein}
		\begin{tabular}{ll}
			$(\ddagger)$ &  Politecnico di Milano, Department of Mathematics, 32 p.zza L. da Vinci, Milano \\
			$(\S)$& European Central Bank\footnote{The views expressed are those of the author and do not necessarily reflect the views of ECB.}\\
		\end{tabular}
	\end{savenotes}
	
	\vspace*{0.11truein}
	\begin{abstract}
		\noindent 
		Empirical studies have emphasized that the equity implied volatility is characterized by a negative skew inversely proportional to the square root of the time-to-maturity.\\
		We examine the short-time-to-maturity behavior of the implied volatility smile
		for pure jump exponential additive processes. An excellent calibration of the equity volatility surfaces has been achieved by a class of these additive processes with power-law scaling. The two power-law scaling parameters are $\beta$, related to the variance of jumps, and $\delta$, related to the smile asymmetry. It has been observed, in option market data, that $\beta=1$ and $\delta=-1/2$.\\
		In this paper, we prove that the implied volatility of these additive processes is consistent, in the short-time,
		with the equity market empirical characteristics if and only if $\beta=1$ and $\delta=-1/2$.\\
		\noindent
		
	\end{abstract}
	
	\vspace*{0.11truein}
	{\bf Keywords}: 
	Additive process, volatility surface, skew, small-time, calibration.
	\vspace*{0.11truein}
	
	{\bf JEL Classification}: 
	C51, 
	G13. 
	
	\vspace{3cm}
	\begin{flushleft}
		{\bf Address for correspondence:}\\
		Roberto Baviera\\
		Department of Mathematics \\
		Politecnico di Milano\\
		32 p.zza Leonardo da Vinci \\ 
		I-20133 Milano, Italy \\
		Tel. +39-02-2399 4575\\
		Fax. +39-02-2399 4621\\
		roberto.baviera@polimi.it
	\end{flushleft}

	\newpage
	
	\begin{center}
		\Large\bfseries 
		Short-time implied volatility of \\ additive normal tempered stable  processes 
		
	\end{center}
	
	\vspace*{0.21truein}
	
\section{Introduction}

Which characteristics of the implied volatility surface  should be reproduced by an option pricing model?  
A \textit{stylized fact} that characterizes the equity market 
is a downward slope in terms of strike, i.e. a  negative skew, 
where the skew is the at-the-money (ATM) derivative of the implied volatility w.r.t. 
the moneyness.\footnote{The moneyness is the logarithm of the strike price over the forward price. 
For a description of the equity volatility surface and a definition of skew, see, e.g., \citet{gatheral2011volatility}.} 
Specifically, the short-time\footnote{We refer to short-time-to-maturity.} negative skew is proportionally inverse to the square root of the time-to-maturity. 
The first empirical study 
of the equity skew dates back to \citet{carr2003finite}: they find that the S\&P 500 short-time skew is, on average, asymptotic to $-0.25/\sqrt{t}$. 
\citet{fouque2004maturity} arrive at a similar conclusion considering only options with short-time-to-maturity (i.e. up to three months).  In this paper, we show that a pure jump additive process, which also calibrates accurately the whole equity volatility surface, reproduces the power scaling market skew. 

\bigskip
	
A vast literature on short-time implied volatility and skew is available for jump-diffusion processes. 
Both the ATM 
\citep[see, e.g.][]{alos2007short, roper2009implied,muhle2011small,AndersenLipton,figueroa2016high}
and the OTM implied volatility 
\citep[see, e.g.][]{tankov2011pricing,figueroa2012small,mijatovic2016new,figueroa2018short}
are analyzed.
For a jump-diffusion L\'evy process,
	the ATM implied volatility is determined uniquely by the diffusion term; it goes to zero as the time-to-maturity goes to zero if 
	there is no diffusion term, i.e. for a pure-jump process. 
For this reason, pure jumps L\'evy processes are not suitable to reproduce the market short-time smile, 
because the short-time implied volatility is strictly positive in 
all financial markets. 

\citet{muhle2011small} have shown that, 
for a relatively broad class of additive models, 
the ATM behavior at small-time
is the same as the corresponding Levy.
In this paper, we analyze the ATM implied volatility and skew for a class of pure jump additive processes that is consistent with the equity market smile, differently from the L\'evy case: this is the main theoretical contribution of this study.
	
\bigskip
	
An additive process is a stochastic process with independent but non-stationary increments;
	a detailed description of the main features of additive processes is provided by \citet{Sato}. 
	In this paper, we focus on a pure jump additive extension of the well-known L\'evy normal tempered stable process \citep[for a comprehensive description 
of this set of  L\'evy processes, see, e.g.,][Ch.4]{Cont}.

\smallskip

Pure jump processes present a main 
advantage w.r.t.  jump-diffusion models: 
they generally describe underlying dynamics 
more  parsimoniously. 
In a jump-diffusion, both small jumps 
and the diffusion term describe little changes in the process \citep[see, e.g. ][]{Asmussen2001}.
Because both components of the jump-diffusion process are 
qualitatively similar, 
when calibrating the model to the plain vanilla option market, 
it is rather difficult to disentangle the two components 
and several sets of parameters 
achieve similar results.

\smallskip
Recently, it has been introduced a class of pure jump additive processes, the power-law scaling additive normal tempered stable process (hereinafter ATS),
where
the two key time-dependent parameters --the variance of jumps per unit of time, $k_t$, and the asymmetry  parameter, $\eta_t$--  
present a power scaling w.r.t. the time-to-maturity $t$.
 It has been shown the excellent calibrating performances of this class of processes \citep[see, e.g. ][]{azzone2019additive}. 
	On the one hand, 
this class of pure jump additive processes allows calibrating  
	the S\&P 500 and EURO STOXX 50 implied volatility surfaces with great accuracy, 
	reproducing
	``exactly" the term structure of the equity market implied volatility surfaces.
	On the other hand, the observed reproduction of the skew term structure appears remarkable. 

Moreover, an interesting self-similar characteristic w.r.t. the time-to-maturity arises. 
Specifically, among all allowed power laws, the power scaling of  $k_t$, $\beta$, is close to one, while the power scaling of $\eta_t$, $\delta$, is statistically
consistent with minus one half \citep[see, e.g. ][]{azzone2019additive}.

\bigskip	
	Consider an 
option price with strike $K$ and time-to-maturity $t$. We define $I_t(x)$ the model implied volatility,
where $x:=\log{\frac{K}{F_0(t)}}$ is the moneyness and $F_0(t)$ is the underlying forward price with time-to-maturity $t$.
In particular,  we consider the  \textit{moneyness degree}
$y$, s.t. $x=:y\sqrt{t}$, introduced by \citet{medvedev}. 
It has been observed that the  \textit{moneyness degree} $y$ can be interpreted as the distance of the option moneyness from the forward price in terms of the Black Brownian motion standard deviation \citep[see, e.g.,][]{carr2003finite,medvedev}. The implied volatility w.r.t  $y$ is 
\[
	{\cal I}_{t}(y) := I_t(y \sqrt{t})\;\;,
\] 
and its first order Taylor expansion w.r.t. $y$ in $y=0$  is
\[ 
	{\cal I}_{t}(y) = {\cal I}_{t}(0) +y\,\left.\frac{d {\cal {I}}_t(y)}{dy}\right\vert_{y=0}+ o(y)=:\hat{\sigma}_t+y\,\hat{\xi}_t+o(y)\;\;.
\]
We call $\hat{\xi}_t$ the skew term. We define $\hat{\sigma}_0$ and $\hat{\xi}_0$ as the the limits for $t$ that goes to zero of $\hat{\sigma}_t$ and $\hat{\xi}_t$.  
Their financial interpretation is straightforward: 
$\hat{\sigma}_0$ corresponds to the short-time ATM implied volatility, while  $\hat{\xi}_0$ is related to the short-time skew,
because it is possible to write  the skew as 
\[ 
\left.\frac{d {I}_t(x)}{dx}\right\vert_{x=0} = \frac{\hat{\xi}_t}{\sqrt{t}} \;\;.
\] 

	In Figure \ref{Figure::Short_time_skew}, we present an example of  the short-time implied volatility and the skew for the S\&P 500  at a given date,
 the $22^{nd}$ of June 2020 (the business day 
after a quadruple witching Friday\footnote{A quadruple witching Friday is the third Friday of the months of March, June, September and December: in this quarterly date,
stock options, stock futures, equity index futures, and equity index options all expire on the same day.}). 
On the left, we plot the one month (blue circles), two months (red squares), three months (orange stars), and four months (purple triangles) market implied volatility w.r.t. the \textit{moneyness degree} $y$: we observe a positive and finite short-time $\hat{\sigma}_t$. 
On the right, we plot the market skew w.r.t. the time $t$: it appears to be well described by a fit $O\left(\sqrt{\frac{1}{t}}\right)$.  
	\begin{center}
		\begin{minipage}[t]{1\textwidth}
			\includegraphics[width=\textwidth]{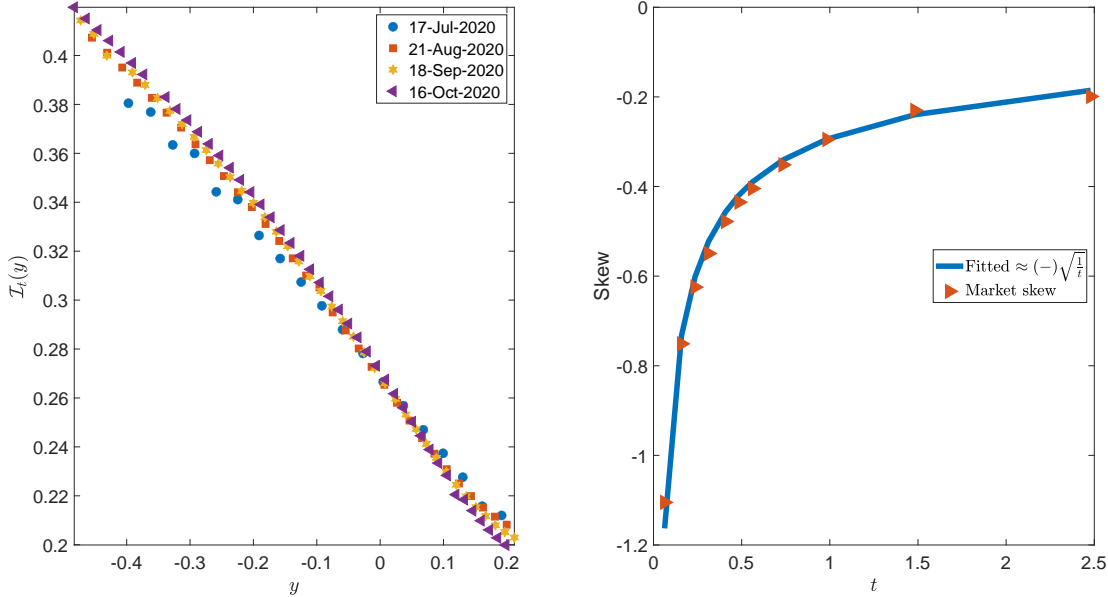}
\label{Figure::Short_time_skew}\captionof{figure}{\small Example of the S\&P 500 short-time implied volatility and  skew on the $22^{nd}$ of June 2020. On the left, we plot the one month (blue circles), two months (red squares), three months (orange stars), and four months (purple triangles) market implied volatility w.r.t. the \textit{moneyness degree} $y$. We observe a positive short-time $\hat{\sigma}_t$. On the right, we plot the market skew w.r.t. the time $t$ and the fitted $\approx \left(-\right)\sqrt{\frac{1}{t}}$.}
		\end{minipage} 
	\end{center}
As already observed in some empirical studies \citep[see, e.g.][]{carr2003finite,fouque2004maturity},
equity market data are compatible with a positive and finite $\hat{\sigma}_0$ and
a negative and finite $\hat{\xi}_0$, that  leads to a skew  proportionally inverse to the square root of the time-to-maturity.
We aim to present a pure-jump model with these features.

\smallskip

We study the behavior of $\hat{\sigma}_t$ and $\hat{\xi}_t$ for the ATS process, deriving, in (\ref{equation::call}, \ref{equation::put}),  an extension of the  \citet[][p.4, Eq.(7)]{hull1987pricing} formula
   \citep[see, e.g., ][for another application of this formula to the short-time case]{alos2007short}. 
This formula leads to two results:
on the one hand, we build some relevant bounds for $\hat{\sigma}_t$;   
on the other hand, we obtain an expression for $\hat{\xi}_t$ in (\ref{equation::smirk}) via  the implicit function theorem \citep[see, e.g.][Th.11, p.164]{loomis1968advanced}.\\

Three are the main contributions of this paper.
First, we deduce 
for a family of pure-jump additive processes, the  ATS, the behavior of the short-time ATM implied volatility $\hat{\sigma}_t$ and skew term $\hat{\xi}_t$  
over the region of admissible parameters (see \textbf{Theorem \ref{theorem:semplified_f}}).
Second, we prove that
only the scaling parameters observed in market data ($\beta=1$ and $\delta=-1/2$) are compatible with a finite short-time implied volatility and a short-time skew proportionally inverse to the square root of the time-to-maturity. 
Third, we demonstrate it exists a pure-jump additive process (an exponential ATS) that presents the two features observed in market data:
not only a finite and positive short-time implied volatility but also a power scaling skew.\\

	
The rest of the paper is organized as follows. 
Section 2 presents the ATS power scaling process and  the extension of the  Hull and White formula. 
Section 3 defines the implied volatility problem and analyzes the short-time ATM implied volatility $\hat{\sigma}_t$. 
Section 4 computes the short-time limit of the skew term $\hat{\xi}_t$. 
Section 5 presents the major result: the ATS process is consistent with the equity market if and only if $\beta=1$ and $\delta=-1/2$. Finally, Section 6 concludes. In the appendices, we report some technical lemmas: on basic properties in Appendix A and
on short-time limits in Appendix B.

\section{The ATS implied volatility}

In this Section, we recall the  characteristic function of the power-law scaling additive normal tempered stable process (ATS) and the notation employed in the paper.
We also introduce a sequence of random variables (\ref{eq: master}) with the same distribution of the ATS for any fixed time $t$; 
we use these random variables to study the short-time implied volatility. 

We discuss the volatility smile at small-maturity produced by this forward model and determine the power laws of the ATS parameters that are consistent with the market data i.e. which choices of $\beta$ and  $\delta$ reproduce the market short-time features mentioned above.\\

We define a sequence of positive random variables $S_t$ via its Laplace transform. 
The random variable $S_t$ appears in the definition of the random variable $f_t$, that is used to model a forward contract of the underlying of interest.
 
\begin{definition} {\bf Definition of $\left \{S_t\right \}_{t\geq 0}$} \\
	\label{definition::S_t}
	Let $\left \{S_t\right \}_{t\geq 0}$ be a sequence of positive random variables with a Laplace transform s.t.\\
	\( \ln {\cal L}_t \left(u;\;k_t,\;\alpha\right) := \ln \mathbb{E}\left[e^{-uS_t}\right]= 
	\begin{cases} 
	\displaystyle \frac{t}{k_t}
	\displaystyle \frac{1-\alpha}{\alpha}
	\left \{1-		\left(1+\frac{u \; k_t}{(1-\alpha)t}\right)^\alpha \right \} & \mbox{if } \; 0< \alpha < 1 \\[4mm]
	\displaystyle -\frac{t}{k_t}
	\ln \left(1+ \frac{u \;k_t}{t}\right)  & \mbox{if } \; \alpha = 0 \end{cases}\;\; , \; t\ge 0\)\\
	where $k_t:=\bar{k}t^\beta$ and $\bar{k},\;\beta \in \mathbb{R}^+$.
	
\end{definition}
Notice that, by the Laplace transform, we can compute any  moment of $S_t$. The first two are
\begin{enumerate}
	\item \(\mathbb{E}\left[S_t\right]=1\);
	\item \(Var\left[S_t\right]=k_t/t\).
\end{enumerate}

\begin{definition} {\bf Definition of $\left \{f_t\right \}_{t\geq 0}$} \\
	\label{definition::f_t}
	Let $\left \{f_t\right \}_{t\geq 0}$ be a sequence of random variables with characteristic function s.t.
	\begin{equation}
	\label{eq:characteristic}
	\mathbb{E}\left[e^{iuf_t}\right]:={\cal L}_t \left(iu t \left(\frac{1}{2}+\eta_t \right)\bar{\sigma}^2+t\frac{u^2\bar{\sigma}^2}{2};\;k_t,\;\alpha \right)e^{i\,u\,\varphi_t\,t}\;\;,  \; t\ge 0
	\end{equation}
where \begin{equation}
  \eta_t:=\bar{\eta}t^\delta\;\text{and}\;  \varphi_t\, t:=-\ln {\cal L}_t\left(t\bar{\sigma}^2 \eta_t;\;k_t,\;\alpha\right) \label{equation::power_scaling_param}\;\;.
\end{equation}
	 $\bar{\sigma},\,\bar{\eta} \, \in \mathbb{R}^+$ and $\delta\, \in \mathbb{R}.$
	
\end{definition}

Notice that the characteristic function is the same as the  power-law scaling additive normal tempered stable process (ATS) in \citet[][Eq.2]{azzone2019additive}; the notation has been slightly simplified, we report it at the end of the paper.
We also define $F_0(t)$, the forward contract at time $0$ with maturity $t$, and model  the same forward contract at maturity as
\[
 F_t(t) := F_0(t) \; e^{f_t} \;\; .
\]


We report the known result on the existence of an additive process with characteristic function (\ref{eq:characteristic}), cf. 
\citet[][Th.2.3]{azzone2019additive}.

\begin{theorem}{\bf Power-law scaling ATS\\}\label{theorem:semplified_f}
It exits an additive process with the same characteristic function of (\ref{eq:characteristic}),
where  $\alpha \in [0,1)$ and $\beta, \delta\in \mathbb{R}$ with either $\beta=\delta=0$ or
\begin{enumerate}
			\item $ \displaystyle 0 \leq \beta\leq \frac{1}{1-\alpha/2}\;\;;$ 
			\item $-\min\left(\beta, \dfrac{1-\beta\left(1-\alpha\right)}{\alpha} \right)<\delta\leq 0\;\;;$
\end{enumerate}
where the second condition reduces to $ -\beta< \delta \leq 0$ for $\alpha =0$ \qed
\end{theorem}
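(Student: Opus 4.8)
The plan is to check the hypotheses of Sato's criterion for the existence of an additive process with prescribed one–dimensional marginals \citep[][Th.~9.8]{Sato}: writing $\phi_t(u):=\E\big[e^{iuf_t}\big]$ for the characteristic function in (\ref{eq:characteristic}), it suffices to produce, for every $t\ge 0$, a L\'evy--Khintchine triplet $(A_t,\nu_t,\gamma_t)$ generating $\phi_t$ with $A_0=0$, $\nu_0=0$, $\gamma_0=0$, such that $A_s\le A_t$ (as quadratic forms) and $\nu_s(B)\le\nu_t(B)$ for every Borel $B$ whenever $s\le t$, together with the usual continuity of $t\mapsto(A_t,\nu_t,\gamma_t)$. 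First I would note that, by (\ref{eq:characteristic}), for each fixed $t$ the law of $f_t$ is that of a Brownian motion with drift subordinated by $S_t$ (an infinitely divisible positive variable of tempered–stable type), plus the deterministic shift $\varphi_t t$; hence $f_t$ is infinitely divisible and pure jump, so $A_t\equiv 0$ and the Gaussian part is irrelevant, while $\gamma_t$ is fixed by the normalisation $\E[e^{f_t}]=1$ built into $\varphi_t$ in (\ref{equation::power_scaling_param}), so continuity of $\gamma_t$ and $\gamma_0=0$ will follow once $\nu_t$ is understood. The degenerate case $\beta=\delta=0$ is immediate: then $k_t\equiv\bar k$, $\eta_t\equiv\bar\eta$, the exponent in (\ref{eq:characteristic}) equals $t\,\psi_1(u)$ for a fixed $\psi_1$, so $f_t$ is an ordinary normal tempered stable L\'evy process and hence additive. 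Assume henceforth $(\beta,\delta)\neq(0,0)$.

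The substantive step is to make $\nu_t$ explicit and control its monotonicity in $t$. Mixing the Gaussian kernel against the tempered–stable subordinator L\'evy density and rescaling the subordinator clock, I would write $\nu_t(dx)=g_t(x)\,dx$ with
\[
 g_t(x)=m_t\int_0^{\infty} u^{-\alpha-3/2}\,\exp\!\Big(-\frac{(x+b_tu)^2}{2u}-\ell_t u\Big)\,du ,\qquad x\neq 0,
\]
where $b_t:=\tfrac12+\bar\eta t^{\delta}$, $\ell_t:=\tfrac{1-\alpha}{\bar\sigma^2\bar k}\,t^{-\beta}$ and $m_t\propto t^{\,1-\beta(1-\alpha)}$ (for $\alpha=0$ the exponent $-\alpha-3/2$ becomes $-3/2$ and the analysis is the same); evaluating the integral gives the closed form $g_t(x)\propto m_t\,\omega_t^{\alpha+\frac12}\,|x|^{-\alpha-\frac12}e^{-b_tx}K_{\alpha+\frac12}(\omega_t|x|)$ with $\omega_t:=(b_t^2+2\ell_t)^{1/2}$ and $K$ the modified Bessel function of the second kind. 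Since $g_t>0$ on $\mathbb R\setminus\{0\}$ and $(t,x)\mapsto g_t(x)$ is smooth on $(0,\infty)\times(\mathbb R\setminus\{0\})$, the inequality $\nu_s\le\nu_t$ is equivalent to $\partial_t g_t(x)\ge 0$ for all $t>0$, $x\neq 0$. Differentiating the integral representation,
\[
 \partial_t g_t(x)=\frac{m_t'}{m_t}\,g_t(x)+m_t\int_0^{\infty}u^{-\alpha-3/2}e^{-(x+b_tu)^2/(2u)-\ell_t u}\Big(-b_t'(x+b_tu)-\ell_t' u\Big)\,du ,
\]
and since $\delta\le0$, $\beta\ge0$ force $b_t'\le0$ and $\ell_t'\le0$, the last factor is $\ge0$ for every $x\ge0$; so whenever $m_t'\ge0$, i.e.\ $\beta\le(1-\alpha)^{-1}$, the case $x\ge0$ is done, and all the difficulty is concentrated in $x<0$ (where the integrand changes sign), equivalently in the competition between $e^{-b_tx}=e^{b_t|x|}$ and $K_{\alpha+\frac12}(\omega_t|x|)$ in the left tail.

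Necessity of Conditions 1–2 I would read off the two scaling exponents of $g_t$. As $x\to0$, $g_t(x)\sim\mathrm{const}\cdot m_t\,|x|^{-1-2\alpha}\propto t^{\,1-\beta(1-\alpha)}|x|^{-1-2\alpha}$, forcing $1-\beta(1-\alpha)\ge0$; as $|x|\to\infty$, $g_t(x)\sim\mathrm{const}\cdot m_t\,\omega_t^{\alpha}\,|x|^{-\alpha-1}e^{-(\omega_t\pm b_t)|x|}$, so one needs the right–tail rate $\omega_t+b_t$ nonincreasing in $t$ (true precisely when $\delta\le0$), the left–tail rate $\omega_t-b_t=2\ell_t/(\omega_t+b_t)$ nonincreasing, and the prefactor $m_t\omega_t^{\alpha}$ nondecreasing. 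Inserting the small–$t$ behaviour $\omega_t\asymp\max(\bar\eta t^{\delta},t^{-\beta/2})$ converts the last two requirements into $\delta\ge-\beta$ together with $1-\beta(1-\alpha)+\alpha\delta\ge0$ — i.e.\ Condition 2 — and into $1-\beta(1-\alpha/2)\ge0$ — i.e.\ Condition 1 — which also subsume $1-\beta(1-\alpha)\ge0$. The converse, that Conditions 1–2 are \emph{sufficient}, is the genuinely delicate point and I expect it to be the main obstacle: one has to prove $\partial_t g_t(x)\ge0$ for \emph{all} $x<0$, not merely in the tails, and the estimates involved come down to precise monotonicity and convexity properties of $z\mapsto K_{\alpha+\frac12}'(z)/K_{\alpha+\frac12}(z)$ and to the global (not just small–$t$) monotonicity of $m_t\omega_t^{\alpha}$ and $\omega_t\pm b_t$, quantities whose $t$–derivatives have indeterminate sign until Conditions 1–2 are imposed. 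Once the pointwise inequality holds, $\omega_t\to\infty$ (because $\beta>0$) while $m_t$ stays bounded as $t\downarrow0$, so the marginals converge weakly to $\delta_0$ and $(\nu_t,\gamma_t)\to(0,0)$; smoothness of $g_t$ on compact $t$–intervals gives the required continuity, and Sato's theorem then yields the additive process, which completes the proof.
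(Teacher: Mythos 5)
First, a point of orientation: the paper does not prove this statement at all --- it is quoted, with an immediate \qed, from \citet[Th.~2.3]{azzone2019additive}, so there is no in-paper proof to compare against. Judged on its own terms, your strategy is the standard (and, in the cited reference, the actual) one: invoke Sato's existence criterion for additive processes, reduce to showing that the system of L\'evy measures $\nu_t$ is nondecreasing in $t$, write $\nu_t(dx)=g_t(x)\,dx$ as the Gaussian mixture against the tempered-stable subordinator density, and obtain the closed form with the Bessel function $K_{\alpha+1/2}$. Your identifications of $b_t$, $\ell_t$, $m_t\propto t^{1-\beta(1-\alpha)}$ and $\omega_t=(b_t^2+2\ell_t)^{1/2}$ are correct, as is the disposal of the degenerate L\'evy case $\beta=\delta=0$ and of the drift/continuity bookkeeping.

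The genuine gap is that you never prove the one thing the theorem actually asserts. The statement is an existence claim: \emph{for all} $(\beta,\delta)$ satisfying Conditions 1--2 the process exists, i.e.\ $\partial_t g_t(x)\ge 0$ for every $t>0$ and every $x\neq 0$. Your proposal establishes (heuristically, from the $x\to0$ and $|x|\to\infty$ asymptotics of $g_t$) that Conditions 1--2 are \emph{necessary} for this monotonicity, and then explicitly declares the sufficiency direction ``the genuinely delicate point'' and ``the main obstacle'' without carrying it out. That is precisely the content of the theorem, so what you have is a plan, not a proof. Two symptoms make the gap concrete. First, your own origin analysis only yields $\beta\le 1/(1-\alpha)$, which is strictly weaker than Condition~1 ($\beta\le 1/(1-\alpha/2)$); the binding constraint therefore comes from the interior of the $x$-range (the competition between $e^{b_t|x|}$ and $K_{\alpha+1/2}(\omega_t|x|)$ for $x<0$ at intermediate scales), exactly the region your argument leaves open. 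Second, the theorem's boundary structure --- strict inequality $-\min\bigl(\beta,\tfrac{1-\beta(1-\alpha)}{\alpha}\bigr)<\delta$ but non-strict $\delta\le 0$ and $\beta\le\tfrac{1}{1-\alpha/2}$ --- cannot be recovered from leading-order tail asymptotics alone and is not addressed. To close the argument you would need a global pointwise estimate, e.g.\ a sign analysis of $\partial_t\log g_t(x)$ using monotonicity bounds on $z\,K_{\alpha+1/2}'(z)/K_{\alpha+1/2}(z)$ valid on all of $(0,\infty)$, combined with the exact $t$-derivatives of $m_t$, $b_t$, $\omega_t$; short of that, the proof is incomplete.
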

The region of admissible values for the scaling parameters $\beta$ and $\delta$ is shown in Figure \ref{Figure::Admissible}.

In particular, we mention that, $\forall \, \alpha \in [0,1)$,	
the scaling parameters observed in the market, $\left\{\delta=-1/2,\; \beta=1\right\}$, are always inside the ATS admissible region. 
	In Figure \ref{Figure::Admissible}, we plot the admissible region for the scaling parameters $\beta$ and $\delta$. In this paper, we prove that the ATS implied volatility at short-time is qualitatively different for different sets of scaling parameters. We separate the admissible region in five Cases:\\
    Case 1 (grey area): 
\begin{equation*}
        \left\{\beta<1, \;\; -\min\left(\frac{1}{2},\beta\right)<\delta\leq 0\right\} \cup \left\{\delta=\beta=0 \right\}\;\;.
    \end{equation*}
Case 2 (orange area): 
\begin{equation*}
\left\{-\min\left(\beta,\frac{1-\beta(1-\alpha)}{\alpha}\right)<\delta<-\frac{1}{2} \, \max\left({\beta},1\right)\right\}\;\;.    
\end{equation*}
 Case 3 (light green area): 
 \begin{equation*}
\left\{\beta\geq 1,\;\;     -\frac{\beta}{2}\leq\delta\leq 0\right\}\setminus \left\{\beta=1,\;\; \delta=-\frac{1}{2}\right\}\;\;.
 \end{equation*}
Case 4 (continuous dark green line): 
\[
\left\{\beta<1,\;\;\delta=-\frac{1}{2}\right\}\;\;.
\]
Case 5 (red dot): 
\begin{equation*}
	  \left\{\beta=1,\;\; \delta=-\frac{1}{2}\right\}\;\;.
\end{equation*}
	
Notice that Case 3 includes all its boundaries, identified by the green circles, with the exception of the  point  $\{ \beta=1,\;\delta=-1/2\}$ (red); 
Case 1 includes just its upper boundary (but it does not include its lower boundary), identified by the grey squares. 

The main objective of this paper is to prove that the $5$ Cases correspond to different behaviors of the implied volatility in the short-time. 
A summary of the ATS short-time behavior, and in particular of the ATM value $\hat{\sigma}_0$ and of the skew term $\hat{\xi}_0$, w.r.t. the different Cases is available in Table \ref{tab:Results}.

\begin{center}
	\begin{tabular}{c| l}
		Case 1 &  $\hat{\sigma}_0=0$ \\
		Case 2 &  $\hat{\sigma}_0=\infty$  \\
		Case 3 & $\hat{\sigma}_0>0$ \; and \; $\hat{\xi}_0=0$\\
		Case 4 & $\hat{\sigma}_0>0$ \; and \; $\hat{\xi}_0=-\sqrt{\frac{\pi}{2}}$\\
		Case 5 &  $\hat{\sigma}_0>0$ \; and \; $\hat{\xi}_0<0$\\		
	\end{tabular}	\captionof{table}{\small Summary of ATS short-time implied volatility behavior in the five Cases. We observe that only the last $3$ Cases admit a positive (and finite) implied volatility.
	}		\label{tab:Results}
\end{center}

\begin{center}
		\begin{minipage}[t]{1\textwidth}
			\includegraphics[width=\textwidth]{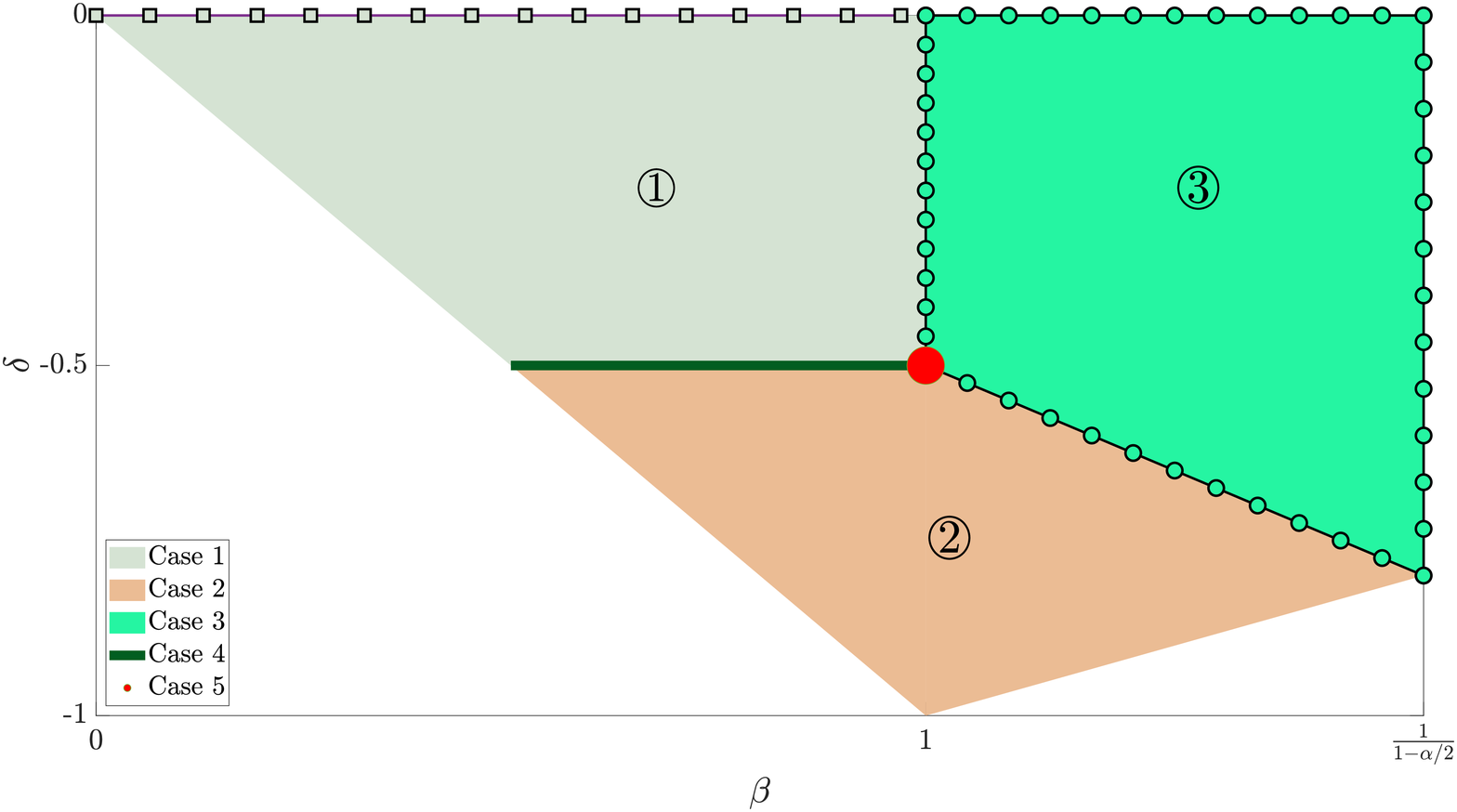}\label{Figure::Admissible}
\captionof{figure}{\small ATS admissible region for the scaling parameters. We separate the region in five Cases. \\i) Case
		 1 (grey area)  with  $\hat{\sigma}_0=0$. ii) Case 2 (orange area) with  $\hat{\sigma}_0=\infty$. iii) Case 3 (light green area) with finite $\hat{\sigma}_0$ and $\hat{\xi}_0=0$. iv) Case 4 (continuous dark green line) with finite $\hat{\sigma}_0$ and $\hat{\xi}_0=-\sqrt{\frac{\pi}{2}}$.
			v) Case 5 (red dot) with finite $\hat{\sigma}_0$ and negative and finite $\hat{\xi}_0$.\\
			Notice that Case 3 includes all its boundaries, identified by the green circles, with the exception of the point  $\{ \beta=1,\;\delta=-1/2\}$ (red), that corresponds to Case 5. Moreover, Case 1 includes just its upper bound, identified by the grey squares. We emphasize that for all $\alpha$ in $[0,1)$ the point $\{ \beta=1,\;\delta=-1/2\}$ is inside the admissible region.}
		\end{minipage} 
	\end{center}

It is also useful to provide the same result dividing the region for the admissible values of {\bf Theorem \ref{theorem:semplified_f}},  in terms  of $\beta$ and $\delta$.
A summary of the ATS short-time behavior, w.r.t. the scaling parameters $\beta$ and $\delta$   in the additive process admissible region is reported in Table \ref{tab:Results1}.
	\begin{center}
	\begin{tabular}{c| c c c}
		\bottomrule
		& $0<\beta<1$ & $\beta=1$& $\beta>1$\\
		\bottomrule
		$\delta>-\frac{1}{2}$ &$\hat{\sigma}_0=0$&$\hat{\xi}_0=0$&   $\hat{\xi}_0=0$ \\
		$\delta=-\frac{1}{2}$ &  $\hat{\xi}_0=-\sqrt{\frac{\pi}{2}}$&    $\boldsymbol{\hat{\xi}_0<0}$&$\hat{\xi}_0=0$ \\
		$\delta<-\frac{1}{2}$ & $\hat{\sigma}_0=\infty$&    $\hat{\sigma}_0=\infty$&  $\hat{\sigma}_0=\infty$     or $\hat{\xi}_0=0$ \\
		\bottomrule
		
	\end{tabular}	\captionof{table}{\small Summary of ATS short-time  implied volatility behavior, w.r.t. the scaling parameters $\beta$ and $\delta$   in the additive process admissible region. The ATM implied volatility $\sigma_0$  is positive (and finite) when we report the value for $\hat{\xi}_0$. 
}		\label{tab:Results1}
\end{center}

It can be proven that, for every time $t$, the random variable  
\begin{equation}
    \label{eq: master}
f_t =-  \left( \eta_t + \frac{1}{2} \right) \; \bar{\sigma}^2 \; S_t\;t  +  \; \bar{\sigma} \sqrt{S_t t} \, g  + \varphi_t t  \; 
\end{equation}
has the characteristic function in (\ref{eq:characteristic}), where $g$ is a standard normal random variable independent from $S_t$.
The proof is the same as in the L\'evy case, but with time dependent parameters,  and it is obtained by direct computation  of $\mathbb{E}[e^{i \, u \, f_t }]$, conditioning w.r.t. $S_t$. 
The $f_t$ in (\ref{eq: master}) is then equivalent in law to the ATS process at maturity $t$; thus, we  can use this expression of $f_t$ to compute European options.

Consider  a European call option discounted payoff $B_t\left(F_0(t) \, e^{f_t}-F_0(t) \, e^{x}\right)^+$ 
\\(and $B_t\left(F_0(t) \, e^{x}-F_0(t) \, e^{f_t}\right)^+$ the discounted payoff for the corresponding put) 
where 
$t$ is option maturity, $K$ option strike price, \( x:=\ln{\frac{K}{F_0(t)}} \) the asset moneyness and
$B_t$ the deterministic discount factor between 0 and $t$. 
We can write the expected European call and put option price at time zero conditioning on 
$S_t$
\begin{align}
{C_t(x)} &= B_t\,F_0(t)\,
\mathbb{E}\left[\left(e^{f_t}-e^{x}\right)^+\right]\nonumber\\
&=B_t\,F_0(t)\, \mathbb{E}\left[e^{\varphi_t t-t\bar{\sigma}^2 \eta_t S_t}N\left(\frac{-x}{\bar{\sigma} \sqrt{S_t t}}+l_t^{S_t}+\frac{\bar{\sigma} \sqrt{S_t t}}{2}\right)-
e^{x}N\left(\frac{-x}{\bar{\sigma} \sqrt{S_t t}}+l_t^{S_t}-\frac{\bar{\sigma} \sqrt{S_t t}}{2}\right)\right] \label{equation::call}\\
{P_t(x)}&=B_t\,F_0(t)\,
\mathbb{E}\left[\left(e^{x}-e^{f_t}\right)^+\right]\nonumber\\
&= B_t\,F_0(t)\,
   \mathbb{E}\left[e^{x}N\left(\frac{x}{\bar{\sigma} \sqrt{S_t t}}-l_t^{S_t}+\frac{\bar{\sigma} \sqrt{S_t t}}{2}\right)-e^{\varphi_t t-t\bar{\sigma}^2 \eta_t S_t}N\left(\frac{x}{\bar{\sigma} \sqrt{S_t t}}-l_t^{S_t}-\frac{\bar{\sigma} \sqrt{S_t t}}{2}\right)\right]
   , \label{equation::put}	    
	\end{align}

	where
	\begin{equation}
     l_t^{z}:=-\bar{\sigma} \eta_t\sqrt{z\, t}+\frac{\varphi_t \sqrt{t}}{\bar{\sigma}\sqrt{z}} \label{equation::lt}
\end{equation} and $N$ is the standard normal cumulative distribution. 

\smallskip

Equations (\ref{equation::call}) and (\ref{equation::put}) are crucial in the deduction of paper's key results: let us stop and comment. 
First, let us notice that we can consider option prices with $F_0(t)=1$ and $B_t=1$ without any loss of generality: 
we are interested in the implied volatility and  these two quantities cancel out from both sides of the implied volatility equation.
Second, let us emphasize that the quantity inside the expected values are, in both equations (\ref{equation::call}) and (\ref{equation::put}), positive. 
Finally, it can be useful to mention that
a similar result has been obtained by \citet{hull1987pricing} for options on an asset with stochastic volatility. 

\smallskip


We can re-write equations (\ref{equation::call}) and (\ref{equation::put}) w.r.t to the \textit{moneyness degree} $y$  (cf. Introduction)
\begin{align*}
   { C_t(y\sqrt{t})}&=\mathbb{E}\left[e^{\varphi_t t-t\bar{\sigma}^2\eta_tS_t}N\left(-\frac{y}{{\bar{\sigma}\sqrt{S_t}}}+l_t^{S_t}+\bar{\sigma} \frac{\sqrt{S_t t}}{2}\right)-e^{y\sqrt{t}}N\left(-\frac{y}{{\bar{\sigma}\sqrt{S_t}}}+l_t^{S_t}-\bar{\sigma} \frac{\sqrt{S_t t}}{2}\right)\right]\\
  { P_t(y\sqrt{t})}&=\mathbb{E}\left[e^{y\sqrt{t}}N\left(\frac{y}{{\bar{\sigma}\sqrt{S_t}}}-l_t^{S_t}+\bar{\sigma} \frac{\sqrt{S_t t}}{2}\right)-e^{\varphi_t t-t\bar{\sigma}^2\eta_tS_t}N\left(\frac{y}{{\bar{\sigma}\sqrt{S_t}}}-l_t^{S_t}-\bar{\sigma} \frac{\sqrt{S_t t}}{2}\right)\right]\;\;, 
\end{align*}
and we can define $c_t(S_t,y)$ and $p_t(S_t,y)$ such that 
\begin{align*}
     \mathbb{E}[c_t\left(S_t,y\right)]&:={ C_t(y\sqrt{t})}\\
    \mathbb{E}[p_t\left(S_t,y\right)]&:= { P_t(y\sqrt{t})}\;\;.
\end{align*}

\citet{Black1976} option prices  w.r.t. $y$ are
\begin{align*}
   {c^{B}_t({\cal I}_t(y),y)}&=N\left(-\frac{y}{{\cal I}_t(y)}+\frac{{\cal I}_t(y)\sqrt{t}}{2}\right)-e^{y\sqrt{t}}N\left(-\frac{y}{{\cal I}_t(y)}-\frac{{\cal I}_t(y)\sqrt{t}}{2}\right)\\
      {p^{B}_t({\cal I}_t(y),y)}&=e^{y\sqrt{t}}N\left(\frac{y}{{\cal I}_t(y)}+\frac{{\cal I}_t(y)\sqrt{t}}{2}\right)-N\left(\frac{y}{{\cal I}_t(y)}-\frac{{\cal I}_t(y)\sqrt{t}}{2}\right)\;\;,\
\end{align*}
where ${\cal I}_t(y)$ is the implied volatility w.r.t. the \textit{moneyness degree}.

The implied volatility equation for the call options is \begin{equation}\label{equation::impvol}
  \mathbb{E}[c_t\left(S_t,y\right)]    =  {c^{B}_t({\cal I}_t(y),y)}\;\;
\end{equation}
and the one 
 for the put option is
\begin{equation}\label{equation::impvol_put}
    \mathbb{E}[p_t\left(S_t,y\right)]
     =   {p^{B}_t({\cal I}_t(y),y)}\;\;.
\end{equation}

In the following lemma, we prove that $\hat{\sigma}_t\sqrt{t}$ goes to zero at short-time following an approach similar to \citet[][Lemma 6.1, p.580]{alos2007short}, who considered a generalization of the Bates model. 
\begin{lemma}$ $\\ 	\label{lemma::short_time_IV}
For the ATS, at short-time, $$\hat{\sigma}_t\sqrt{t}=o(1)\;\,.$$
\end{lemma}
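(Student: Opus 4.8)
The idea is to reduce the statement to the vanishing of the at-the-money price. Take $F_0(t)=B_t=1$ (no loss of generality, see the comment after (\ref{equation::put})) and set $y=0$ in (\ref{equation::impvol}). Since $c^{B}_t(\Sigma,0)=2N(\Sigma\sqrt t/2)-1$ is a continuous strictly increasing bijection of $[0,\infty)$ onto $[0,1)$, and
\[
0\le C_t(0):=\mathbb{E}\big[(e^{f_t}-1)^+\big]<\mathbb{E}\big[e^{f_t}\big]=1
\]
(the last equality is the martingale normalization, which holds by construction --- $\varphi_t t$ in (\ref{equation::power_scaling_param}) is precisely the mean-correcting drift), the implied volatility $\hat\sigma_t=\mathcal{I}_t(0)$ is well defined and satisfies $2N(\hat\sigma_t\sqrt t/2)-1=C_t(0)$. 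Hence it is enough to show $C_t(0)\to0$ as $t\to0^+$, for then $N(\hat\sigma_t\sqrt t/2)\to\tfrac12=N(0)$, which forces $\hat\sigma_t\sqrt t\to 0$.

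To bound $C_t(0)$ I would again use $\mathbb{E}[e^{f_t}-1]=0$, which gives $C_t(0)=\mathbb{E}[(1-e^{f_t})^+]\le\mathbb{E}[f_t^-]\le\mathbb{E}|f_t|$ via the elementary inequality $(1-e^x)^+\le x^-$. Inserting the representation (\ref{eq: master}) and using the triangle inequality, the independence of $g$ and $S_t$, the moments $\mathbb{E}[S_t]=1$ (Definition \ref{definition::S_t}) and $\mathbb{E}|g|=\sqrt{2/\pi}$, and Jensen's inequality $\mathbb{E}[\sqrt{S_t}]\le\sqrt{\mathbb{E}[S_t]}=1$, one obtains
\[
C_t(0)\le\mathbb{E}|f_t|\le \bar{\eta}\,\bar{\sigma}^2\,t^{1+\delta}+\tfrac12\bar{\sigma}^2 t+\bar{\sigma}\sqrt{2t/\pi}+\varphi_t t .
\]
Equivalently, one may bound $c_t(S_t,0)$ in (\ref{equation::call}) directly, exploiting its conditional Black--Scholes structure as in \citet[][Lemma 6.1]{alos2007short}; this produces the same terms.

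The only step requiring a genuine, if short, argument is that $t^{1+\delta}$ and $\varphi_t t$ are $o(1)$. For $\varphi_t t$, set $u:=t\bar{\sigma}^2\eta_t\ge0$ and note $\varphi_t t=-\ln\mathcal{L}_t(u;k_t,\alpha)=-\ln\mathbb{E}[e^{-uS_t}]$; since $S_t>0$ we have $0<\mathbb{E}[e^{-uS_t}]\le1$, while Jensen applied to $x\mapsto e^{-x}$ gives $\mathbb{E}[e^{-uS_t}]\ge e^{-u\mathbb{E}[S_t]}=e^{-u}$, so $0<\varphi_t t\le u=\bar{\sigma}^2\bar{\eta}\,t^{1+\delta}$. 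Thus both terms vanish provided $1+\delta>0$, and this holds throughout the admissible region of \textbf{Theorem \ref{theorem:semplified_f}}: by condition (2) there, $\delta>-\min\!\big(\beta,\tfrac{1-\beta(1-\alpha)}{\alpha}\big)$, and this minimum is at most $1$ (it equals $\beta\le1$ when $\beta\le1$, and for $\beta>1$ one has $\tfrac{1-\beta(1-\alpha)}{\alpha}<1$ because $1-\alpha>0$), while the degenerate case $\beta=\delta=0$ gives $t^{1+\delta}=t$ directly. Collecting the four vanishing contributions yields $C_t(0)\to0$, hence $\hat\sigma_t\sqrt t=o(1)$.
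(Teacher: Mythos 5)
Your proposal is correct, and its skeleton is the same as the paper's: reduce the claim to showing that the ATM option price vanishes as $t\to 0^+$, then invert the Black formula $2N(\hat{\sigma}_t\sqrt{t}/2)-1$. The difference lies in how the vanishing is established. The paper works with the put, notes that $\mathbb{E}[(1-e^{f_t})^+]\to 0$ because $f_t\to 0$ in distribution (its characteristic function tends to one) and the integrand is bounded, and invokes dominated convergence --- a two-line soft argument. You instead pass from the call to the put via the martingale normalization $\mathbb{E}[e^{f_t}]=1$, and then produce the explicit quantitative bound $\mathbb{E}[(1-e^{f_t})^+]\le \mathbb{E}|f_t| \le \bar{\eta}\bar{\sigma}^2 t^{1+\delta}+\tfrac12\bar{\sigma}^2 t+\bar{\sigma}\sqrt{2t/\pi}+\varphi_t t$ from the mixing representation (\ref{eq: master}), together with the Jensen bound $0<\varphi_t t\le \bar{\sigma}^2\bar{\eta}\,t^{1+\delta}$ and a verification that $1+\delta>0$ everywhere in the admissible region of \textbf{Theorem \ref{theorem:semplified_f}}. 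Your route buys an explicit rate, $\hat{\sigma}_t\sqrt{t}=O\bigl(t^{\min(1+\delta,\,1/2)}\bigr)$, and makes fully explicit the parameter-region facts that the paper's one-sentence ``$f_t\to 0$ in distribution'' implicitly relies on (namely $t\eta_t\to 0$ and $\varphi_t t\to 0$); it also adds a well-posedness remark ($0\le C_t(0)<1$, so the implied volatility exists) that the paper omits. The cost is a slightly longer computation where the paper's soft argument suffices. All the individual steps --- the inequality $(1-e^x)^+\le x^-$, the independence of $g$ and $S_t$, $\mathbb{E}[\sqrt{S_t}]\le 1$, and the check that $\min\bigl(\beta,\tfrac{1-\beta(1-\alpha)}{\alpha}\bigr)\le 1$ --- are sound.
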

\begin{proof}
For an ATM put (i.e. when $y=0$), the left-hand side of equation (\ref{equation::impvol_put}) is equal to $\mathbb{E}\left[\left(1-e^{f_t}\right)^+\right]$. In the region of admissible scaling parameters, $f_t$ goes to zero in distribution because its characteristic function in (\ref{eq:characteristic}) goes to one. Hence, by the dominated convergence theorem, $\mathbb{E}\left[\left(1-e^{f_t}\right)^+\right]$ goes to zero at short time. 
For $y=0$, the right-hand side of equation (\ref{equation::impvol_put}) becomes \[
 {p^{B}_t(\hat{\sigma}_t,0)} =N\left(\frac{\hat{\sigma}_t\sqrt{t}}{2}\right)-N\left(-\frac{\hat{\sigma}_t\sqrt{t}}{2}\right)\;\;,
\]
that goes to zero if and only if $\hat{\sigma}_t\sqrt{t}$ goes to zero.
\end{proof}
Thanks to this lemma, ATM  and at short-time,
we can rewrite  the right hand side of (\ref{equation::impvol}) and (\ref{equation::impvol_put}) as \begin{equation}
    \label{equation::asymptotic_impvol}
   {c^{B}_t(\hat{\sigma}_t,0)}=   {p^{B}_t(\hat{\sigma}_t,0)} = \hat{\sigma}_t\sqrt{\frac{t}{2\pi}}+o\left(\hat{\sigma}_t\sqrt{t}\right)\;\;,
\end{equation}
where the asymptotic expansion holds because $N'(0) = \sqrt{\frac{1}{2\pi}}$, with $N'$ the standard normal probability density function.\\

\section{Short-time ATM implied volatility}

In this Section, we study the behavior of $\hat{\sigma}_t$ at short-time for the ATS. The idea of the proofs  is simple. Equation (\ref{equation::asymptotic_impvol}) is the short-time asymptotic expansion of the ATM Black call and put prices. 
We can study the short-time behavior of the ATS model price in (\ref{equation::impvol}) and (\ref{equation::impvol_put}).
\begin{enumerate}
    \item If the model price (left hand side in (\ref{equation::impvol}) and (\ref{equation::impvol_put})) goes to zero faster than $\sqrt{t}$ , then $\hat{\sigma}_0=0$ (Case 1).
        \item If the model price goes to zero slower than $\sqrt{t}$, then $\hat{\sigma}_0=\infty$ (Case 2).
    \item If the model price goes to zero as $\sqrt{t}$, then $\hat{\sigma}_0$ is finite (Cases 3, 4, 5).
\end{enumerate}
The idea of the proofs is the following. In Case 1 we bound from above the model price and we prove that it is $o\left(\sqrt{t}\right)$.  In Case 2 we bound from below the model price and we demonstrate that it goes to zero slower than $\sqrt{t}$. Finally in the remaining Cases we build upper and lower bounds on the model price and prove that both bounds are $O\left(\sqrt{t}\right)$. Furthermore, the proofs are divided in some sub-cases that correspond to particular ranges of the parameters $\beta$ and $\delta$  : we indicate with bold characters the range at the beginning of each sub-case.
\begin{proposition} $ $ \label{proposition:impvol1} \\
	For Case 1: $\begin{cases}
\beta<1\;\; \& \;\;-\min\left(\frac{1}{2},\beta\right)<\delta\leq 0\;\;\;\; or\\  \beta=\delta=0
	\end{cases} $,\\  the implied volatility is s.t. \[\hat{\sigma}_0=0\;\;.\] \end{proposition}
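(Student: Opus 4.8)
The plan is to reduce the claim to a size estimate on the at-the-money (ATM) model price and then read off $\hat\sigma_0=0$ from the expansion (\ref{equation::asymptotic_impvol}). Evaluating the implied-volatility equation (\ref{equation::impvol_put}) at $y=0$ and using (\ref{equation::asymptotic_impvol}) — valid thanks to Lemma \ref{lemma::short_time_IV} — the ATM put price obeys $P_t(0)=p^{B}_t(\hat\sigma_t,0)=\hat\sigma_t\sqrt{t/(2\pi)}\,(1+o(1))$; consequently $\hat\sigma_0=0$ will follow as soon as I show the ATM model price is $o(\sqrt t)$. Taking $F_0(t)=B_t=1$ (without loss of generality, cf. the discussion after (\ref{equation::put})), the ATM put price is $P_t(0)=\mathbb{E}\big[(1-e^{f_t})^+\big]$ (and, by the martingale property $\mathbb{E}[e^{f_t}]=1$, it equals the ATM call price), so the whole task is to prove
\[
\mathbb{E}\big[(1-e^{f_t})^+\big]=o(\sqrt t)\qquad\text{in Case 1.}
\]

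The next step is a crude upper bound obtained from the representation (\ref{eq: master}). First bound the payoff linearly, $(1-e^{x})^+\le(-x)^+$, so that $(1-e^{f_t})^+\le f_t^-$. Then, since $\varphi_t t=-\ln{\cal L}_t(t\bar\sigma^2\eta_t;k_t,\alpha)\ge 0$ (because ${\cal L}_t\le1$) and $(a-b)^+\le a^++b^-$, the representation (\ref{eq: master}) gives $f_t^-\le(\eta_t+\tfrac12)\bar\sigma^2 S_t\,t+\bar\sigma\sqrt{S_t\,t}\,(-g)^+$. Taking expectations, using the independence of $S_t$ and $g$, $\mathbb{E}[S_t]=1$, $\mathbb{E}[(-g)^+]=1/\sqrt{2\pi}$ and $\eta_t=\bar\eta t^{\delta}$,
\[
\mathbb{E}\big[(1-e^{f_t})^+\big]\ \le\ \bar\eta\bar\sigma^2\,t^{1+\delta}+\tfrac12\bar\sigma^2\,t+\frac{\bar\sigma}{\sqrt{2\pi}}\,\sqrt t\;\mathbb{E}\big[\sqrt{S_t}\,\big].
\]
The middle term is $O(t)=o(\sqrt t)$. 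The first term is $o(\sqrt t)$ precisely because in Case 1 one has $\delta>-\min(1/2,\beta)\ge-1/2$ (the degenerate point $\beta=\delta=0$ also satisfying $\delta>-1/2$), so $1+\delta>1/2$; this is where the admissible range of $\delta$ is used, and it is exactly the estimate that fails in Case 2.

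The heart of the argument — and the step I expect to be the main obstacle — is to show $\mathbb{E}[\sqrt{S_t}]\to 0$ as $t\to 0$, which is also where the hypothesis $\beta<1$ enters: then $\mathrm{Var}(S_t)=k_t/t=\bar k\,t^{\beta-1}\to\infty$ while $\mathbb{E}[S_t]=1$, forcing the law of $S_t$ to concentrate near $0$. To make this precise I would use the explicit Laplace transform of Definition \ref{definition::S_t}: bounding $-\ln{\cal L}_t$ suitably (by subadditivity of $s\mapsto s^{\alpha}$ for $0<\alpha<1$, and directly for $\alpha=0$) one gets ${\cal L}_t(\lambda;k_t,\alpha)\to1$ for every fixed $\lambda>0$, hence $S_t\Rightarrow0$; and since $\{\sqrt{S_t}\}_t$ is bounded in $L^2$ ($\mathbb{E}[S_t]=1$), it is uniformly integrable, so $\mathbb{E}[\sqrt{S_t}]\to0$. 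I would isolate this as an auxiliary lemma, ideally in the quantitative form $\mathbb{E}[\sqrt{S_t}]=O\big((t/k_t)^{\rho}\big)$ for a suitable $\rho=\rho(\alpha)>0$, obtained from the identity $\sqrt x=\tfrac{1}{2\sqrt\pi}\int_0^\infty\lambda^{-3/2}(1-e^{-\lambda x})\,d\lambda$ together with the same bounds on ${\cal L}_t$, splitting the cases $0<\alpha<1$ and $\alpha=0$. Granting it, the displayed estimate gives $\mathbb{E}[(1-e^{f_t})^+]=o(\sqrt t)$ uniformly over Case 1, whence $\hat\sigma_t\sqrt{t/(2\pi)}(1+o(1))=o(\sqrt t)$, i.e. $\hat\sigma_t\to0$, i.e. $\hat\sigma_0=0$.
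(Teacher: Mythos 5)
Your proof is correct, and it reaches the same central estimate as the paper by a somewhat different and more elementary route. The paper works with the conditional Black representation (\ref{equation::call}): it adds and subtracts $N\left(l_t^{S_t}+\bar{\sigma}\tfrac{\sqrt{S_t t}}{2}\right)$ and uses normal-CDF bounds to get $c_t(S_t,0)\le \sqrt{t/(2\pi)}\,\bar{\sigma}\sqrt{S_t}+e^{\varphi_t t}-1$, then concludes via $\mathbb{E}[\sqrt{S_t}]\to 0$ (\textbf{Lemma \ref{lemma::limit_sqrt}}) and $\varphi_t t=O(t\eta_t)=o(\sqrt{t})$. You instead bound the raw payoff by $(1-e^{x})^+\le(-x)^+$ and exploit the mixture representation (\ref{eq: master}) directly, landing on the bound $\bar{\eta}\bar{\sigma}^2 t^{1+\delta}+\tfrac12\bar{\sigma}^2 t+\bar{\sigma}\sqrt{t}\,\mathbb{E}[\sqrt{S_t}]/\sqrt{2\pi}$ --- term for term the same quantities, with $\delta>-1/2$ and $\beta<1$ entering in exactly the same two places. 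Your auxiliary lemma ($S_t\Rightarrow 0$ from the Laplace transform when $\beta<1$, plus uniform integrability of $\sqrt{S_t}$ from $\mathbb{E}[S_t]=1$) is a clean substitute for the paper's fractional-moment computation (\textbf{Lemma \ref{theorem::Fractional_Moments}} with $s=1/2$ followed by dominated convergence); your proposed quantitative version via $\sqrt{x}=\tfrac{1}{2\sqrt{\pi}}\int_0^\infty\lambda^{-3/2}(1-e^{-\lambda x})\,d\lambda$ is in fact identical to that lemma. What your route buys is transparency: no manipulation of $l_t^{S_t}$ or normal CDFs is needed for the upper bound. What the paper's route buys is reuse: the same conditional-Black decomposition is the workhorse for the lower bounds in Propositions \ref{proposition:impvol2}--\ref{proposition:impvol45}, where a crude payoff linearization would not suffice.
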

\begin{proof}$ $
\begin{center}
    \textbf{ $\boldsymbol{\beta<1}$ \& $\boldsymbol{-\min\left(\frac{1}{2},\beta\right)<\delta\leq 0}$ or}  $\boldsymbol{\beta=\delta=0 }$\
\end{center}
We bound $c_t\left(S_t,0\right)$ from above as follows.
\begin{align}
    c_t\left(S_t,0\right)=&N\left(l_t^{S_t}+\bar{\sigma} \frac{\sqrt{S_t t}}{2}\right)-N\left(l_t^{S_t}-\bar{\sigma}\frac{\sqrt{S_t t}}{2}\right)-\left(e^{\varphi_t t -t\bar{\sigma}^2\eta_tS_t}-1\right)N\left(l_t^{S_t}+\bar{\sigma} \frac{\sqrt{S_t t}}{2}\right)\nonumber\\
    \leq& \sqrt{\frac{t}{{2\pi}}}\;\bar{\sigma}\; \sqrt{S_t}+e^{\varphi_t t}-1\label{equation::bound_ct}\;\;.
\end{align}
In the equality we have just added and subtracted the quantity $N\left(l_t^{S_t}+\bar{\sigma} \frac{\sqrt{S_t t}}{2}\right)$.
The inequality holds because, by definition of standard normal cumulative distribution function,
\begin{equation}
N\left(l_t^{S_t}+\bar{\sigma} \frac{\sqrt{S_t t}}{2}\right)-N\left(l_t^{S_t}-\bar{\sigma}\frac{\sqrt{S_t t}}{2}\right) = \frac{1}{\sqrt{2\pi}}\int_{l_t^{S_t}-\bar{\sigma} \frac{\sqrt{S_t t}}{2}}^{l_t^{S_t}+\bar{\sigma} \frac{\sqrt{S_t t}}{2}} dz\; e^{-z^2/2}\leq \sqrt{\frac{t}{{2\pi}}}\;\bar{\sigma} \sqrt{S_t}\;\;,\label{equation::bound_N}
\end{equation} and because we bound from above the product $\left(e^{\varphi_t t -t\bar{\sigma}^2\eta_tS_t}-1\right)N\left(l_t^{S_t}+\bar{\sigma} \frac{\sqrt{S_t t}}{2}\right)$ with the (positive) maximums of both factors. \\
We bound the expected value of $c_t\left(S_t,0\right)$ as \[\mathbb{E}[c_t\left(S_t,0\right)]\leq \mathbb{E}\left[\sqrt{\frac{t}{{2\pi}}}\bar{\sigma} \sqrt{S_t}\right]+e^{\varphi_t t}-1=\sqrt{\frac{t}{{2\pi}}}\;\bar{\sigma}\;\mathbb{E}[\sqrt{S_t}]+o\left(\sqrt{t}\right) = o\left(\sqrt{t}\right) \;\;.\]

The first equality holds because $e^{\varphi_t t}-1 = O\left(\varphi_t t\right)=o\left(\sqrt{t}\right)$ and the last equality  because $\mathbb{E}[\sqrt{S_t}]$ goes to zero at short-time (see  \textbf{Lemma \ref{lemma::limit_sqrt}}). \\
 
Summarizing, the upper bound to  the ATS ATM  price in  (\ref{equation::impvol}) is $o\left(\sqrt{t}\right)$.  From (\ref{equation::asymptotic_impvol}) we have that the Black price is $O\left(\hat{\sigma}_t\sqrt{t}\right)$. 
Thus, \[
\hat{\sigma}_0 =0 \qedhere
\]\end{proof}

\begin{proposition}  \label{proposition:impvol2}$ $ \\
		For Case 2:  $-\min\left(\beta,\frac{1-\beta(1-\alpha)}{\alpha}\right)<\delta<-\frac{1}{2} \, \max\left(\beta,1\right)$, \\ $$\hat{\sigma}_0=\infty\;\;.$$ 
\end{proposition}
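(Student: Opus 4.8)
The plan is to reduce $\hat\sigma_0=\infty$ to a lower bound on the at‑the‑money model price. The ATM put price equals $\mathbb{E}[(1-e^{f_t})^+]$ (as in the proof of Lemma~\ref{lemma::short_time_IV}), and since the characteristic function (\ref{eq:characteristic}) tends to $1$ throughout the admissible region, Lemma~\ref{lemma::short_time_IV} holds here as well (its proof only uses that convergence), so by (\ref{equation::asymptotic_impvol}) we have $\mathbb{E}[(1-e^{f_t})^+]=\hat\sigma_t\sqrt{t/(2\pi)}\,(1+o(1))$. It therefore suffices to prove that in Case~2 the price $\mathbb{E}[(1-e^{f_t})^+]$ is of strictly larger order than $\sqrt t$ (i.e.\ $\mathbb{E}[(1-e^{f_t})^+]/\sqrt t\to\infty$), since then $\hat\sigma_t\to\infty$.

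For the lower bound I would use (\ref{eq: master}), writing $f_t=\mu_t(S_t)+\bar\sigma\sqrt{S_t t}\,g$ with $g$ the standard normal of (\ref{eq: master}), independent of $S_t$, and $\mu_t(z):=-(\eta_t+\tfrac12)\bar\sigma^2 zt+\varphi_t t$. The map $\mu_t$ is decreasing and vanishes at $z^\ast:=\varphi_t/\big((\eta_t+\tfrac12)\bar\sigma^2\big)$; moreover Jensen's inequality applied to the Laplace transform gives $\varphi_t t=-\ln\mathbb{E}[e^{-t\bar\sigma^2\eta_t S_t}]\le t\bar\sigma^2\eta_t$, hence $z^\ast\le\bar\eta t^\delta/(\bar\eta t^\delta+\tfrac12)<1$. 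On $\{S_t\ge 1\}\cap\{g\le 0\}$ one has $f_t\le\mu_t(S_t)\le 0$, so $(1-e^{f_t})^+\ge 1-e^{\mu_t(S_t)}\ge 0$; conditioning on the independent $g$, then using $\mu_t(S_t)=-\gamma_t(S_t-z^\ast)\le -\gamma_t(S_t-1)$ on $\{S_t\ge 1\}$ (with $\gamma_t:=(\eta_t+\tfrac12)\bar\sigma^2 t\asymp t^{1+\delta}$) and the elementary bound $1-e^{-a}\ge a-\tfrac12a^2$ for $a\ge 0$, I obtain
\[
\mathbb{E}\big[(1-e^{f_t})^+\big]\ \ge\ \frac{\gamma_t}{2}\,\mathbb{E}[(S_t-1)^+]-\frac{\gamma_t^2}{4}\,\mathbb{E}\big[((S_t-1)^+)^2\big]\ \ge\ \frac{\gamma_t}{4}\,\mathbb{E}|S_t-1|-\frac{\gamma_t^2}{4}\,\mathrm{Var}(S_t),
\]
using $\mathbb{E}[S_t]=1$ (so $\mathbb{E}[(S_t-1)^+]=\tfrac12\mathbb{E}|S_t-1|$) and $((S_t-1)^+)^2\le(S_t-1)^2$.

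It then remains to feed in the moments of $S_t$: $\mathrm{Var}(S_t)=k_t/t=\bar k t^{\beta-1}$, while $\mathbb{E}|S_t-1|\gtrsim\min(1,t^{(\beta-1)/2})$ for small $t$, which follows from the interpolation inequality $\mathbb{E}|X|\ge\mathbb{E}[X^2]^{3/2}\,\mathbb{E}[X^4]^{-1/2}$ with $X=S_t-1$, the second and fourth moments being read off from $\ln\mathcal{L}_t$ (all moments of $S_t$ are finite) — an estimate of the type gathered in Appendix~B, in the spirit of Lemma~\ref{lemma::limit_sqrt}. Substituting, the lower bound is, up to positive constants, $t^{1+\delta}-t^{1+\beta+2\delta}$ when $\beta\le 1$ and $t^{(1+\beta)/2+\delta}-t^{1+\beta+2\delta}$ when $\beta>1$; in Case~2 the subtracted term is always of strictly higher order, because $\beta+\delta>0$ (as $\delta>-\min(\beta,\tfrac{1-\beta(1-\alpha)}{\alpha})\ge-\beta$) and, for $\beta>1$, also $(1+\beta)/2+\delta>0$ (as $\delta>-\tfrac{1-\beta(1-\alpha)}{\alpha}\ge-\tfrac{1+\beta}{2}$). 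Hence $\mathbb{E}[(1-e^{f_t})^+]\gtrsim t^{1+\delta}$ for $\beta\le 1$ and $\gtrsim t^{(1+\beta)/2+\delta}$ for $\beta>1$; dividing by $\sqrt t$ gives $t^{\delta+1/2}\to\infty$ (Case~2 with $\beta\le 1$ forces $\delta<-\tfrac12$), respectively $t^{\delta+\beta/2}\to\infty$ (Case~2 with $\beta>1$ forces $\delta<-\tfrac{\beta}{2}$), and $\hat\sigma_0=\infty$ follows. The delicate part is not the probability but matching these order comparisons exactly against the boundary of Case~2; one must also spell out the $\alpha=0$ specialization (only $\beta\le 1$ occurs, using the logarithmic form of $\mathcal{L}_t$).
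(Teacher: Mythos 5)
Your argument is correct, and it takes a genuinely different route from the paper's. The paper splits Case~2 into the sub-cases $\beta\le 1$ and $\beta>1$ and treats them with different tools: for $\beta\le 1$ it shows $c_t(z,0)$ is decreasing on $\left(0,\varphi_t/(\bar\sigma^2\eta_t)\right)$ via the derivative (\ref{equation::derivative}) and lower-bounds the call price by $c_t(S^*,0)\,\mathbb{P}(S_t\le S^*)=O(t\eta_t)$; for $\beta>1$ it picks $q\in\left(\tfrac{\beta-1}{2},-\delta-\tfrac12\right)$ and lower-bounds the put price on $\{S_t\ge 1+t^q\}$ by $M_t\,t^{1+q}\bar\sigma^2\eta_t$, which requires the Berry--Esseen-type distributional estimate of \textbf{Lemma \ref{lemma::limit_distro}}. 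You instead work directly on $\mathbb{E}[(1-e^{f_t})^+]$ through the mixing representation (\ref{eq: master}): conditioning on $\{g\le 0\}$ (probability $1/2$, independent of $S_t$), using $z^\ast=\varphi_t/((\eta_t+\tfrac12)\bar\sigma^2)<1$ (a clean Jensen argument, consistent with \textbf{Lemma \ref{Lemma::short_time_gamma}} point 2), and then the elementary bound $1-e^{-a}\ge a-\tfrac12 a^2$ together with the $L^1$--$L^2$--$L^4$ interpolation $\mathbb{E}|X|\ge\mathbb{E}[X^2]^{3/2}\mathbb{E}[X^4]^{-1/2}$. This yields a single unified lower bound of order $\gamma_t\min\left(1,t^{(\beta-1)/2}\right)$ covering both sub-cases, avoids \textbf{Lemma \ref{lemma::limit_distro}} and the monotonicity analysis entirely, and your order comparisons against the Case~2 boundary are exact: $\beta+\delta>0$ always, and $\tfrac{1-\beta(1-\alpha)}{\alpha}\le\tfrac{1+\beta}{2}$ for $\beta\ge1$ since $(1-\beta)(2-\alpha)\le0$, so the quadratic correction is always subleading. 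The only step you leave implicit is the fourth central moment: from the cumulants of $\ln{\cal L}_t$ one gets $\mathbb{E}[(S_t-1)^4]=\tfrac{(2-\alpha)(3-\alpha)}{1-\alpha}(k_t/t)^3+3(k_t/t)^2$, which is exactly what makes $\mathbb{E}|S_t-1|\gtrsim\min\left(1,\sqrt{k_t/t}\right)$ work in both regimes $k_t/t\to\infty$ and $k_t/t\to 0$; you should write that computation out, but it is routine. Both approaches deliver lower bounds of the same order ($t^{1+\delta}$ for $\beta\le1$; your $t^{(1+\beta)/2+\delta}$ versus the paper's $t^{1+q+\delta}$ for $\beta>1$), and either suffices to conclude $\hat\sigma_0=\infty$ via (\ref{equation::asymptotic_impvol}).
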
 
\begin{proof} $ $\\
We divide the proof in two sub-cases.
\begin{center}
    \textbf{ $\boldsymbol{\beta\leq 1}$ \& $\boldsymbol{-\beta<\delta<-\frac{1}{2}}$}
\end{center}
Consider the left hand side of equation (\ref{equation::impvol}). We compute the derivative of $c_t(z,y)$ w.r.t. $z$ in $y=0$.
\begin{align}
    \frac{\partial c_t(z,0)}{\partial z}=&
    -t\bar{\sigma}^2\eta_te^{\varphi_t t -t\bar{\sigma}^2\eta_tz}N\left(l_t^{z}+\bar{\sigma} \frac{\sqrt{z t}}{2}\right)\nonumber\\&-\left(\frac{\varphi_t \sqrt{t}}{2\bar{\sigma} z^{3/2}}+\frac{\sqrt{t}\bar{\sigma}\eta_t}{2\sqrt{z}}\right)\left(e^{\varphi_t t-t\bar{\sigma}^2\eta_tz}N'\left(l_t^{z}+\bar{\sigma} \frac{\sqrt{z t}}{2}\right)-N'\left(l_t^{z}-\bar{\sigma} \frac{\sqrt{z t}}{2}\right) \right)\nonumber\\
    &+\frac{\sqrt{t}\bar{\sigma}}{4\sqrt{z}}\left(e^{\varphi_t t-t\bar{\sigma}^2\eta_tz}N'\left(l_t^{z}+\bar{\sigma} \frac{\sqrt{z t}}{2}\right)+N'\left(l_t^{z}-\bar{\sigma} \frac{\sqrt{z t}}{2}\right)\right)\label{equation::derivative}
    \;\;.
\end{align}
At short-time, for a given $z\in\left(0,\frac{\varphi_t}{\bar{\sigma}^2\eta_t }\right)$, $l_t^z =\frac{\sqrt{t}\bar{ \sigma}\eta_t}{\sqrt{z}}\left(-z+\frac{\varphi_t}{\bar{\sigma}^2\eta_t }\right)>0$. We observe that \\$e^{\varphi_t t -t\bar{\sigma}^2\eta_tz}=1+o(1)$ and $\lim_{t\to 0} l_t^z=\infty$ due to \textbf{Lemma \ref{Lemma::short_time_gamma}} point 1; then,  $N\left(l_t^{z}+\bar{\sigma} \frac{\sqrt{z t}}{2}\right)=1+o(1)$. Thus, \[ \frac{\partial c_t(z,0)}{\partial z} =-t\bar{\sigma}^2\eta_t+o(t\eta_t)\;\;,\] 
because the first term goes to zero as $t\eta_t$, while the second and the third terms go to zero as $N'(\sqrt{t}\,\eta_t)$ (i.e. as a negative exponential). 
Thus, for sufficiently small $t$,  $c_t(z,0)$ is decreasing w.r.t. $z$ in $(0,\frac{\varphi_t}{\bar{\sigma}^2\eta_t })$. We emphasize that the right extreme of the interval is increasing to one for sufficiently small $t$, see \textbf{Lemma \ref{Lemma::short_time_gamma}} points 2 and 3.\\
Fix $\tau>0$ and $S^*\in(0,\frac{\varphi_\tau}{\bar{\sigma}^2\eta_\tau })$; for any $t<\tau$
\begin{align*}
& \mathbb{E}\left[c_t(S_t,0)\right]\\
& \geq c_t(S^*,0)\mathbb{P}\left(S_t\leq S^*\right) \\
&    \geq \left\{ N\left(l_t^{S^*}+\bar{\sigma} \frac{\sqrt{S^* t}}{2}\right)-N\left(l_t^{S^*}-\bar{\sigma} \frac{\sqrt{S^* t}}{2}\right)+\left({\varphi_t t-t\bar{\sigma}^2\eta_tS^*}\right)N\left(l_t^{S^*}+\bar{\sigma} \frac{\sqrt{S^* t}}{2}\right)\right\}\mathbb{P}\left(S_t\leq S^*\right) \\
&   \geq  \left({\varphi_t t-t\bar{\sigma}^2\eta_tS^*}\right)N\left(l_t^{S^*}+\bar{\sigma} \frac{\sqrt{S^* t}}{2}\right)\mathbb{P}\left(S_t\leq S^*\right)= O(t\eta_t ) \;\;.
\end{align*}
  The first inequality  holds because $c_t(z,0)$ is positive for any $z\geq 0$ and because we bound from below the expected value with its minimum in the interval $(0,S^*)$ multiplied by the probability of the interval, $\mathbb{P}\left(S_t\leq S^*\right)$. The second inequality is due to the fact that $e^{x}\geq x+1$. Finally, the last inequality holds because, by definition of the standard normal cumulative distribution function, \begin{equation}
  N\left(l_t^{z}+\bar{\sigma} \frac{\sqrt{z t}}{2}\right)-N\left(l_t^{z}-\bar{\sigma} \frac{\sqrt{z t}}{2}\right)\geq 0\;\;,\;\;z\in \mathbb{R}^+\;\;.  \label{equation::bound_norm_cdf}
  \end{equation}
 Recall that $N\left(l_t^{S^*}+\bar{\sigma} \frac{\sqrt{S^* t}}{2}\right)=1+o(1)$; notice that $\mathbb{P}\left(S_t\leq S^*\right)$ is constant for $\beta=1$ and goes to one, by \textbf{Lemma \ref{lemma:conv_distr}} point 1, for $\beta<1$. This proves the last equality.\\
Notice that $t\,\eta_t$ goes to zero slower than $\sqrt{t}$ ($\delta<-0.5$), then the ATM call price goes to zero slower than  $\sqrt{t}$.
\begin{center}
      \textbf{$\boldsymbol{\beta> 1}$ \& $\boldsymbol{\textstyle-\frac{1-\beta(1-\alpha)}{\alpha}<\delta<-\frac{\beta}{2}}$ }
\end{center}
It exits $q$ such that $(\beta-1)/2< q<-\delta-1/2$.
We bound the ATM put price (\ref{equation::put}) from below for a sufficiently small $t$
\begin{align}
&\mathbb{E}[p_t\left(S_t,0\right)]\nonumber\\
&   \geq \mathbb{E}\left[\mathbb{1}_{S_t\geq 1+t^q}\left(N\left(-l_t^{S_t}+\bar{\sigma} \frac{\sqrt{S_t t}}{2}\right)-N\left(-l_t^{S_t}-\bar{\sigma} \frac{\sqrt{S_t t}}{2}\right)+N\left(-l_t^{S_t}-\bar{\sigma}\frac{\sqrt{S_t t}}{2}\right)\left(1-e^{\varphi_t t -t\bar{\sigma}^2\eta_tS_t}\right)\right)\right]\nonumber\\
&  \geq   \mathbb{E}\left[\mathbb{1}_{S_t\geq 1+t^q}N\left(-l_t^{S_t}-\bar{\sigma}\frac{\sqrt{S_t t}}{2}\right)\left(1-e^{\varphi_t t -t\bar{\sigma}^2\eta_tS_t}\right)\right]\nonumber\\
&  \geq \mathbb{P}(S_t\geq 1+t^q)\frac{1}{3} \left(1-e^{\varphi_t t-t\bar{\sigma}^2\eta_t(1+t^q)}\right) =: M_t\left( t^{1+q}\bar{\sigma}^2\eta_t+t  \bar{\sigma}^4 \eta^2_t k_t/2\right)\label{equation::M_t}\\ 
& \geq M_t\, t^{1+q}\bar{\sigma}^2\eta_t\;\;.\nonumber
\end{align}
 The first inequality holds because $p_t\left(S_t,0\right)$ is non negative and because we have added and subtracted the term $N\left(-l_t^{S_t}-\bar{\sigma} \frac{\sqrt{S_t t}}{2}\right)$. The second because the difference between the standard normal cumulative distribution functions is non negative, analogously to (\ref{equation::bound_norm_cdf}). The third because, for $S_t\in [1,\infty)$, $1-e^{\varphi_t t-t\bar{\sigma}^2\eta_tS_t}$  is positive and non decreasing in $S_t$; moreover, for a sufficiently small $t$, $N\left(-l_t^{S_t}-\bar{\sigma}\frac{\sqrt{S_t t}}{2}\right)> 1/3$ because \[\lim_{t\to 0 } N\left(-l_t^{z}-\bar{\sigma}\frac{\sqrt{z t}}{2}\right)\geq 1/2\;\;,\;\; z\in[1,\infty)\;\;. \] 
 The quantity $M_t$ is defined in (\ref{equation::M_t}). At short-time $M_t = 1/6+ o(1)$ because i) by \textbf{Lemma \ref{lemma::limit_distro}},  $\mathbb{P}(S_t\geq 1+t^q)$ goes to $1/2$ as $t$ goes to zero, and ii) by \textbf{Lemma \ref{Lemma::short_time_gamma}} point 1, \[ 1-e^{\varphi_t t-t\bar{\sigma}^2\eta_t(1+t^q)} =\left(t^{1+q}\bar{\sigma}^2\eta_t+t  \bar{\sigma}^4 \eta^2_t k_t/2\right)\left(1+o(1)\right) \;\;.\]
 Notice that $t^{1+q} \eta_t $ goes to zero slower than $\sqrt{t}$,  then ATM put price goes to zero slower than  $\sqrt{t}$.
\begin{center}
      \textbf{Case 2: $\boldsymbol{-\min\left(\beta,\textstyle\frac{1-\beta(1-\alpha)}{\alpha}\right)<\delta<-\max\left(\frac{\beta}{2},\frac{1}{2}\right)}$}
\end{center}
 Summing up, for both sub-cases, $\beta\leq 1$ \&  $-\beta<\delta<-1/2$ and $\beta> 1$  \& $-\frac{1-\beta(1-\alpha)}{\alpha}<\delta<-\beta/2$,  the lower bounds on the ATM option prices in (\ref{equation::impvol}) and (\ref{equation::impvol_put}) go to zero slower than $\sqrt{t}$.\\  Moreover, from (\ref{equation::asymptotic_impvol}) we have that the Black price is $O\left(\hat{\sigma}_t\sqrt{t}\right)$. Then, \[\hat{\sigma}_0 =\infty \qedhere\]

\end{proof}
\begin{proposition}  \label{proposition:impvol3}$ $ \\
 For Case 3: $\beta\geq 1$  $\&$ $\delta\geq-\beta/2$,  with the exception of the point $\left\{\beta=1,\, \delta=-1/2\right\},$\\  $$\hat{\sigma}_0\;\;\mbox{is finite}\;\;. $$
\end{proposition}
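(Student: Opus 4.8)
The plan is to follow the blueprint announced after Table \ref{tab:Results1}: since in Case 3 we claim $\hat\sigma_0$ finite, I must sandwich the ATS ATM price between two quantities, both of order $O(\sqrt t)$, and then invoke the asymptotic expansion (\ref{equation::asymptotic_impvol}) to translate this into $0<\hat\sigma_0<\infty$. The upper bound is essentially already available: estimate (\ref{equation::bound_ct}) gives $c_t(S_t,0)\le \sqrt{t/(2\pi)}\,\bar\sigma\sqrt{S_t}+e^{\varphi_t t}-1$, so $\E[c_t(S_t,0)]\le \sqrt{t/(2\pi)}\,\bar\sigma\,\E[\sqrt{S_t}]+O(\varphi_t t)$. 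The difference with Case 1 is that here $\E[\sqrt{S_t}]$ no longer vanishes — for $\beta\ge 1$ the variance $k_t/t=\bar k t^{\beta-1}$ stays bounded, $S_t$ converges in distribution to a constant (or to a non-degenerate limit when $\beta=1$, by Lemma \ref{lemma:conv_distr} and its companions), so $\E[\sqrt{S_t}]=O(1)$ and I need only check, via the short-time lemmas in Appendix B, that $\varphi_t t=o(\sqrt t)$ over the range $\beta\ge1$, $\delta\ge-\beta/2$ (recall $\varphi_t t=-\ln{\cal L}_t(t\bar\sigma^2\eta_t;k_t,\alpha)$, which behaves like $t\bar\sigma^2\eta_t\,\E[S_t]+O(t\eta_t^2k_t)=O(t^{1+\delta})+O(t^{1+2\delta+\beta})$, both $o(\sqrt t)$ precisely when $\delta\ge-\beta/2$ and $\beta\ge1$, with the excluded point $\{\beta=1,\delta=-1/2\}$ being exactly the borderline). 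Hence the upper bound is $O(\sqrt t)$, giving $\hat\sigma_0<\infty$.

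For the lower bound I would produce a constant $c>0$ with $\E[c_t(S_t,0)]\ge c\sqrt t(1+o(1))$, which forces $\hat\sigma_0>0$. The natural route is to restrict the expectation to a set where $S_t$ is bounded away from $0$ and $\infty$ — say $S_t\in[a,b]$ for fixed $0<a<1<b$ chosen so that $\liminf_t\mathbb P(S_t\in[a,b])>0$, which holds because $S_t$ converges in law to a limit supported near $1$ (Lemma \ref{lemma:conv_distr}/\ref{lemma::limit_distro}). On that event, using $l_t^{S_t}\to 0$ appropriately (or at least staying bounded) and $e^{\varphi_t t-t\bar\sigma^2\eta_t S_t}=1+o(1)$, the term $N(l_t^{S_t}+\bar\sigma\sqrt{S_t t}/2)-N(l_t^{S_t}-\bar\sigma\sqrt{S_t t}/2)$ is, by the same integral representation as in (\ref{equation::bound_N}) read as a lower bound, at least $\bar\sigma\sqrt{a t}\,N'(\kappa)$ for a suitable bounded $\kappa$, i.e. of exact order $\sqrt t$. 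Multiplying by $\mathbb P(S_t\in[a,b])$ and subtracting the negligible $O(\varphi_t t)+O(t\eta_t)$ correction gives $\E[c_t(S_t,0)]\ge c\sqrt t(1+o(1))$.

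Combining the two bounds with (\ref{equation::asymptotic_impvol}), $\hat\sigma_t\sqrt{t/(2\pi)}(1+o(1))$ is squeezed between two multiples of $\sqrt t$, so $\hat\sigma_0\in(0,\infty)$. The main obstacle I anticipate is the lower bound: I must control $l_t^{S_t}$ uniformly on $[a,b]$ (it involves $\varphi_t\sqrt t/(\bar\sigma\sqrt{S_t})-\bar\sigma\eta_t\sqrt{S_t t}$, and both pieces must be shown to stay bounded, which again reduces to the $\beta\ge1$, $\delta\ge-\beta/2$ bookkeeping and the short-time behaviour of $\varphi_t$ from Lemma \ref{Lemma::short_time_gamma}), and I must make sure the convergence-in-distribution statement for $S_t$ actually gives a non-vanishing mass on a compact subinterval of $(0,\infty)$ uniformly along $t\to0$ — degeneracy of the limit (when $\beta>1$, $S_t\to1$) is fine, but the case $\beta=1$, where the limit is genuinely random, needs the precise statement of Lemma \ref{lemma:conv_distr}. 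Everything else is routine Taylor expansion of $N$ and $e^x$ near the origin.
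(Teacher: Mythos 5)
Your argument is sound for the sub-case $\beta\ge 1$, $-\tfrac12<\delta\le 0$, where it essentially reproduces the paper's crude upper bound (\ref{equation::bound_ct}) and a lower bound on a fixed compact interval. But it breaks down on the rest of Case 3, namely $\beta>1$ with $-\beta/2\le\delta\le-\tfrac12$, and the failure is in the step you treat as bookkeeping. You assert that $\varphi_t t=O(t^{1+\delta})+O(t^{1+2\delta+\beta})$ is $o(\sqrt t)$ whenever $\delta\ge-\beta/2$ and $\beta\ge1$; in fact $t^{1+\delta}=o(\sqrt t)$ if and only if $\delta>-\tfrac12$, and the region $-\beta/2\le\delta<-\tfrac12$ is nonempty for every $\beta>1$. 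There $e^{\varphi_t t}-1\sim\bar\sigma^2\bar\eta\,t^{1+\delta}$ decays \emph{slower} than $\sqrt t$, so the crude upper bound $\mathbb{E}[c_t(S_t,0)]\le\sqrt{t/(2\pi)}\,\bar\sigma\,\mathbb{E}[\sqrt{S_t}]+e^{\varphi_t t}-1$ only yields $O(t^{1+\delta})$, which is useless for showing $\hat\sigma_0<\infty$. The same arithmetic kills your lower bound: on a fixed interval $[a,b]$ with $a<1$ the correction $\bigl(e^{\varphi_t t-t\bar\sigma^2\eta_t S_t}-1\bigr)N(\cdot)$ is of size $O(t\eta_t)=O(t^{1+\delta})$ and is \emph{negative} on $\{S_t<\varphi_t/(\bar\sigma^2\eta_t)\}$, so for $\delta<-\tfrac12$ it dominates the $O(\sqrt t)$ main term with the wrong sign, and even at $\delta=-\tfrac12$ it is of the same order as the main term and cannot simply be "subtracted as negligible."

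The missing idea is that one must exploit the cancellation $\varphi_t t-t\bar\sigma^2\eta_t S_t\approx t\bar\sigma^2\eta_t(1-S_t)$ together with the concentration of $S_t$ around $1$ at scale $\sqrt{k_t/t}=\sqrt{\bar k}\,t^{(\beta-1)/2}$ (quantified by the Berry--Esseen-type bound of \textbf{Lemma \ref{lemma::limit_distro}}). This upgrades the estimate of the exponential-correction term from $O(t^{1+\delta})$ to $O\bigl(t^{\delta+(\beta+1)/2}\bigr)$, which is $O(\sqrt t)$ precisely when $\delta\ge-\beta/2$ --- that is where the Case 3 boundary actually comes from, not from $\varphi_t t$ itself being small. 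The paper implements this by splitting $\mathbb{E}[c_t(S_t,0)]=A_1(t)+A_2(t)$ and comparing the law of $S_t$ with a Gaussian on the set $\{S_t<\varphi_t/(\bar\sigma^2\eta_t)\}$. For the lower bound it sidesteps the sign problem entirely by evaluating $c_t(\cdot,0)$ at $z^\ast=\varphi_t/(\bar\sigma^2\eta_t)$, the unique point where both $l_t^{z^\ast}=0$ and the exponential correction vanishes exactly (using monotonicity of $c_t(\cdot,0)$ on $(0,z^\ast)$), and at $\delta=-\tfrac12$ by working with the put restricted to $\{S_t\ge1\}$, where the correction has a favourable sign. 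Without some version of these devices your sandwich does not close on the sub-region $\delta\le-\tfrac12$.
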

\begin{proof} $ $\\
We split the proof in three sub-cases. For each sub-case we build an upper and a lower bound, on the model price, and we demonstrate that both bounds are $O\left(\sqrt{t}\right)$ and then, that $\hat{\sigma}_0\;\;\mbox{is finite}. $
 \begin{center}
      \textbf{ $\boldsymbol{\beta> 1}$ \&$\boldsymbol{-\frac{\beta}{2}<=\delta<-\frac{1}{2}}$}
\end{center}

\textbf{Upper bound.}\\
\smallskip\\
Let us split the expected value of the ATS call in two parts 
\begin{align*}
 &\mathbb{E}[c_t\left(S_t,0\right)]\\
&=\;\mathbb{E}\left[N\left(l_t^{S_t}+\bar{\sigma} \frac{\sqrt{S_t t}}{2}\right)-N\left(l_t^{S_t}-\bar{\sigma} \frac{\sqrt{S_t t}}{2}\right)\right]+\mathbb{E}\left[N\left(l_t^{S_t}+\bar{\sigma}\frac{\sqrt{S_t t}}{2}\right)\left(e^{\varphi_t t -t\bar{\sigma}^2\eta_tS_t}-1\right)\right]\\
&=:A_1(t)+A_2(t)\;\;.
\end{align*}
We prove that both parts are bounded from above by quantities $ O\left(\sqrt{t}\right)$.
The first expected value is s.t. \begin{equation}\label{equation::expected_diff_norm}
	A_1(t) \leq \sqrt{\frac{t}{2\pi}}\;\bar{\sigma}\; \mathbb{E}[\sqrt{S_t}] = O\left(\sqrt{t}\right)\;\;,
\end{equation}
where the inequality holds true because of (\ref{equation::bound_N}) and $\sqrt{t}\; \mathbb{E}[\sqrt{S_t}]= O\left(\sqrt{t}\right)$ because, by \textbf{Lemma \ref{lemma::limit_sqrt}} point 1, $\mathbb{E}[\sqrt{S_t}]$ goes to one as $t$ goes to zero.\\
 Let us study the term $A_2(t)$.
\begin{align}
A_2(t)&<\mathbb{E}\left[ \left(e^{\varphi_t t-t \bar{\sigma}^2\eta_t S_t}-1\right)\mathbb{1}_{S_t<\varphi_t/(\bar{\sigma}^2\eta_t)}\right] \nonumber \\&= \sqrt{\frac{{t}}{{2\pi k_t}}}\int_0^{\varphi_t/(\bar{\sigma}^2\eta_t)} dz e^{-\frac{t(z-1)^2}{2k_t}} \left(e^{\varphi_t t-t\bar{\sigma}^2\eta_t  z}-1\right)\label{equation::integral_normal}\\
&\;\;\;+\int_0^{\varphi_t/(\bar{\sigma}^2\eta_t)} dz\left({\cal P}_{S_t}(z)- \sqrt{\frac{{t}}{{2\pi k_t}}}\; e^{-\frac{t(z-1)^2}{2k_t}}\right) \left(e^{\varphi_t t-t\bar{\sigma}^2\eta_t  z}-1\right)\label{equation::integral_G}\\
&\leq O\left(t^{\delta+(\beta+1)/2}\right)\nonumber\;\;,
\end{align}
where ${\cal P}_{S_t}$ is the law of $S_t$.
The first inequality is true because the quantity inside the expected value is positive on $\left(0,\frac{\varphi_t}{\bar{\sigma}^2\eta_t}\right)$ and negative elsewhere. The equality is obtained by adding and subtrancting the same expected value for a Gaussian random variable. We prove the second inequality in two steps, showing that both (\ref{equation::integral_normal}) and (\ref{equation::integral_G}) are bounded by $ O\left(t^{\delta+(\beta+1)/2}\right)$. \\ First, we consider (\ref{equation::integral_normal}) 
\begin{align}
&\sqrt{\frac{{t}}{{2\pi k_t}}}\;\int_0^{\varphi_t/(\bar{\sigma}^2\eta_t)} dz\; e^{-\frac{t(z-1)^2}{2k_t}} \left(e^{\varphi_t t-t \bar{\sigma}^2\eta_t z}-1\right)\nonumber\\=&
 \frac{1}{\sqrt{2\pi}}\int_{{\cal A}_t} dw\; e^{-\frac{w^2}{2}} \left(e^{\varphi_t t-t\bar{\sigma}^2\eta_t (1+w\sqrt{k_t/t})}-1\right) \label{equation::prop4_2}\\\leq&
\frac{1}{\sqrt{2\pi}}\int_{{\cal A}_t} dw\; e^{-\frac{w^2}{2}}e^{-\sqrt{t}\bar{\sigma}^2\eta_t \sqrt{k_t}w}-\frac{1}{\sqrt{2\pi}}\int_{{\cal A}_t} dw\; e^{-\frac{w^2}{2}}\label{equation::prop4_3}\\
=& e^{t\bar{\sigma}^4\eta_t^2k_t/2}N\left(\sqrt{t}\bar{\sigma}^2\eta_t \sqrt{k_t} +\left(\frac{\varphi_t}{\bar{\sigma}^2\eta_t}-1\right)\sqrt{\frac{t}{k_t}}\right)- N\left(\left(\frac{\varphi_t}{\bar{\sigma}^2\eta_t}-1\right)\sqrt{\frac{t}{k_t}}\right)+o\left(t^{\delta+(\beta+1)/2}\right) \label{equation::prop4_41}\\=&\sqrt{\frac{t}{2\pi}}\bar{\sigma}^2\eta_t \sqrt{k_t}+\frac{t\bar{\sigma}^4\eta_t^2k_t}{4}+o\left(t^{\delta+(\beta+1)/2}\right) =O\left(t^{\delta+(\beta+1)/2}\right)\label{equation::prop4_4}\;\;,  
\end{align}
where ${\cal A}_t\equiv \left\{w\in \mathbb{R} \,:\,  -\sqrt{\frac{t}{k_t}}<w<\left(\varphi_t/(\bar{\sigma}^2\eta_t)-1\right)\sqrt{\frac{t}{k_t}}\right\}$.
Equality (\ref{equation::prop4_2}) is due to a change of the integration variable $w:=(z-1)/\sqrt{k_t/t}$,  equality (\ref{equation::prop4_3}) to the fact that, by \textbf{Lemma \ref{Lemma::short_time_gamma}}, $e^{\varphi_t t-t\bar{\sigma}^2\eta_t}<1$. Equality (\ref{equation::prop4_41}) to a change of variable $m:=w+\sqrt{t}\bar{\sigma}^2\eta_t \sqrt{k_t}$ and to the fact that both $N\left(-\sqrt{\frac{t}{k_t}}\right)$ and $N\left(\sqrt{t}\bar{\sigma}^2\eta_t \sqrt{k_t}-\sqrt{\frac{t}{k_t}}\right)$ go to zero faster that any power of $t$. Finally, (\ref{equation::prop4_4}) holds true because of the Taylor expansion of $N$ in zero.\\

Second, we consider (\ref{equation::integral_G}) 
\begin{align*}
&\left|\int_0^{\varphi_t/(\bar{\sigma}^2\eta_t)} dz\left({\cal P}_{S_t}(z)- \sqrt{\frac{{t}}{{2\pi k_t}}}\;e^{-\frac{t(z-1)^2}{2k_t}}\right) \left(e^{\varphi_t t-t\bar{\sigma}^2\eta_t  z}-1\right)\right| \nonumber\\
\leq &\left|-\left(\mathbb{P}(S_t<0)-N\left(-\sqrt{\frac{t}{k_t}}\right)\right)\left(e^{\varphi_t t}-1\right)\right|\\&\;\;+\left|\int_0^{\varphi_t/(\bar{\sigma}^2\eta_t)}dz \left(\mathbb{P}(S_t<z)-N\left((z-1)\sqrt{\frac{t}{k_t}}\right)\right)\bar{\sigma}^2\,\eta_t\, t\,e^{\varphi_t t-t\bar{\sigma}^2\eta_t  z}\right|\\
\leq &2\frac{2-\alpha}{1-\alpha}\sqrt{\frac{k_t}{t}}\left(e^{\varphi_t t}-1\right)=   O\left(t^{\delta+(\beta+1)/2}\right)\;\;.
\end{align*}
The first inequality is due to integration by part and to the triangular inequality. The second inequality is a consequence of Jensen inequality and of \textbf{Lemma \ref{lemma::limit_distro}}.\\

\textbf{Lower bound.}\\

As discussed in the proof of \textbf{Proposition \ref{proposition:impvol2}}, for a sufficiently small $t$, $c_t\left(S_t,0\right)$ is decreasing for  $S_t\in \left(0,\frac{\varphi_t}{\bar{\sigma}^2\eta_t} \right)$ hence,\begin{align*}
 \mathbb{E}\left[c_t(S_t,0)\right] 
&\geq \mathbb{E}\left[c_t(S_t,0)\mathbb{1}_{S_t\leq \varphi_t /(\bar{\sigma}^2\eta_t)}\right] \\
&\geq c_t\left(\frac{\varphi_t }{\bar{\sigma}^2\eta_t},0\right)\mathbb{P}\left(S_t\leq \frac{\varphi_t }{\bar{\sigma}^2\eta_t} \right)= \sqrt\frac{{\varphi_t t}}{{8\pi \eta_t}}+o\left(\sqrt{t}\right)=O\left(\sqrt{t}\right)\;\;.
\end{align*}
The first inequality is because $c_t\left(S_t,0\right)$ is non negative and the second is because we bound the expected value from below with the minimum of  $c_t\left(S_t,0\right)$ multiplied by the probability of the interval  $\left(0,\frac{\varphi_t}{\bar{\sigma}^2\eta_t} \right)$.
The  equality holds  because, by \textbf{Lemma \ref{lemma::limit_distro}}, \[\lim_{t\to 0} \mathbb{P}\left(S_t\leq \frac{\varphi_t }{\bar{\sigma}^2\eta_t} \right)=\frac{1}{2}\;\;,\]
and \begin{align*}
	c_t\left(\frac{\varphi_t }{\bar{\sigma}^2\eta_t},0\right) = N\left(\sqrt{\frac{\varphi_t\, t}{4\bar{ \sigma}\eta_t}} \right)- N\left(-\sqrt{\frac{\varphi_t\, t}{4\bar{ \sigma}\eta_t}} \right) = \sqrt{\frac{\varphi_t\, t}{2\pi\bar{ \sigma}\eta_t}} +o\left(\sqrt{t}\right)\;\;,
\end{align*}
with $\sqrt{\frac{\varphi_t\, t}{8\pi\bar{ \sigma}\eta_t}}=O\left(\sqrt{t}\right) $.\\
\newpage
 
 \begin{center}
      \textbf{$\boldsymbol{\beta>1}$ \&  $\boldsymbol{\delta=-\frac{1}{2}}$}
\end{center}

\textbf{Upper bound.}\\

The upper bound on the ATS call price is the same to the one of the previous sub-case \\$-\beta/2\leq\delta<-1/2,\; \beta>1$.\\

\textbf{Lower bound.}\\

We bound the  put price from below. It exist  $H>1$ such that for a sufficiently small $t$
\begin{align*}
    \mathbb{E}\left[p_t\left(S_t,0\right)\right]&\geq    \mathbb{E}\left[p_t\left(S_t,0\right)\mathbb{1}_{S_t \in [1,H]}\right]\geq \mathbb{E}\left[\left(N\left(-l_t^{S_t}+\bar{\sigma} \frac{\sqrt{S_t t}}{2}\right)-N\left(-l_t^{S_t}-\bar{\sigma}\frac{\sqrt{S_t t}}{2}\right)\right)\mathbb{1}_{S_t\in[1,H]}\right]\\
    &\geq \mathbb{E}\left[N'\left(-l_t^{S_t}+\frac{\bar{\sigma}\sqrt{S_tt}}{2}\right)\bar{\sigma}\sqrt{S_t t}\;\mathbb{1}_{S_t\in [1,H]}\right]\\
    &\geq N'\left(\bar{\sigma}\bar{\eta}-\frac{\varphi_t\,\sqrt{t}}{\bar{\sigma}} +\frac{\bar{\sigma}\sqrt{t}}{2}\right)\bar{\sigma}\sqrt{t}\;\mathbb{P}(S_t\in[1,H]) = \sqrt{\frac{t}{8\pi}}\bar{\sigma}+o\left(\sqrt{t}\right)\;\;.
\end{align*}
The first inequality holds because $p_t(S_t,0)$ is non negative. The second because $e^{\varphi_t t -t \bar{\sigma}^2\eta_t S_t}<1$ in $[1,H]$. The third inequality is due to the fact that we bound from above the difference\[N\left(-l_t^{S_t}+\bar{\sigma} \frac{\sqrt{S_t t}}{2}\right)-N\left(-l_t^{S_t}-\bar{\sigma}\frac{\sqrt{S_t t}}{2}\right)\]  with the standard normal law evaluated in the maximum between the two  (positive) arguments multiplied by the difference of the two arguments. Notice that $\bar{\sigma}\sqrt{S_t t}$ is a positive quantity almost surely.
The last inequality holds because, by \textbf{Lemma \ref{lemma::Increasing}}, it exists $H>1$ s.t. the quantity inside the expected value is increasing  in $[1,H]$ for a sufficiently small $t$. The equality is because, by \textbf{Lemma \ref{lemma::limit_distro}}, $ \mathbb{P}(S_t\in[1,H])$ goes to $1/2$ as $t$ goes to zero and  \[
\lim_{t\to 0}N'\left(\bar{\sigma}\bar{\eta}-\frac{\varphi_t\,\sqrt{t}}{\bar{\sigma}}+\frac{\bar{\sigma}\sqrt{t}}{2}\right) = \frac{1}{\sqrt{2\pi}}\;\;.
\]

 \begin{center}
      \textbf{$\boldsymbol{\beta\geq 1}$ \& $\boldsymbol{-\frac{1}{2}<\delta\leq 0}$  }
\end{center}

\textbf{Upper bound.}\\
\smallskip\\
We can  bound $c_t(S_t,0)$ from above as in (\ref{equation::bound_ct}).\\
We bound the ATS option price as 
\begin{equation}
\mathbb{E}[c_t(S_t,0)]\leq \mathbb{E}\left[\frac{1}{\sqrt{2\pi}}\bar{\sigma} \sqrt{S_tt}\right]+e^{\varphi_t t}-1\leq  O\left(\sqrt{t}\right)\;\;.\label{equation::bound_delta_05}
\end{equation}
The last inequality holds because, by Jensen inequality with concave function $\sqrt{*}$, $\mathbb{E}[\sqrt{S_t }]\leq\sqrt{\mathbb{E}[{S_t }]} =1$ and because, by \textbf{Lemma \ref{Lemma::short_time_gamma}} point 1, $e^{\varphi_t t}-1 =  o\left(\sqrt{t}\right)$.\\

\textbf{Lower bound.}\\

 To bound $c_t\left(z,0\right)$ from below we have to study its derivative in (\ref{equation::derivative}).
Notice that, at short-time, $l_t^z = O\left(\sqrt{t}\eta_t\right)=o(1)$, due to \textbf{Lemma \ref{Lemma::short_time_gamma}} point 1, and to the fact that $\delta>-1/2$. Moreover, again due to \textbf{Lemma \ref{Lemma::short_time_gamma}} point 1, $e^{\varphi_t t -t\bar{\sigma}^2\eta_tz}=1+O\left(t\eta_t\right)$. Then, we have\\ i) The negative first term at short-time is $ o\left(\sqrt{t}\right)$\[  -t\bar{\sigma}^2\eta_te^{\varphi_t t -t\bar{\sigma}^2\eta_tz}N\left(l_t^{z}+\bar{\sigma} \frac{\sqrt{z t}}{2}\right) =O\left(t \eta_t\right)=o\left(\sqrt{t}\right)\;\;. \] ii) The second term at short-time is $o\left(\sqrt{t}\right)$ \begin{align*}
     &\left(\frac{\varphi_t \sqrt{t}}{2\bar{\sigma} z^{3/2}}+\frac{\sqrt{t}\bar{\sigma}\eta_t}{2\sqrt{z}}\right)\left(e^{\varphi_t t-t\bar{\sigma}^2\eta_tz}N'\left(l_t^{z}+\bar{\sigma} \frac{\sqrt{z t}}{2}\right)-N'\left(l_t^{z}-\bar{\sigma} \frac{\sqrt{z t}}{2}\right) \right)\\ =& O\left(\sqrt{t}\eta_t\right)\frac{e^{-(l_t^{z})^2/2-\bar{\sigma}^2z t/8}}{\sqrt{2\pi}}\left(\left(1+O(t\eta_t)\right)\left(1-\frac{l_t^{z}\bar{\sigma}\sqrt{z t}}{2}+o(t \eta_t)\right)-\left(1+\frac{l_t^{z}\bar{\sigma}\sqrt{z t}}{2}+o(t \eta_t)\right)\right)\\ = &O(\eta_t^2 t^{3/2})=o\left(\sqrt{t}\right)\;\;,
 \end{align*}
 because
  \[N'\left(l_t^{z}\pm\bar{\sigma} \frac{\sqrt{z t}}{2}\right)=e^{-(l_t^{z})^2/2-\bar{\sigma}^2z t/8}\left(1+\pm \frac{l_t^{z}\bar{\sigma}\sqrt{z t}}{2}+o(t \eta_t)\right)\;\;.\]
  iii) The positive third term at short-time is $O\left(\sqrt{t}\right) $ \[\frac{\sqrt{t}\bar{\sigma}}{4\sqrt{z}}\left(e^{\varphi_t t-t\bar{\sigma}^2\eta_tz}N'\left(l_t^{z}+\bar{\sigma} \frac{\sqrt{z t}}{2}\right)+N'\left(l_t^{z}-\bar{\sigma} \frac{\sqrt{z t}}{2}\right)\right)\\
    =\sqrt{\frac{{t}}{{8\pi \,z}}}\;\bar{\sigma}+o\left(\sqrt{t}\right)\;\;.\]
   Summarizing, the leading term in (\ref{equation::derivative}), at short-time, is the third one, which is positive. Hence, for a fixed $z>0$ and for sufficiently small $t$, $c_t(z,0)$ is increasing; thus, we can bound the expected value from below
\begin{align*}
    \mathbb{E}\left[c_t\left(S_t,0\right)\right]&\geq \mathbb{E}\left[c_t\left(S_t,0\right) \mathbb{1}_{S_t\in[1/2,3/2]}\right]>c_t\left(\frac{1}{2},0\right)\mathbb{P}\left(S_t\in \left[ \frac{1}{2},\frac{3}{2}\right]\right)\\&>\left\{N\left(l_t^{1/2}+\bar{\sigma}\sqrt{\frac{t}{8}}\right)-N\left(l_t^{1/2}-\bar{\sigma}\sqrt{\frac{t}{8}}\right)\right\}\mathbb{P}\left(S_t\in \left[ \frac{1}{2},\frac{3}{2}\right]\right) \\&>N'\left(l_t^{1/2}+\bar{\sigma}\sqrt{\frac{t}{8}}\right)\bar{\sigma}\sqrt{\frac{t}{2}}\mathbb{P}\left(S_t\in \left[ \frac{1}{2},\frac{3}{2}\right]\right)\\
    &=\left(\bar{\sigma}\sqrt{\frac{t}{{4\pi}}} +o\left(\sqrt{t}\right)\right)\mathbb{P}\left(S_t\in \left[ \frac{1}{2},\frac{3}{2}\right]\right)= O\left(\sqrt{t}\right)\;\;.
\end{align*} 
The first inequality holds because $c_t(S_t,0)$ is non negative. The second because, for a sufficiently small $t$,  $c_t(S_t,0)$ is increasing. The third is true because, for sufficiently small $t$,
 $e^{\varphi_t t-t\bar{ \sigma}^2\eta_t/2}>1$, by \textbf{Lemma \ref{Lemma::short_time_gamma}} point 3. The forth is due to the fact that the difference of the standard normal cumulative distribution functions can be bounded from below by the (positive) maximum of the two arguments multiplied by the (positive) difference of the two arguments.
 The equality is due to the fact that  $\mathbb{P}\left(S_t\in \left[ \frac{1}{2},\frac{3}{2}\right]\right)$ is constant if $\beta=1$ and goes to $1$ at short-time if $\beta>1$ because, by \textbf{Lemma \ref{lemma:conv_distr}} point 2, $S_t$ goes to one in distribution at short-time.

 \begin{center}
     \textbf{ Case 3: $\boldsymbol{\beta\geq 1}$ \& $\boldsymbol{-\frac{\beta}{2}\leq\delta\leq 0}$ $\setminus$ $\boldsymbol{\beta=1},\; \left\{\boldsymbol{\delta= -\frac{1}{2}}\right\}$}
\end{center}

Summing up, in all sub-cases the upper bound and the lower bounds of the ATS option prices in (\ref{equation::impvol}) and (\ref{equation::impvol_put}) are $O(\sqrt{t})$. Moreover, from (\ref{equation::asymptotic_impvol}) we have that the Black price is $O\left(\hat{\sigma}_t\sqrt{t}\right)$. Thus, \begin{equation*}
\hat{\sigma}_0\;\; \mbox{is finite}   \qedhere  
\end{equation*}

\end{proof}
\begin{proposition} $ $ \label{proposition:impvol45} \\
		For Cases 4 and 5: $\beta\leq 1$ $\&$  $\delta=-\frac{1}{2}$,\\
		 $$\hat{\sigma}_0\;\;\mbox{is finite}\;\;.$$
\end{proposition}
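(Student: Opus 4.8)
The plan is to follow the scheme of Propositions \ref{proposition:impvol1}--\ref{proposition:impvol3}: bound the ATM model price in (\ref{equation::impvol}), (\ref{equation::impvol_put}) from above by an $O(\sqrt t)$ quantity and from below by a positive multiple of $\sqrt t$, and then read off $\hat{\sigma}_0$ from the ATM Black expansion (\ref{equation::asymptotic_impvol}). As before one may take $F_0(t)=B_t=1$, so that by put--call parity $\mathbb{E}[c_t(S_t,0)]=\mathbb{E}[p_t(S_t,0)]$. Two preliminary remarks: inside the admissible region of \textbf{Theorem \ref{theorem:semplified_f}} the constraint $\delta=-1/2$ forces $\beta>1/2$, so the statement really concerns $1/2<\beta\le1$; and, since $\delta=-1/2$, one has $t\,\bar{\sigma}^2\eta_t=\bar{\sigma}^2\bar{\eta}\sqrt t$, while \textbf{Lemma \ref{Lemma::short_time_gamma}} gives $\varphi_t t=O(\sqrt t)$ together with $\varphi_t t>\tfrac12\,t\,\bar{\sigma}^2\eta_t=\tfrac12\bar{\sigma}^2\bar{\eta}\sqrt t$ for small $t$. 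The upper bound is then immediate: from (\ref{equation::bound_ct}), $c_t(S_t,0)\le\sqrt{t/(2\pi)}\,\bar{\sigma}\sqrt{S_t}+e^{\varphi_t t}-1$, so taking expectations and using Jensen's inequality for the concave $\sqrt{\cdot}$ (so $\mathbb{E}[\sqrt{S_t}]\le\sqrt{\mathbb{E}[S_t]}=1$) together with $e^{\varphi_t t}-1=O(\varphi_t t)=O(\sqrt t)$ gives $\mathbb{E}[c_t(S_t,0)]\le\sqrt{t/(2\pi)}\,\bar{\sigma}+O(\sqrt t)=O(\sqrt t)$, whence $\hat{\sigma}_0<\infty$ by (\ref{equation::asymptotic_impvol}).

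For the lower bound the recipe of the earlier cases — restricting $S_t$ to a fixed compact interval, as in the $\beta>1$, $\delta=-1/2$ sub-case of \textbf{Proposition \ref{proposition:impvol3}} — fails here: for $\beta=1$ the law of $S_t$ is $t$-independent (Definition \ref{definition::S_t} with $k_t=\bar{k}t$), while for $\beta<1$ one has $S_t\to0$ in probability with $\mathbb{E}[S_t]=1$ (\textbf{Lemma \ref{lemma:conv_distr}}), so the first moment of $S_t$ escapes upward and any interval bounded away from $0$ becomes asymptotically negligible. Instead I would extract a $\Theta(\sqrt t)$ contribution from the deterministic compensator $\varphi_t t$. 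Since $p_t(S_t,0)=\mathbb{E}[(1-e^{f_t})^+\mid S_t]\ge(1-\mathbb{E}[e^{f_t}\mid S_t])^+=(1-e^{\varphi_t t-t\bar{\sigma}^2\eta_t S_t})^+$ and $\mathbb{E}[e^{f_t}]=1$ (which is precisely the role of $\varphi_t$ in (\ref{equation::power_scaling_param})), one has $\mathbb{E}[p_t(S_t,0)]\ge\mathbb{E}[(e^{\varphi_t t-t\bar{\sigma}^2\eta_t S_t}-1)^+]$; on $\{S_t\le1/4\}$, for small $t$, $\varphi_t t-t\bar{\sigma}^2\eta_t S_t\ge\varphi_t t-\tfrac14 t\bar{\sigma}^2\eta_t\ge\tfrac12\varphi_t t>0$ (using $\varphi_t t>\tfrac12 t\bar{\sigma}^2\eta_t$), so $\mathbb{E}[p_t(S_t,0)]\ge(e^{\varphi_t t/2}-1)\,\mathbb{P}(S_t\le1/4)\ge\tfrac14\bar{\sigma}^2\bar{\eta}\sqrt t\;\mathbb{P}(S_t\le1/4)$. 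Finally $\mathbb{P}(S_t\le1/4)$ stays bounded away from $0$ as $t\to0$: it tends to $1$ for $\beta<1$ (\textbf{Lemma \ref{lemma:conv_distr}}), and is a fixed positive constant for $\beta=1$ because the then $t$-independent law of $S_t$ is supported on $(0,\infty)$. Hence the lower bound is $\ge c\sqrt t$ for some $c>0$, and combining it with the upper bound and (\ref{equation::asymptotic_impvol}) sandwiches $\hat{\sigma}_t$ between two positive constants, so $\hat{\sigma}_0$ is finite (and in fact positive).

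The only genuine difficulty is the lower bound: because $S_t$ degenerates (or, for $\beta=1$, merely spreads), the compact-truncation argument of the earlier propositions breaks down, and the crucial observation is that for $\delta=-1/2$ the compensator $\varphi_t t$ has exact order $\sqrt t$ — exactly what the conditional-Jensen/martingale bound above needs. For $\delta<-1/2$ that same bound would grow faster than $\sqrt t$ ($\hat{\sigma}_0=\infty$, Case 2), and for $\delta>-1/2$ with $\beta<1$ it would be $o(\sqrt t)$ ($\hat{\sigma}_0=0$, Case 1), so this is where the dividing line $\delta=-1/2$ really enters. The remaining ingredients — the short-time size of $\varphi_t t$ (\textbf{Lemma \ref{Lemma::short_time_gamma}}) and $\mathbb{P}(S_t\le1/4)\not\to0$ (\textbf{Lemma \ref{lemma:conv_distr}} for $\beta<1$, and stationarity of the law for $\beta=1$) — are all available.
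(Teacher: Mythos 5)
Your upper bound is literally the paper's: the pointwise bound (\ref{equation::bound_ct}), Jensen's inequality for $\sqrt{\cdot}$, and $e^{\varphi_t t}-1=O(\varphi_t t)=O(\sqrt{t})$. For the lower bound you and the paper rely on the same underlying mechanism --- extract the deterministic part $\varphi_t t-t\bar{\sigma}^2\eta_t S_t$ of the conditional log-forward, which has exact order $\sqrt{t}$ precisely because $\delta=-1/2$, on an event where it is positive and whose probability does not vanish --- but you package it differently. The paper stays inside the Hull--White representation, uses $e^x\geq 1+x$ and $N(\text{positive})\geq 1/2$ to get $c_t(S_t,0)\geq \mathbb{1}_{S_t<\varphi_t/(\bar{\sigma}^2\eta_t)}\left(\varphi_t t-t\bar{\sigma}^2\eta_t S_t\right)/2$, and then shows that the expectation (\ref{equation::prop_delta05}) has a positive limit; you apply conditional Jensen to the convex payoff, which produces $\left(e^{\varphi_t t-t\bar{\sigma}^2\eta_t S_t}-1\right)^+$ in one stroke and is arguably cleaner.

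Two caveats. First, the put-side step as displayed is garbled: the conditional-Jensen bound for the put, $\left(1-e^{\varphi_t t-t\bar{\sigma}^2\eta_t S_t}\right)^+$, is identically zero on $\{S_t\leq 1/4\}$ (there the conditional forward exceeds one), so the inequality $\mathbb{E}[p_t(S_t,0)]\geq\mathbb{E}\left[\left(e^{\varphi_t t-t\bar{\sigma}^2\eta_t S_t}-1\right)^+\right]$ does not follow from that display. It does follow from the call-side conditional Jensen, or equivalently from $\mathbb{E}[(1-Y)^+]=\mathbb{E}[(Y-1)^+]$ whenever $\mathbb{E}[Y]=1$; since you invoked ATM put--call parity at the outset, this is a presentational slip rather than a missing idea. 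Second, your fixed truncation level $1/4$ makes the $\beta=1$ case rest on $\mathbb{P}(S_t\leq 1/4)>0$, i.e.\ on the support of the ($t$-independent) law of $S_t$ reaching below $1/4$. This is true --- for $\alpha=0$ the law is Gamma and for $\alpha\in(0,1)$ it is the marginal of a driftless tempered-stable subordinator, so the support is $[0,\infty)$ --- but you assert it without proof and it is established nowhere in the paper. The paper avoids it by truncating at $\varphi_t/(\bar{\sigma}^2\eta_t)\to 1$ and needing only $\mathbb{P}(S_t<1)-\mathbb{E}[S_t\mathbb{1}_{S_t<1}]=\mathbb{E}[(1-S_t)^+]>0$, which follows from unit mean and variance $\bar{k}>0$ alone. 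Either cite the support property of the tempered-stable law or switch to the paper's threshold; with that, your argument is complete.
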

\begin{proof}$ $
 \begin{center}
      \textbf{ $\boldsymbol{\beta\leq 1}$ \&  $\boldsymbol{\delta=-\frac{1}{2}}$}
\end{center}

\textbf{Upper bound.}\\
\smallskip\\
We can  bound $c_t(S_t,0)$ from above as in (\ref{equation::bound_ct}).\\
We bound the ATS option price as 
\begin{equation}
\mathbb{E}[c_t(S_t,0)]\leq \mathbb{E}\left[\frac{1}{\sqrt{2\pi}}\bar{\sigma} \sqrt{S_tt}\right]+e^{\varphi_t t}-1= O\left(\sqrt{t}\right)\;\;.\label{equation::bound_delta_05}
\end{equation}
The equality holds because, by Jensen inequality with concave function $\sqrt{*}$, $\mathbb{E}[\sqrt{S_t }]\leq\sqrt{\mathbb{E}[{S_t }]} =1$ and because, by \textbf{Lemma \ref{Lemma::short_time_gamma}} point 1, $e^{\varphi_t t}-1 =  O\left(\sqrt{t}\right)$.\\

\textbf{Lower bound.}\\

We bound $c_t\left(S_t,0\right)$ from below as:
\begin{align*}
    c_t\left(S_t,0\right)&\geq \mathbb{1}_{S_t<\varphi_t /(\bar{\sigma}^2\eta_t )}\left(N\left(l_t^{S_t}+\bar{\sigma} \frac{\sqrt{S_t t}}{2}\right)-N\left(l_t^{S_t}-\bar{\sigma}\frac{\sqrt{S_t t}}{2}\right)+\left(\varphi_t t -t\bar{\sigma}^2\eta_tS_t\right)/2\right)\\ &\geq \mathbb{1}_{S_t<\varphi_t /(\bar{\sigma}^2\eta_t )}\left(\varphi_t t -t\bar{\sigma}^2\eta_tS_t\right)/2\;\;.
\end{align*}
The first inequality is because $c_t\left(S_t,0\right)$ is non negative, because $e^x\geq x+1$, and because the normal cumulative distribution function evaluated in a positive quantity is above 1/2. The second holds because the difference between the two normal cumulative function is non negative.\\
\begin{align*}
    \mathbb{E}[c_t\left(S_t,0\right)]&\geq \mathbb{E}\left[\mathbb{1}_{S_t<\varphi_t /(\bar{\sigma}^2\eta_t )}(\varphi_t t -t\bar{\sigma}^2\eta_tS_t)/2\right]\\&=\sqrt{t}\,\bar{\sigma}^2\,\bar{\eta}/2\; \mathbb{E}\left[\mathbb{1}_{S_t<\varphi_t /(\bar{\sigma}^2\eta_t )}(-S_t+\varphi_t /(\bar{\sigma}^2\eta_t ))\right]= O\left(\sqrt{t}\right).
\end{align*}
The last equality is due to the fact that \begin{equation}
    \mathbb{E}\left[\mathbb{1}_{S_t<\varphi_t /(\bar{\sigma}^2\eta_t )}(-S_t+\varphi_t /(\bar{\sigma}^2 \eta_t))\right]=\varphi_t /(\bar{\sigma}^2 \eta_t)\mathbb{P}(S_t<\varphi_t /(\bar{\sigma}^2\eta_t ))-\mathbb{E}[S_t\mathbb{1}_{S_t<\varphi_t /(\bar{\sigma}^2\eta_t )}] \label{equation::prop_delta05}
\end{equation}can be bounded from below with a positive constant for sufficiently small $t$. This fact can be deduced for $\beta\leq 1$.
We prove it separately for the two cases $\beta<1$ and $\beta=1$.\\ 
For $\beta<1$, let us observe that, at short-time, 
$$0\leq \mathbb{E}[S_t\mathbb{1}_{S_t<\varphi_t /(\bar{\sigma}^2\eta_t )}]\leq \mathbb{E}[S_t\mathbb{1}_{S_t<1}]=o(1)\;\;,$$ 
because, by point 2 of \textbf{Lemma \ref{Lemma::short_time_gamma}}, $\varphi_t /(\bar{\sigma}^2\eta_t)<1$ and, by definition of convergence in distribution, at short-time $\mathbb{E}[S_t\mathbb{1}_{S_t<1}]=o(1)$, because, by \textbf{Lemma \ref{lemma:conv_distr}} point 1, $S_t$ converges in distribution to 0. Moreover, at short-time, $\varphi_t /(\bar{\sigma}^2 \eta_t)\mathbb{P}(S_t<\varphi_t /(\bar{\sigma}^2\eta_t )) =1+o(1)$, by point 1 of \textbf{Lemma \ref{Lemma::short_time_gamma}} and by  point  1 of  \textbf{Lemma \ref{lemma:conv_distr}}.

For $\beta=1$, we remind that  the law of $S_t$ does not depend from $t$ and we observe that the limit of (\ref{equation::prop_delta05}) for $t$  that goes to zero is positive
\[\lim_{t\to 0}\left\{\varphi_t /(\bar{\sigma}^2 \eta_t)\mathbb{P}(S_t<\varphi_t /(\bar{\sigma}^2\eta_t ))-\mathbb{E}[S_t\mathbb{1}_{S_t<\varphi_t /(\bar{\sigma}^2\eta_t )}]\right\} =\mathbb{P}(S_t<1)-\mathbb{E}[S_t\mathbb{1}_{S_t<1}]>0\;\;,\] 
where the last inequality is due to the fact that $S_t$ has unitary mean and finite variance $\bar{k}$.\\

Summarizing, as in \textbf{Proposition \ref{proposition:impvol3}}  the upper and lower bounds of the ATM prices in (\ref{equation::impvol}) are $O\left(\sqrt{t}\right)$. 
From (\ref{equation::asymptotic_impvol}), we have that the Black price is $O\left(\hat{\sigma}_t\sqrt{t}\right)$. Thus, \begin{equation*}
\hat{\sigma}_0\;\; \mbox{is finite}   \qedhere  
\end{equation*}
\end{proof}

In the propositions above, we have proven that $\hat{\sigma}_0$ is finite only in Cases 3, 4 and 5. Only for these Cases we study the short-time skew in the next Section.

\section{Short-time skew}
In this Section, we focus on the skew term $\hat{\xi}_t$ for the ATS when $\hat{\sigma}_0$ is finite. We obtain an expression of $\hat{\xi}_t$ in \textbf{Lemma \ref{lemma::skew_formula}}  and study its short-time limit.\\
In the introduction, we have mentioned that the implied volatility skew observed in the equity market is negative and it goes to zero as one over the square root of $t$. This behavior is equivalent to a negative and finite $\hat{\xi}_0$. In this Section, we prove that   $\hat{\xi}_0$ is zero in Case 3 (\textbf{Proposition \ref{proposition::skew3}})  and is negative and finite in Cases 4 and 5 (\textbf{Proposition \ref{proposition::skew45}}). Moreover, Case 5 identifies the unique parameters' set where $\hat{\xi}_0$ can be a generic value that it is possible to calibrate from market data.

\begin{lemma} \label{lemma::skew_formula}
  The skew term $\hat{\xi}_t$ is 
\begin{equation}\label{equation::smirk}
      \hat{\xi}_t=\frac{N\left(-\frac{\hat{\sigma}_t\sqrt{t}}{2}\right)-\mathbb{E}\left[N\left(l_t^{S_t}-\bar{\sigma} \frac{\sqrt{S_t t}}{2}\right)\right]}{ N'\left(-\frac{\hat{\sigma}_t\sqrt{t}}{2}\right)}\;\;.
\end{equation}
\end{lemma}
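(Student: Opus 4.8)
The plan is to apply the implicit function theorem to the call implied‑volatility equation (\ref{equation::impvol}), as announced in the Introduction (the reference there being \citet[][Th.11, p.164]{loomis1968advanced}). Introduce $G(y,\sigma):=\mathbb{E}[c_t(S_t,y)]-c^B_t(\sigma,y)$. This function is jointly $C^1$ near $(0,\hat{\sigma}_t)$: the Black price $c^B_t$ is smooth for $\sigma>0$, and one may differentiate $\mathbb{E}[c_t(S_t,\cdot)]$ under the expectation by dominated convergence, the $y$‑derivative of $c_t(S_t,\cdot)$ being bounded by a quantity integrable against the law of $S_t$. By construction $G(y,\mathcal{I}_t(y))\equiv 0$, and $\partial_\sigma G=-\partial_\sigma c^B_t(\sigma,y)\neq 0$ since the Black vega is strictly positive; hence the implicit function theorem gives that $\mathcal{I}_t$ is differentiable with $\mathcal{I}_t'(y)=-\partial_y G/\partial_\sigma G$, so that
\[
\hat{\xi}_t=\mathcal{I}_t'(0)=\frac{\partial_y\mathbb{E}[c_t(S_t,y)]\big|_{y=0}-\partial_y c^B_t(\sigma,y)\big|_{y=0,\,\sigma=\hat{\sigma}_t}}{\partial_\sigma c^B_t(\sigma,0)\big|_{\sigma=\hat{\sigma}_t}}\;\;.
\]

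It then remains to compute the three derivatives. For the denominator, from $c^B_t(\sigma,0)=N(\sigma\sqrt{t}/2)-N(-\sigma\sqrt{t}/2)$ and the evenness of $N'$ one gets $\partial_\sigma c^B_t(\sigma,0)=\sqrt{t}\,N'(\sigma\sqrt{t}/2)$, i.e.\ $\sqrt{t}\,N'(-\hat{\sigma}_t\sqrt{t}/2)$ at $\sigma=\hat{\sigma}_t$. For the first term of the numerator, differentiating $c^B_t(\sigma,y)$ in $y$ at fixed $\sigma$: both $N'$‑terms carry the common factor $-1/\sigma$ and, at $y=0$, their arguments $\pm\sigma\sqrt{t}/2$ are opposite while $e^{y\sqrt{t}}=1$, so they cancel, leaving $\partial_y c^B_t(\sigma,y)\big|_{y=0}=-\sqrt{t}\,N(-\sigma\sqrt{t}/2)$.

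The crux is the remaining term $\partial_y\mathbb{E}[c_t(S_t,y)]\big|_{y=0}$. Conditionally on $S_t$, $e^{f_t}$ is lognormal by (\ref{eq: master}), so $c_t(S_t,y)$ is a Black--Scholes price, and the identity analogous to $F\,N'(d_1)=K\,N'(d_2)$ holds, namely $e^{\varphi_t t-t\bar{\sigma}^2\eta_t S_t}N'(D_1)=e^{y\sqrt{t}}N'(D_2)$ with $D_{1,2}:=-y/(\bar{\sigma}\sqrt{S_t})+l_t^{S_t}\pm\bar{\sigma}\sqrt{S_t t}/2$, which follows by completing the square in the two Gaussian densities. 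Differentiating $c_t(S_t,y)$ in $y$, both $N'$‑terms now carry the factor $-1/(\bar{\sigma}\sqrt{S_t})$ and, by the identity, cancel at $y=0$ exactly as in the Black case, so $\partial_y c_t(S_t,y)\big|_{y=0}=-\sqrt{t}\,N(l_t^{S_t}-\bar{\sigma}\sqrt{S_t t}/2)$; taking the expectation gives $-\sqrt{t}\,\mathbb{E}[N(l_t^{S_t}-\bar{\sigma}\sqrt{S_t t}/2)]$. Substituting the three derivatives into the displayed quotient for $\hat{\xi}_t$, the common factor $\sqrt{t}$ cancels and (\ref{equation::smirk}) follows.

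The main obstacle is this last step: establishing the conditional‑lognormal version of the Black relation $F\,N'(d_1)=K\,N'(d_2)$ that makes the awkward $N'$‑terms vanish, together with the justification for differentiating $\mathbb{E}[c_t(S_t,\cdot)]$ under the expectation --- which relies on the $y$‑derivative being bounded, the potentially singular factor $1/\sqrt{S_t}$ being controlled for small $S_t$ by the rapidly vanishing density $N'(D_i)$ (recall $l_t^{S_t}\to+\infty$ as $S_t\to 0^+$). The rest is routine implicit‑function‑theorem bookkeeping, and can be carried out equivalently from the put equation (\ref{equation::impvol_put}) by put--call parity.
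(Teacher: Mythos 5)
Your proposal is correct and follows essentially the same route as the paper: apply the implicit function theorem to the call implied-volatility equation, compute the three partial derivatives (using the conditional-lognormal analogue of the Black identity to cancel the $N'$-terms), and substitute at $y=0$. You merely make explicit some justifications (positivity of the Black vega, interchange of derivative and expectation) that the paper leaves implicit.
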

\begin{proof}$ $
Applying  the implicit function theorem to the implied volatility equation for the call option (\ref{equation::impvol}) we obtain the derivative of the implied volatility w.r.t  $y$ 
 \[
   \frac{\partial{\cal I}_t(y)}{\partial y}=\frac{ \frac{\partial \mathbb{E}\left[c_t\left(S_t,y\right)\right]}{\partial y}-  \frac{\partial c^{B}_t\left({\cal I}_t(y),y\right)}{\partial y}}{ \frac{\partial c^{B}_t\left( {\cal I }_t(y),y\right)}{\partial  {\cal I }_t(y)}}\;\;.\displaystyle
 \]
We prove the thesis by computing the three partial derivatives separately. 

     \begin{align*}
        \;\;\;\;\;\;\;\;\quad \quad \quad \frac{\partial \mathbb{E}\left[c_t\left(S_t,y\right)\right]}{\partial y} &= -\sqrt{t} e^{\sqrt{t}\, y}\mathbb{E}\left[N\left(-\frac{y}{{\bar{\sigma}\sqrt{S_t}}}+l_t^{S_t}-\bar{\sigma} \frac{\sqrt{S_t t}}{2}\right)\right] \;\;\\
        \;\, \frac{\partial c^{B}_t\left({\cal I}_t(y),y\right)}{\partial y} &=  -\sqrt{t} e^{\sqrt{t}\, y}N\left(-\frac{y}{{\cal I}_t(y)}-\frac{{\cal I}_t(y)\sqrt{t}}{2}\right)\;\;\\
  \frac{\partial c^{B}_t\left({\cal I}_t(y),y\right)}{\partial  {\cal I }_t(y)} &= \sqrt{t}e^{\sqrt{t}\,y} N'\left(-\frac{y}{{\cal I}_t(y)}-\frac{{\cal I}_t(y)\sqrt{t}}{2}\right)\;\;. \\
\end{align*}
Notice that it is possible to exchange the expected value w.r.t. $S_t$ and the derivative w.r.t. $y$ using the Leibniz rule because the law of $S_t$ does not depend from $y$.
By substituting $y=0$ and reminding that $ {\cal I}_t(0)=\hat{\sigma}_t$, we get (\ref{equation::smirk})
\end{proof}

Notice that, because of lemma \ref{lemma::short_time_IV}, the denominator of $\hat{\xi}_t$ in (\ref{equation::smirk}), $N'\left(-\frac{\hat{\sigma}_t\sqrt{t}}{2}\right)$, goes to $\frac{1}{\sqrt{2 \pi}}$ at short-time. To study the short-time behavior of $\hat{\xi}_t$ it is sufficient to consider only the numerator of equation (\ref{equation::smirk}) \[N\left(-\frac{\hat{\sigma}_t\sqrt{t}}{2}\right)-\mathbb{E}\left[N\left(l_t^{S_t}-\bar{\sigma} \frac{\sqrt{S_t t}}{2}\right)\right] \;\;.\]
\begin{proposition}$ $\\\label{proposition::skew3}
For Case 3: $\beta\geq 1$  $\&$  $-\beta/2\leq \delta\leq 0$, with the exception of the point $\left\{\beta=1,\;\; \delta=-1/2\right\}$,  the skew term is
\[\hat{\xi}_0=0\;\;.\]
\end{proposition}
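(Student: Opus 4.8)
The plan is to show that the numerator of $\hat{\xi}_t$ in (\ref{equation::smirk}), namely $N\!\left(-\frac{\hat{\sigma}_t\sqrt{t}}{2}\right)-\mathbb{E}\!\left[N\!\left(l_t^{S_t}-\bar{\sigma}\frac{\sqrt{S_t t}}{2}\right)\right]$, vanishes as $t\to 0$ in Case~3, since the denominator already tends to $1/\sqrt{2\pi}$ by the remark following \textbf{Lemma \ref{lemma::short_time_IV}}. By \textbf{Lemma \ref{lemma::short_time_IV}} we have $\hat{\sigma}_t\sqrt{t}=o(1)$, so $N\!\left(-\frac{\hat{\sigma}_t\sqrt{t}}{2}\right)\to \frac12$. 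The whole task therefore reduces to proving that $\mathbb{E}\!\left[N\!\left(l_t^{S_t}-\bar{\sigma}\frac{\sqrt{S_t t}}{2}\right)\right]\to\frac12$ as well.

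First I would recall from (\ref{equation::lt}) that $l_t^{z}=-\bar{\sigma}\eta_t\sqrt{zt}+\frac{\varphi_t\sqrt{t}}{\bar{\sigma}\sqrt{z}}$, so that $l_t^{z}-\bar{\sigma}\frac{\sqrt{zt}}{2}=\frac{\varphi_t\sqrt{t}}{\bar{\sigma}\sqrt{z}}-\bar{\sigma}\sqrt{zt}\left(\eta_t+\frac12\right)$. In Case~3 we have $\beta\ge 1$ and $\delta\ge-\beta/2$, so $\eta_t=\bar\eta t^\delta$ and, by \textbf{Lemma \ref{Lemma::short_time_gamma}}, $\varphi_t t$ behaves like a constant times $\bar\sigma^2\eta_t k_t$ (order $t^{\beta+\delta}$) up to higher order, hence $\varphi_t\sqrt{t}\sim c\,\eta_t k_t/\sqrt{t}=c\,t^{\beta+\delta-1/2}$; also $\bar\sigma\sqrt{zt}\,\eta_t$ is order $t^{1/2+\delta}$. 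Since $\beta\ge 1$ and $\delta\ge -\beta/2\ge -1$ (so $\beta+\delta-1/2\ge \beta/2-1/2\ge 0$) and $1/2+\delta\ge 1/2-\beta/2$, one checks that in every sub-case of Case~3 (the three sub-cases $-\beta/2\le\delta<-1/2$, $\delta=-1/2$ with $\beta>1$, and $-1/2<\delta\le 0$ with $\beta\ge 1$, noting $\{\beta=1,\delta=-1/2\}$ is excluded) the argument $l_t^{S_t}-\bar{\sigma}\frac{\sqrt{S_t t}}{2}$ tends to $0$ — for each fixed realization of the limiting random variable $S_t$ (which is $1$ when $\beta>1$ by \textbf{Lemma \ref{lemma:conv_distr}} and is $S_t$ itself when $\beta=1$). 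I would then pass to the limit inside the expectation by dominated convergence, using that $N$ is bounded by $1$, to conclude $\mathbb{E}\!\left[N\!\left(l_t^{S_t}-\bar{\sigma}\frac{\sqrt{S_t t}}{2}\right)\right]\to N(0)=\frac12$.

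Combining, the numerator of (\ref{equation::smirk}) tends to $\frac12-\frac12=0$ while the denominator tends to $\frac{1}{\sqrt{2\pi}}\ne 0$, giving $\hat{\xi}_0=0$. The main obstacle I anticipate is the bookkeeping in the second step: one must verify, separately in each of the three sub-cases of Case~3, that both terms in $l_t^{z}-\bar\sigma\frac{\sqrt{zt}}{2}$ vanish, handling the $\beta=1$ case (where $S_t$ does not converge to a constant but is tight, so the argument still goes to zero pointwise in $\omega$) and the $\beta>1$ case (where $S_t\to 1$) uniformly enough to justify dominated convergence; the short-time estimates on $\varphi_t$ from \textbf{Lemma \ref{Lemma::short_time_gamma}} are exactly what make the exponents work out nonnegative, and care is needed at the boundary $\delta=-\beta/2$ where $\varphi_t\sqrt t$ is order $t^{0}$ times a vanishing factor from $\eta_t k_t$ — one should check it still tends to $0$ and does not merely stay bounded.
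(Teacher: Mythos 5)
Your reduction to showing $\mathbb{E}\bigl[N\bigl(l_t^{S_t}-\bar{\sigma}\tfrac{\sqrt{S_t t}}{2}\bigr)\bigr]\to\frac12$ is exactly the paper's starting point, and your argument is sound in the sub-case $\beta=1$, $-\tfrac12<\delta\leq 0$: there the law of $S_t$ is $t$-independent, $l_t^z-\bar{\sigma}\tfrac{\sqrt{zt}}{2}=o(1)$ for every fixed $z$, and dominated convergence closes the proof, just as in the paper. The gap is in the sub-case $\beta>1$. Your central claim — that the argument of $N$ tends to $0$ "for each fixed realization" — is false there whenever $\delta\leq-\tfrac12$. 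Writing $l_t^z-\bar{\sigma}\tfrac{\sqrt{zt}}{2}=\bar{\sigma}\eta_t\sqrt{t}\bigl(\tfrac{\varphi_t}{\bar{\sigma}^2\eta_t\sqrt{z}}-\sqrt{z}\bigr)-\bar{\sigma}\tfrac{\sqrt{zt}}{2}$ and noting $\eta_t\sqrt{t}=\bar{\eta}\,t^{\delta+1/2}$ and $\varphi_t/(\bar{\sigma}^2\eta_t)\to1$, you see that for fixed $z\neq1$ the argument diverges to $+\infty$ ($z<1$) or $-\infty$ ($z>1$) when $\delta<-\tfrac12$, and converges to the nonzero function $\bar{\sigma}\bar{\eta}(1/\sqrt{z}-\sqrt{z})$ when $\delta=-\tfrac12$. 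In particular your hope that $\varphi_t\sqrt{t}\to0$ at the boundary $\delta=-\beta/2$ fails: $\varphi_t\sqrt{t}\sim\bar{\sigma}^2\bar{\eta}\,t^{\delta+1/2}\to\infty$ there. So the pointwise limit of $N(\cdot)$ is essentially the indicator $\mathbb{1}_{z<1}$, which is discontinuous precisely at the point $z=1$ where $S_t$ concentrates; convergence in distribution plus pointwise convergence of the integrands cannot determine the limit of the expectation, and dominated convergence toward $N(0)$ is simply not available.

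What actually makes the limit equal $\tfrac12$ for $\beta>1$ is a different mechanism, which the paper implements: replace the law of $S_t$ by its normal approximation at scale $\sqrt{k_t/t}$ around $1$ (the Berry--Esseen bound of \textbf{Lemma \ref{lemma::limit_distro}} controls the replacement error as $O(\sqrt{k_t/t})$ after an integration by parts), and change variables to $w=(z-1)\sqrt{t/k_t}$. The argument of $N$ then becomes $-\bar{\sigma}\bar{\eta}\sqrt{\bar{k}}\,t^{\delta+\beta/2}w+O(\sqrt{t})$, which at the boundary $\delta=-\beta/2$ does \emph{not} vanish; the expectation still tends to $\tfrac12$ only because $N(-cw)-\tfrac12$ is odd in $w$ and is integrated against the symmetric Gaussian density. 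You would need to add this normal-approximation-plus-symmetry step (or an equivalent uniform-convergence argument handling the moving law of $S_t$) to repair the $\beta>1$ branch; as written, that branch of your proof does not go through.
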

\begin{proof}$ $\\
We divide the proof in two sub-cases.
 \begin{center}
      \textbf{ $\boldsymbol{\beta= 1}$ \& $\boldsymbol{-\frac{1}{2}<\delta\leq 0}$}
\end{center}
We study the numerator of $\hat{\xi}_t$ in (\ref{equation::smirk}).
\[
\lim_{t\to 0}\left\{N\left(-\frac{\hat{\sigma}_t\sqrt{t}}{2}\right)-\mathbb{E}\left[N\left(l_t^{S_t}-\bar{\sigma}\frac{\sqrt{S_t t}}{2}\right)\right]=0\right\}\;\;.
\]
We compute the limit thanks to the dominated convergence theorem because the law of $S_t$ does not depend on $t$ and $l_t^z=o(1)$ in this sub-case.
 
\begin{center}
      \textbf{$\boldsymbol{\beta> 1}$  \& $\boldsymbol{-\frac{\beta}{2}\leq \delta\leq 0}$}
\end{center}
We want to prove that
 \begin{equation}
    \mathbb{E}\left[N\left(l_t^{S_t}-\bar{\sigma}\frac{\sqrt{S_t t}}{2}\right)\right] = \frac{1}{2}+o\left(1\right)\;\;.\label{equation::num_skew_lim}
\end{equation}
The equality holds because 
\begin{align}
       &\mathbb{E}\left[N\left(l_t^{S_t}-\bar{\sigma}\frac{\sqrt{S_t t}}{2}\right)\right]\nonumber \\=&\sqrt{\frac{{t}}{{2\pi k_t}}}\int_{0}^\infty dz\; e^{-t \frac{(z-1)^2}{2k_t}}N\left(l_t^{z}-\bar{\sigma}\frac{\sqrt{z t}}{2}\right)\label{equation::integral_skew}\\
       &+\int_{0}^\infty dz \left({\cal P}_{S_t}(z)-\sqrt{\frac{{t}}{{2\pi k_t}}}\;e^{-t \frac{(z-1)^2}{2k_t}}\right)N\left(l_t^{z}-\bar{\sigma}\frac{\sqrt{z t}}{2}\right)\label{equation::integral_skew1}\;\;,
\end{align}
where ${\cal P}_{S_t}$ is the distribution of $S_t$. We study the quantities in (\ref{equation::integral_skew}) and  (\ref{equation::integral_skew1}) separately.

First, we consider (\ref{equation::integral_skew}) 

\begin{align*}
     &\lim_{t\to 0}\sqrt{\frac{{t}}{{2\pi k_t}}}\int_{0}^\infty dz\; e^{-t \frac{(
     z-1)^2}{2k_t}}N\left(l_t^{z}-\bar{\sigma}\frac{\sqrt{z t}}{2}\right)\\ =&\lim_{t\to 0}\frac{1}{\sqrt{2\pi}}\int^\infty_{-\sqrt{\frac{k_t}{t}}}dw\; e^{- \frac{w^2}{2}}N\left(\frac{\varphi_t \sqrt{t}}{\bar{\sigma} \sqrt{1+w\sqrt{k_t/t}}}-\bar{\sigma}\eta_t\sqrt{t\left(1+w\sqrt{k_t/t}\right)}-\bar{\sigma}\frac{\sqrt{t\left(1+w\sqrt{k_t/t}\right)}}{2}\right) \\
     =&\lim_{t\to 0}\frac{1}{\sqrt{2\pi}}\int^\infty_{-\sqrt{\frac{k_t}{t}}}dw\; e^{- \frac{w^2}{2}}N\left(\bar{\sigma}\eta_t \sqrt{t}\left(1-w\sqrt{k_t/t}/2\right)-\bar{\sigma}\eta_t \sqrt{t}\left(1+w\sqrt{k_t/t}/2\right)+O\left(\sqrt{t}\right)\right) \\
     =&\frac{1}{\sqrt{2\pi}}\lim_{t\to 0} \int_{\mathbb{R}}dw\; e^{- \frac{w^2}{2}}\left\{N\left(-\bar{\sigma}\bar{\eta}\sqrt{\bar{k}}t^{\delta+\beta/2}w\right)-\frac{1}{2}+\frac{1}{2}\right\}=\frac{1}{2} \;\;.
    \end{align*}

    The first equality is obtained via a change of the integration variable $(w:=\sqrt{t}(z-1)/\sqrt{k_t})$. The second equality is due to the asymptotic of $\varphi_t t$  in \textbf{Lemma \ref{Lemma::short_time_gamma}} point 1. The third equality holds because of the dominated convergence theorem. The last is trivial because  $\left[N\left(-\bar{\eta}\sqrt{\bar{k}}t^{\delta+\beta/2}w\right)-1/2\right]$ is odd w.r.t. $w$.\\
    Second, we consider (\ref{equation::integral_skew1}) 
 \begin{align*}
 &\int_{0}^\infty dz \left({\cal P}_{S_t}(z)-\sqrt{\frac{{t}}{{2\pi k_t}}}\;e^{-t \frac{(z-1)^2}{2k_t}}\right)N\left(l_t^{z}-\bar{\sigma}\frac{\sqrt{z t}}{2}\right) \\
=&\;\left(\mathbb{P}(S_t<0))-N\left(-\sqrt{\frac{t}{k_t}}\right)\right)\\&+\int_{0}^\infty dz \left(\mathbb{P}(S_t<z)-N\left((z-1)\sqrt{\frac{t}{k_t}}\right)\right)N'\left(l_t^{z}-\bar{\sigma}\frac{\sqrt{z t}}{2}\right)\left(\frac{\varphi_t\sqrt{t}}{2\bar{\sigma}\,z^{3/2}}+\frac{\bar{\sigma} \sqrt{t}\eta_t+\bar{\sigma}\sqrt{t}/2}{2\sqrt{z}} \right)=o(1)\;\;.
 \end{align*}
The first equality is due to integration by part. The second to the fact that i)  $\mathbb{P}(S_t<0)=0$, \\ii) $N\left(-\sqrt{\frac{t}{k_t}}\right)$ go to zero as $t$ goes to zero, and iii) \begin{align*}
        &\left|\int_{0}^\infty dz\left(N\left((z-1)\sqrt{\frac{t}{k_t}}\right)-\mathbb{P}(S_t<z)\right)N'\left(l_t^{z}-\bar{\sigma}\frac{\sqrt{z t}}{2}\right)\left(\frac{\varphi_t\sqrt{t}}{2\bar{\sigma}\,z^{3/2}}+\frac{\bar{\sigma} \sqrt{t}\eta_t+\bar{\sigma}\sqrt{t}/2}{2\sqrt{z}} \right) \right|\\ \leq&
 \frac{2-\alpha}{1-\alpha} \sqrt{\frac{k_t}{t}}\int_{0}^\infty dz N'\left(l_t^{z}-\bar{\sigma}\frac{\sqrt{z t}}{2}\right)\left(\frac{\varphi_t\sqrt{t}}{2\bar{\sigma}\,z^{3/2}}+\frac{\bar{\sigma} \sqrt{t}\eta_t+\bar{\sigma}\sqrt{t}/2}{2\sqrt{z}} \right)\\
  =& \frac{2-\alpha}{1-\alpha} \sqrt{\frac{k_t}{t}}=O\left(\sqrt{\frac{k_t}{t}}\right)\;\;,
    \end{align*}
    where the inequality is due to \textbf{Lemma \ref{lemma::limit_distro}} and the first equality is due the fact that  
    
    \begin{align*}
 &\int_{0}^\infty dz N'\left(l_t^{z}-\bar{\sigma}\frac{\sqrt{z t}}{2}\right)\left(\frac{\varphi_t\sqrt{t}}{2\bar{\sigma}\,z^{3/2}}+\frac{\bar{\sigma} \sqrt{t}\eta_t+\bar{\sigma}\sqrt{t}/2}{2\sqrt{z}} \right)= -\left.N\left(l_t^{z}-\bar{\sigma}\frac{\sqrt{z t}}{2}\right)\right|_0^\infty =1\;\;.
    \end{align*}
    This proves (\ref{equation::num_skew_lim}).\\
    It is now possible to compute the short-time limit of the skew term
 \[
\lim_{t\to 0}\left\{\left(N\left(-\frac{\hat{\sigma}_t\sqrt{t}}{2}\right)-\mathbb{E}\left[N\left(l_t^{S_t}-\bar{\sigma}\frac{\sqrt{S_t t}}{2}\right)\right)\right]\right\}= 0\;\; \qedhere
\]
\end{proof}

\begin{proposition}$ $\\\label{proposition::skew45}
	For Case 4: $\beta<1$ and $\delta=-1/2$, the skew term is \[\hat{\xi}_0=-\sqrt{\frac{\pi}{2}}\;\;.\] For Case 5: $\beta=1$ and $\delta=-1/2$ the skew term is \begin{equation}
	\hat{\xi}_0 =  -\sqrt{\frac{\pi}{2}} \; \mathbb{E} \left[ erf \left(\bar{\sigma}\bar{\eta}\;  r(S_t)\right)  \right]  \label{equation::skew_delta0.5}\;\;,
	\end{equation}
	where $r(S_t) := \sqrt{2} ( 1/\sqrt{S_t}-\sqrt{S_t} )$. 
\end{proposition}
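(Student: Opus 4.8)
The plan is to reduce the whole statement to a single short-time limit. By the remark following \textbf{Lemma \ref{lemma::skew_formula}}, the denominator of \eqref{equation::smirk} satisfies $N'\!\left(-\hat{\sigma}_t\sqrt{t}/2\right)\to 1/\sqrt{2\pi}$, and since $\hat{\sigma}_t\sqrt{t}=o(1)$ by \textbf{Lemma \ref{lemma::short_time_IV}} we also have $N\!\left(-\hat{\sigma}_t\sqrt{t}/2\right)\to \tfrac12$. Hence, as soon as the limit below exists,
\[
\hat{\xi}_0=\sqrt{2\pi}\left(\tfrac12-\lim_{t\to0}\mathbb{E}\!\left[N\!\left(l_t^{S_t}-\bar{\sigma}\tfrac{\sqrt{S_t t}}{2}\right)\right]\right),
\]
so everything reduces to evaluating $\lim_{t\to0}\mathbb{E}\!\left[N\!\left(l_t^{S_t}-\bar{\sigma}\sqrt{S_t t}/2\right)\right]$ when $\delta=-1/2$.

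Next I would pin down the pointwise limit of the argument. For $\delta=-1/2$ one has $\eta_t\sqrt{t}=\bar{\eta}$, hence $\bar{\sigma}\eta_t\sqrt{zt}=\bar{\sigma}\bar{\eta}\sqrt{z}$; moreover $\eta_t k_t=\bar{\eta}\bar{k}\,t^{\beta-1/2}\to0$ throughout Cases 4 and 5, so by \textbf{Lemma \ref{Lemma::short_time_gamma}} point 1 we get $\varphi_t t=\bar{\sigma}^2\eta_t t\,(1+o(1))$ and therefore $\varphi_t\sqrt{t}\to\bar{\sigma}^2\bar{\eta}$. Recalling \eqref{equation::lt}, for every fixed $z>0$,
\[
l_t^z-\bar{\sigma}\tfrac{\sqrt{zt}}{2}=-\bar{\sigma}\bar{\eta}\sqrt{z}+\frac{\varphi_t\sqrt{t}}{\bar{\sigma}\sqrt{z}}-\bar{\sigma}\tfrac{\sqrt{zt}}{2}\;\xrightarrow[t\to0]{}\;\bar{\sigma}\bar{\eta}\!\left(\frac{1}{\sqrt{z}}-\sqrt{z}\right)=:\ell(z).
\]

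For Case 5 ($\beta=1$) the law of $S_t$ does not depend on $t$, so dominated convergence (the integrand is bounded by $1$, the law puts no mass at $0$) yields $\mathbb{E}[N(l_t^{S_t}-\bar{\sigma}\sqrt{S_t t}/2)]\to\mathbb{E}[N(\ell(S_t))]$; inserting this into the displayed expression for $\hat{\xi}_0$ and using the elementary identity $N(x)=\tfrac12+\tfrac12\,erf(x/\sqrt{2})$ rewrites the limit exactly in the form \eqref{equation::skew_delta0.5}, the expectation being finite because $|erf|\le1$ and $S_t>0$ almost surely. For Case 4 ($\beta<1$, hence in fact $\beta>1/2$) the law of $S_t$ degenerates with $t$, so dominated convergence is unavailable; instead I would prove directly that $\mathbb{E}[N(l_t^{S_t}-\bar{\sigma}\sqrt{S_t t}/2)]\to1$, which by the displayed formula gives $\hat{\xi}_0=\sqrt{2\pi}(\tfrac12-1)=-\sqrt{\pi/2}$ (consistent with formally letting $S_t\to0$, i.e. $erf\to1$, in \eqref{equation::skew_delta0.5}). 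Fix $\varepsilon>0$: since $\varphi_t\sqrt{t}\to\bar{\sigma}^2\bar{\eta}>0$ there is $c>0$ with $\varphi_t\sqrt{t}\ge c$ for $t$ small, hence for all $z\le z^*$,
\[
l_t^z-\bar{\sigma}\tfrac{\sqrt{zt}}{2}\ge \frac{c}{\bar{\sigma}\sqrt{z^*}}-\bar{\sigma}\bar{\eta}\sqrt{z^*}-\bar{\sigma}\tfrac{\sqrt{z^* t}}{2},
\]
and the right-hand side tends to $+\infty$ as $z^*\downarrow0$; choosing $z^*$ small and then $t$ small makes $N$ of this quantity at least $1-\varepsilon/2$ on $\{S_t\le z^*\}$, while $\mathbb{P}(S_t\le z^*)\to1$ since $S_t\to0$ in distribution (\textbf{Lemma \ref{lemma:conv_distr}} point 1). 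With $N\le1$ this forces $\lim_{t\to0}\mathbb{E}[N(\cdot)]=1$.

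The asymptotics of $\varphi_t t$, the $N$–$erf$ identity and the arithmetic are all routine. The one genuine obstacle is Case 4: dominated convergence fails because the law of $S_t$ collapses to a point mass, and one must marry this collapse with the blow-up of $l_t^{S_t}-\bar{\sigma}\sqrt{S_t t}/2$ to $+\infty$ on the event carrying almost all the mass — a coupling that works precisely because $\varphi_t\sqrt{t}$ stays bounded away from $0$ at short time.
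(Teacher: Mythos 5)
Your proposal is correct, and for the reduction to the numerator of (\ref{equation::smirk}) and for Case 5 it coincides with the paper's own argument (the denominator tends to $1/\sqrt{2\pi}$ by \textbf{Lemma \ref{lemma::short_time_IV}}, dominated convergence applies because the law of $S_t$ is $t$-independent when $\beta=1$, and the $N$--$erf$ identity gives the stated form). Where you genuinely depart from the paper is Case 4. The paper handles the collapse of the law of $S_t$ by invoking \textbf{Lemma \ref{lemma::Limit}}, which rests on the Ascoli--Arzel\`a-type \textbf{Lemma \ref{lemma_general_limit}}: one checks that $z\mapsto N\left(l_t^{z}-\bar{\sigma}\sqrt{zt}/2\right)$ is uniformly Lipschitz with a tail majorant $h(z)$, extracts uniform convergence on compacts, and passes to the limit against the weakly convergent laws, interpreting the limit functional at the degenerate point mass $X=0$. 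You replace all of this with a direct truncation-and-squeeze: on $\{S_t\le z^*\}$ the argument of $N$ is bounded below by a quantity diverging to $+\infty$ as $z^*\downarrow 0$ (because $\varphi_t\sqrt{t}$ stays bounded away from zero, which you correctly trace to $\beta>1/2$, itself forced by the admissibility constraint $\delta>-\beta$), while $\mathbb{P}(S_t\le z^*)\to 1$ by \textbf{Lemma \ref{lemma:conv_distr}} and $N\le 1$ closes the squeeze. This is shorter and more elementary, and since \textbf{Lemma \ref{lemma::Limit}} is used nowhere else in the paper, your route would let one dispense with the equicontinuity machinery entirely; its only cost is that it is tailored to the degenerate limit $S_t\to 0$ rather than to a general limit law.

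One small caveat on Case 5: using the standard identity $N(x)=\tfrac12+\tfrac12\,erf\left(x/\sqrt{2}\right)$, as you do, the limit comes out as $-\sqrt{\pi/2}\;\mathbb{E}\left[erf\left(\bar{\sigma}\bar{\eta}\left(1/\sqrt{S_t}-\sqrt{S_t}\right)/\sqrt{2}\right)\right]$, which differs from (\ref{equation::skew_delta0.5}) by a factor $2$ inside the $erf$, since the paper takes $r(z)=\sqrt{2}\left(1/\sqrt{z}-\sqrt{z}\right)$. The discrepancy originates in the paper's stated identity $erf(z)=2N(z/\sqrt{2})-1$ (the standard one is $erf(z)=2N(z\sqrt{2})-1$), so your computation is the internally consistent one; the sign, finiteness and the bound $-\sqrt{\pi/2}\le\hat{\xi}_0\le 0$ are unaffected.
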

\begin{proof}$ $\\
	We prove separately the two Cases.
	\begin{center}
		\textbf{$\boldsymbol{\delta=-\frac{1}{2}}$ \& $\boldsymbol{\beta< 1}$}
	\end{center}
	Thanks to \textbf{Lemma \ref{lemma::Limit}}, the limit of the numerator of $\hat{\xi}_t$ in (\ref{equation::smirk}) can be computed simply, 
	\[\lim_{t\to 0}\left(N\left(-\frac{\hat{\sigma}_t\sqrt{t}}{2}\right)-\mathbb{E}\left[N\left(l_t^{S_t}-\bar{\sigma}\frac{\sqrt{S_t t}}{2}\right)\right]\right)=-\frac{1}{2}\;\;.\]
	Thus,
	\[
	\hat{\xi}_0 =\lim_{t \to 0}\frac{N\left(-\frac{\hat{\sigma}_t\sqrt{t}}{2}\right)-\mathbb{E}\left[N\left(l_t^{S_t}-\bar{\sigma}\frac{\sqrt{S_t t}}{2}\right)\right]}{ N'\left(-\frac{\hat{\sigma}_t\sqrt{t}}{2}\right)}=-\sqrt{\frac{\pi}{2}}\;\;.
	\]
	
	\bigskip
	
	\begin{center}
		\textbf{$\boldsymbol{\delta=-\frac{1}{2}}$ \& $\boldsymbol{\beta= 1}$}
	\end{center}
	We compute the limit in $t=0$ of the numerator of $\hat{\xi}_t$ in (\ref{equation::smirk})
	\begin{equation*}
	\lim_{t\to 0}\left(N\left(-\frac{\hat{\sigma}_t\sqrt{t}}{2}\right)-\mathbb{E}\left[N\left(l_t^{S_t}-\bar{\sigma}\frac{\sqrt{S_t t}}{2}\right)\right]\right)=\mathbb{E}\left[1/2-N\left(\bar{\sigma}\bar{\eta}  \left(1/\sqrt{S_t}-\sqrt{S_t} \right)\right)\right] \;\;.
	\end{equation*}
	We obtain the equality thanks to the dominated convergence theorem, 
because  the law of $S_t$ is constant in time.  
	We recall that $erf(z) =2N(z/\sqrt{2})-1$, 
	substituting in (\ref{equation::smirk}), we obtain (\ref{equation::skew_delta0.5})
\end{proof}

Equation (\ref{equation::skew_delta0.5}) is one of the major results of the paper. Let us stop and comment. 

First, let us notice that $\hat{\xi}_0$ in (\ref{equation::skew_delta0.5}),
is a generic function of the couple of positive parameters 
$\bar{\sigma} \bar{\eta}$ and $\bar{k}$; in 
particular the $erf$ function is odd in its argument and 
$r: \mathbb{R}^+ \to \mathbb{R}$.
Moreover $\hat{\xi}_0$ depends on the parameter
$\alpha \in [0,1)$
that selects the truncated additive process of interest. 

\smallskip

Second, 
\[
-\sqrt{\frac{\pi}{2}} \le \hat{\xi}_0 \le 0
\]
i.e. the minimum value for the skew term is $- \sqrt{\pi/{2}}$, its value in Case 4.

To show the upper bound, we can rewrite  \begin{align*}
\mathbb{E} \left[ erf \left(\bar{\sigma}\bar{\eta}\;  d(S_t)\right)  \right] &= \int_0^\infty dz\;{\cal P}_{S_t}(z) \, erf \left(\bar{\sigma}\bar{\eta}\;  r(z)\right) =  \int_0^1dz\,\left({\cal P}_{S_t}(z)-\frac{{\cal P}_{S_t}(z)}{z^2}\right) \, erf \left(\bar{\sigma}\bar{\eta}\;  r(z)\right)\,\;,
\end{align*}
where the second equality is due to the change of variable $w=1/z$, and second to $r(1/w)=-r(w)$ and to the fact that $erf(z)$ is odd. 
We also observe that 
$erf \left(\bar{\sigma}\bar{\eta}\;  r(z)\right)> 0$ in $(0,1)$. 

For the two cases where the distribution of $S_t$ is known analytically 
$\alpha=0$ (VG) and $\alpha=1/2$ (NIG), 
we can prove that the skew term $ \hat{\xi}_0 $ in (\ref{equation::skew_delta0.5}) is negative for non zero $\bar{\sigma}\bar{\eta}$ and $\bar{k}$
\citep[for the expression of the Gamma and Inverse Gaussian laws see, e.g.,][Ch.4, p.128]{Cont}.
In both cases we can prove that $\left({\cal P}_{S_t}(z)-\frac{{\cal P}_{S_t}(1/z)}{z^2}\right) > 0$ in $(0,1)$;
recall that ${\cal P}_{S_t}(z)$ does not depend from time because $\beta=1$.

In the $\alpha=0$ case, $S_t$ has the law of a Gamma random variable
\[
{\cal P}_{S_t}(z)-\frac{{\cal P}_{S_t}(1/z)}{z^2} =  {\frac {1}{\bar{k}^{1/\bar{k}}\Gamma (1/\bar{k})}} z^{1/\bar{k}} e^{-z/\bar{k}  } \left( 1- \frac{e^{-1/\bar{k} \, \left( 1/z- z \right)}}{z^{2/ \bar{k}} } \right) >0\;\;,
\]
where the inequality is true in $(0,1)$ because $1- \frac{e^{-1/\bar{k} \, \left( 1/z- z \right)}}{z^{2/ \bar{k}} }>0 $ or equivalently 
$ 1/z- z + 2 \ln z >0$. The last inequality is trivial $\forall z \in (0,1)$, 
because it is equal to zero for $z=1$ and its derivative is negative.

In the  $\alpha=1/2$ case,  $S_t$ has the law of an Inverse Gaussian random variable 
\[
{\cal P}_{S_t}(z)-\frac{{\cal P}_{S_t}(1/z)}{z^2} = \frac{1}{\sqrt{2\pi \bar{k}}}e^{-r(z)^2/(2\bar{k})}\left(\frac{1}{z^{3/2}}-\frac{1}{\sqrt{z}} \right)>0\;\;,
\] where the inequality is true because 
$\frac{1}{z^{3/2}}-\frac{1}{\sqrt{z}}>0,  \forall z \in (0,1)$.\\

In all other cases, we compute numerically the skew term $ \hat{\xi}_0 $
for different admissible values of $\bar{k}, \bar{\sigma}\bar{\eta} \in \mathbb{R}^+$  and $\alpha \in [0,1)$, 
by means of inversion of the characteristic function of $S_t$, 
showing that it is either negative or equal to zero. 
In Figure \ref{figure::Surf_expexted}, we plot the numerical estimation of the skew term for $\bar{\sigma}\bar{\eta}$ and $\bar{k}$ below 3 
(an interval in line with the situation generally observed in market data)
and for a grid of four values of $\alpha$ ($\alpha=0, 1/4, 1/2, 3/4$);
in all cases the skew term $ \hat{\xi}_0 $ 
looks rather similar: equal to zero on the boundaries ($\bar{k}=0$ and $\bar{\sigma}\bar{\eta}=0$), a negative quantity in all other cases
and a decreasing function w.r.t. both $\bar{k}$ and $\bar{\sigma}\bar{\eta}$.
In Figure \ref{figure::Section_expexted}, we plot also the skew term for 
the same four values of $\alpha$,
varying $\bar{ k}$ with $\bar{\sigma}\bar{\eta} =1$ (on the left) and 
varying $\bar{\sigma}\bar{\eta} $ for $\bar{k}=1$ (on the right):
all plots look rather similar with a decreasing $ \hat{\xi}_0 $. 
\begin{center}
	\begin{minipage}[t]{1\textwidth}
		\includegraphics[width=\textwidth]{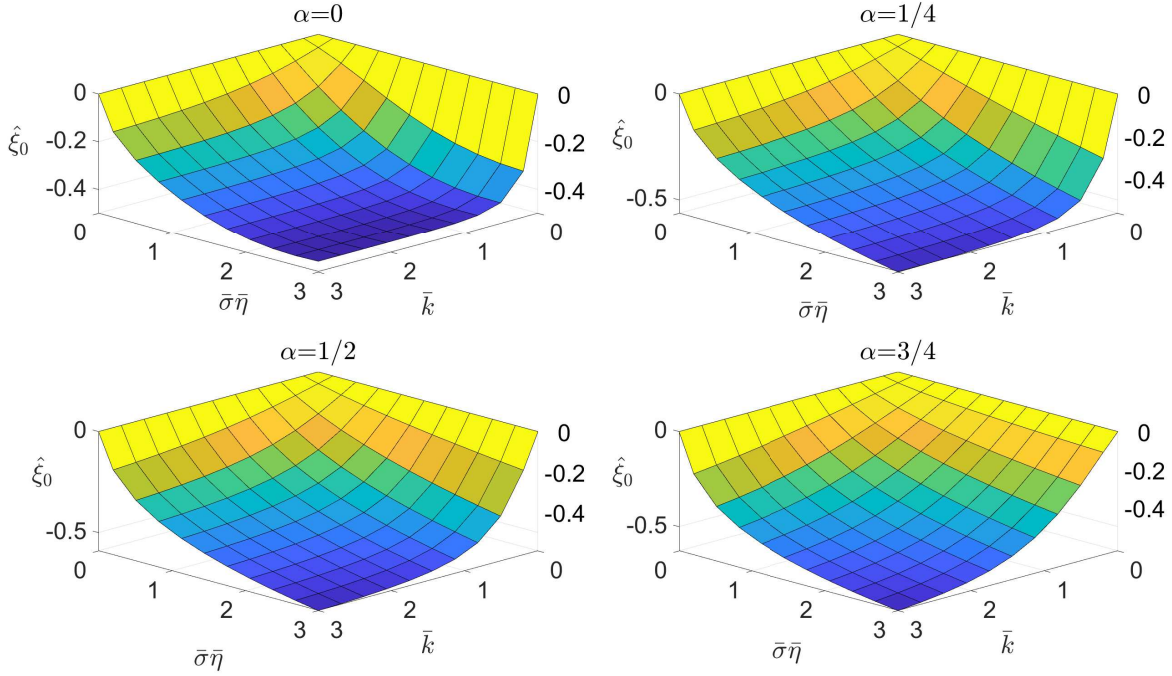}
		\captionof{figure}{\small ATS skew term  $ \hat{\xi}_0 $ for 
			$\left\{\beta=1,\, \delta=-1/2 \right\}$.
			We report $ \hat{\xi}_0 $ for four values of $\alpha$: 
			$\alpha=0$ in the upper left corner, $\alpha=1/4$ in the upper right corner, $\alpha=1/2$ in the lower left corner and $\alpha=3/4$ in the lower right corner. 
			We plot the skew  for $\bar{ k}, \bar{\sigma}\bar{\eta} \in [0,3]$. 
			In all cases the skew is negative and decreasing w.r.t. $\bar{ k}$ and $\bar{\sigma}\bar{\eta} $.}
		\label{figure::Surf_expexted}
	\end{minipage} 
\end{center}
\begin{center}
	\begin{minipage}[t]{1\textwidth}
		\includegraphics[width=\textwidth]{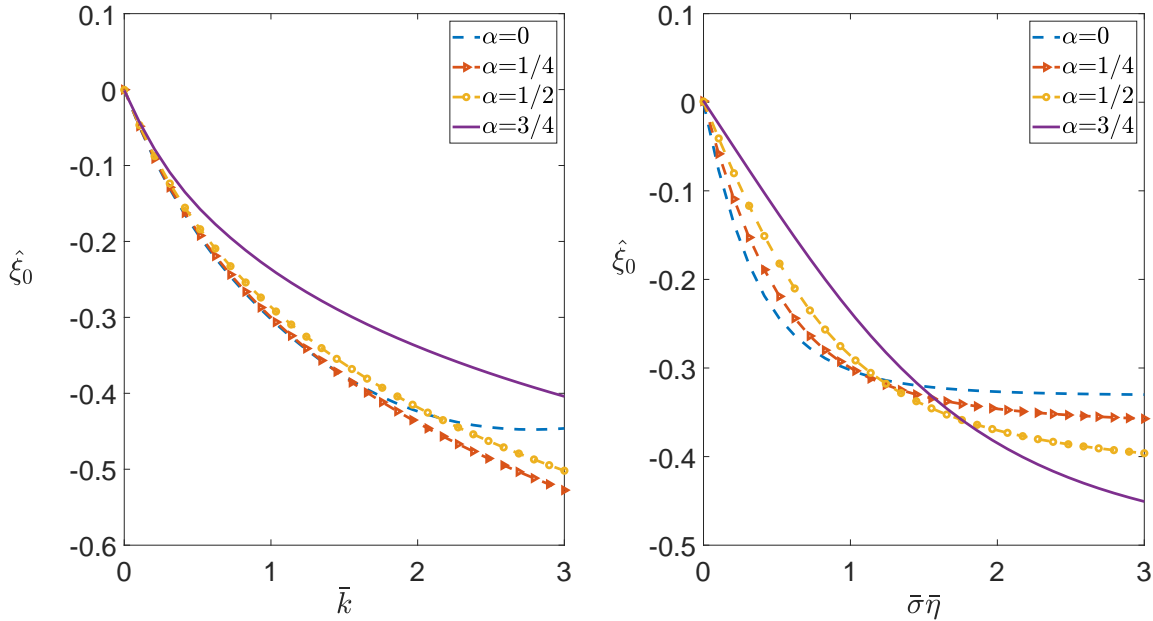}
		\captionof{figure}{\small ATS skew term  $ \hat{\xi}_0 $ for $\beta=1$ and  $\delta=-1/2$ for $\alpha=0$ (dashed blue  line), $\alpha=1/4$ (red triangles), $\alpha=1/2$ (orange circles) and, $\alpha=3/4$ (continuous violet line). 
			We plot the skew  for $\bar{ k}\in [0,3]$ with $\bar{\sigma}\bar{\eta} =1$ (on the left) and 
			for  $\bar{\sigma}\bar{\eta} \in [0,3]$ for $\bar{k}=1$ (on the right). 
			In all cases the skew is decreasing w.r.t. $\bar{ k}$ and $\bar{\sigma}\bar{\eta} $.}
		\label{figure::Section_expexted}
	\end{minipage} 
\end{center}

\smallskip

Finally, let us emphasize that the limits of $ \hat{\xi}_0 $ are zero
for $\bar{\sigma} \bar{\eta}$ and $\bar{k}$ that go to zero. 

On the one hand, recall that the law of $S_t$, ${\cal P}_{S_t}$, 
does not depend of $\bar{\sigma} \bar{\eta}$. By the dominated convergence theorem with bound ${\cal P}_{S_t}$, we have that 
\[
\lim_{\bar{\sigma}  \bar{\eta}\to 0}\mathbb{E}\left[erf \left(\bar{\sigma}\bar{\eta} \, r({S_t})\right)\right] = 0\;\;.  
\]

On the other hand, by \citet[][Th.B.9, p.308]{kijima1997markov}, we have that $S_t$ converges in distribution to $1$ as $\bar{k}$ goes to zero because \[\lim_{\bar{k}\to 0}{\cal L}_t(u;\;k_t,\;\alpha) =    \lim_{t\to 0} e^{	 \frac{1}{\bar{k}}
	\frac{1-\alpha}{\alpha}
	\left \{1-		\left(1+\frac{u \; \bar{k}}{(1-\alpha)}\right)^\alpha \right \} }=e^{-u}\,\,.\]
We are computing the expected value of a bounded function of $S_t$ that does not depend of $\bar{k}$. Thus, by definition of convergence in distribution,
\[\lim_{\bar{k}\to 0}\mathbb{E}\left[erf\left(\bar{\sigma}\bar{\eta}  \sqrt{2} \, \left(1/\sqrt{S_t}-\sqrt{S_t} \right)\right)\right] = 0\;\;.  \]

\section{Main Result}
In the following theorem, 
we present the main results of this paper. We prove that if and only if $\beta=1$ and $\delta= -\frac{1}{2}$ the ATS has a positive and constant short-time implied volatility $\hat{\sigma}_0$ and a negative and constant short-time skew $\hat{\xi}_0$. We  point out that a finite skew w.r.t. $y$ correspond to a skew that goes as $\frac{1}{\sqrt{t}}$ at short-time w.r.t. the moneyness $x$. The proof is based on the propositions of Sections 3 and 4.
\begin{theorem}\label{theorem:main_result}
	The ATS short-time implied volatility behaves as described in Table \ref{tab:Results}.
\end{theorem}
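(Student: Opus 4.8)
The statement to prove, \textbf{Theorem \ref{theorem:main_result}}, is a consolidation of the Propositions of Sections 3 and 4: the plan is essentially bookkeeping. The two things that actually require attention are (i) checking that Cases 1--5 partition the admissible parameter region of \textbf{Theorem \ref{theorem:semplified_f}}, so that every admissible $(\beta,\delta)$ falls in exactly one row of Table \ref{tab:Results}, and (ii) upgrading ``$\hat{\sigma}_0$ is finite'' (the wording of \textbf{Propositions \ref{proposition:impvol3}} and \textbf{\ref{proposition:impvol45}}) to ``$\hat{\sigma}_0>0$'' (the wording of Table \ref{tab:Results}).

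For the partition I would split on $\beta$. On the region $\{\beta=\delta=0\}\cup\{0\le\beta\le 2/(2-\alpha),\ -\min(\beta,(1-\beta(1-\alpha))/\alpha)<\delta\le 0\}$: for $0<\beta<1$ one has $\min(\beta,(1-\beta(1-\alpha))/\alpha)=\beta$ (the inequality $(1-\beta(1-\alpha))/\alpha\ge\beta$ is equivalent to $\beta\le 1$), and the admissible strip $(-\beta,0]$ is split into $(-\min(\tfrac12,\beta),0]$ (Case 1), the point $\{-\tfrac12\}$ when $\beta>\tfrac12$ (Case 4), and $(-\beta,-\tfrac12)$ when $\beta>\tfrac12$ (Case 2); for $\beta=1$ the strip $(-1,0]$ splits into $(-\tfrac12,0]$ (Case 3), $\{-\tfrac12\}$ (Case 5) and $(-1,-\tfrac12)$ (Case 2); for $1<\beta\le 2/(2-\alpha)$ one checks that $(1-\beta(1-\alpha))/\alpha\ge\beta/2$ (this inequality is exactly $\beta\le 2/(2-\alpha)$, i.e. the admissibility bound, with equality at the endpoint), so the strip $(-(1-\beta(1-\alpha))/\alpha,0]$ splits into $[-\beta/2,0]$ (Case 3) and $(-(1-\beta(1-\alpha))/\alpha,-\beta/2)$ (Case 2); finally $\{\beta=\delta=0\}\in$ Case 1. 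Pairwise disjointness is immediate from these descriptions, and the $\alpha=0$ specialization is covered since then $\beta\le 1$ and the second admissibility condition reads $-\beta<\delta\le 0$.

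Then the conclusion is read off term by term: Case 1 $\Rightarrow\hat{\sigma}_0=0$ by \textbf{Proposition \ref{proposition:impvol1}}; Case 2 $\Rightarrow\hat{\sigma}_0=\infty$ by \textbf{Proposition \ref{proposition:impvol2}}; Cases 3--5 $\Rightarrow\hat{\sigma}_0$ finite by \textbf{Propositions \ref{proposition:impvol3}} and \textbf{\ref{proposition:impvol45}}. For the strictly-positive half, I would extract from the lower bounds built in those two proofs that the ATS price is in fact bounded below by a strictly positive multiple of $\sqrt t$ (e.g. $\mathbb{E}[c_t(S_t,0)]\ge\sqrt{\varphi_t t/(8\pi\eta_t)}+o(\sqrt t)$ with $\varphi_t/\eta_t\to\bar{\sigma}^2>0$, $\ge\sqrt{t/(8\pi)}\,\bar{\sigma}+o(\sqrt t)$, $\ge\bar{\sigma}\sqrt{t/(4\pi)}\,\mathbb{P}(S_t\in[\tfrac12,\tfrac32])+o(\sqrt t)$ in the sub-cases of \textbf{Proposition \ref{proposition:impvol3}}, and the analogous bound of \textbf{Proposition \ref{proposition:impvol45}}); together with $c^{B}_t(\hat{\sigma}_t,0)=\hat{\sigma}_t\sqrt{t/(2\pi)}+o(\hat{\sigma}_t\sqrt t)$ from (\ref{equation::asymptotic_impvol}) this forces $\hat{\sigma}_0>0$. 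For the skew rows: Case 3 $\Rightarrow\hat{\xi}_0=0$ by \textbf{Proposition \ref{proposition::skew3}}; Case 4 $\Rightarrow\hat{\xi}_0=-\sqrt{\pi/2}$ and Case 5 $\Rightarrow\hat{\xi}_0=-\sqrt{\pi/2}\,\mathbb{E}[erf(\bar{\sigma}\bar{\eta}\,r(S_t))]$ by \textbf{Proposition \ref{proposition::skew45}}, the latter satisfying $-\sqrt{\pi/2}\le\hat{\xi}_0<0$ for nonzero $\bar{\sigma}\bar{\eta},\bar{k}$ by the discussion following that Proposition (analytically for $\alpha=0$ and $\alpha=1/2$, and numerically for the remaining $\alpha$, Figures \ref{figure::Surf_expexted}--\ref{figure::Section_expexted}). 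Collecting these entries is precisely Table \ref{tab:Results}. There is no genuine analytic obstacle here; the only delicate points are the partition identity above --- in particular that the admissibility bound $\beta\le 2/(2-\alpha)$ is exactly what guarantees Cases 2 and 3 jointly exhaust the strip for $\beta>1$ --- and the fact that the strict sign $\hat{\xi}_0<0$ in Case 5 for general $\alpha\in(0,1)\setminus\{\tfrac12\}$ rests on numerical evidence rather than a closed-form argument, which should be stated explicitly.
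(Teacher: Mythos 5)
Your proposal is correct and follows essentially the same route as the paper, whose proof of this theorem is likewise a direct assembly of Propositions \ref{proposition:impvol1}--\ref{proposition:impvol45} and \ref{proposition::skew3}--\ref{proposition::skew45}. The two points you add explicitly --- verifying that Cases 1--5 partition the admissible region of \textbf{Theorem \ref{theorem:semplified_f}}, and upgrading ``finite'' to ``$\hat{\sigma}_0>0$'' via the strictly positive $O(\sqrt{t})$ lower bounds --- are left implicit in the paper but are checked correctly in your argument.
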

\begin{proof}
We prove that, for Case 1, $\hat{\sigma}_0=0$ in \textbf{Proposition \ref{proposition:impvol1}}.
We prove that, for Case 2, $\hat{\sigma}_0=\infty$ in \textbf{Proposition \ref{proposition:impvol2}}.
We prove that, for Cases 3, 4 and 5, $\hat{\sigma}_0$ is finite in \textbf{Proposition \ref{proposition:impvol3}} and \textbf{Proposition \ref{proposition:impvol45}}.\\ Moreover, in \textbf{Proposition \ref{proposition::skew3}} we demonstrate that, for Case 3, $\hat{\xi}_0=0$ and  in \textbf{Proposition \ref{proposition::skew45}} we demonstrate that, for Case 4, $\hat{\xi}_0=-\sqrt{\frac{\pi}{2}}$ and that, for Case 5, $\hat{\xi}_0$ is negative and finite.
\end{proof}

\section{Conclusions}

In this paper, we have analyzed the short-time behavior of the implied volatility
 of a class of pure jumps additive processes, the ATS family. 

An excellent calibration of the equity implied volatility surface has been achieved by the ATS, a class of power-law scaling additive processes \citep[see, e.g. ][]{azzone2019additive}. 
This class of processes builds upon the power-law scaling parameters $\beta$, related to the variance of jumps, and $\delta$ related to the smile asymmetry. \\

Several are the results achieved in this study.
First, we have deduced 
for this family of pure-jump additive processes the behavior of the short-time ATM implied volatility $\hat{\sigma}_t$ and the skew term $\hat{\xi}_t$  
over the region of admissible parameters (cf. \textbf{Theorem \ref{theorem:semplified_f}}). 
We get this result constructing some relevant bounds for $\hat{\sigma}_t$ and obtaining the expression of $\hat{\xi}_t$, cf. equation (\ref{equation::smirk}), via  the implicit function theorem.

Second, we have proven that
only the scaling parameters observed in empirical analysis ($\beta=1$ and $\delta=-1/2$) are compatible with the implied volatility observed in the equity market (cf. \textbf{Theorem \ref{theorem:main_result}} ).
Hence, we have demonstrated that it exists a pure-jump additive process (an exponential ATS) that, differently from the L\'evy case,  presents the two features observed in market data:
not only a finite and positive short-time implied volatility but also 
 a short-time skew proportionally inverse to the square root of the time-to-maturity. \\


\section*{Acknowledgements}
We thank Peter Carr  for an enlightening discussion on this topic. We thank E. Alòs, M. Fukasawa and all participants to the workshop New Challenges in Quantitative Finance in Barcelona. R.B. feels indebted to Peter Laurence for several helpful and wise suggestions on the subject.

\clearpage
\bibliography{sources}
\bibliographystyle{tandfx}
	\section*{Notation}
	
	\begin{center}

		\begin{tabular} {|c|l|}
			\toprule
			\textbf{Symbol}& \textbf{Description}\\ \bottomrule
			$B_t$ & discount factor between  date 0 and $t$ \\
			$c^{B}_t\left({\cal I}_t(y),y\right))$& Black call option price\\
			$c_t(S_t,t)$& quantity inside the ATS call expected value\\
			$C_t\left(x\right)$ & call option price at value date with maturity $t$ and moneyness $x$\\
			$\left\{f_t\right\}_{t\geq 0}$& sequence of random variables that models the forward exponent \\
			$F_0(t)$&  price  at time $0$ of a Forward contract with maturity $t$\\
			$g$ & standard normal random variable \\
			$I_t(x)$ & Black implied volatility with maturity $t$ and moneyness $x$ \\
			${\cal I}_t(y)$ & Black implied volatility  with maturity $t$ and {\it moneyness degree} $y$\\
			$\mathbb{1}_*$ & indicator function of the set $*$\\ 
			$k_t$ &  variance of jumps of ATS\\
			$\bar{k}$ & constant part of  variance of jumps of ATS $k_t$\\
			$K$ & option strike price\\
			$l_t^z$ & quantity defined in equation (\ref{equation::lt})\\
			$N(*)$ & standard normal cumulative distribution function evaluated in $*$\\
			$N'(*)$ & standard normal probability density function evaluated in $*$\\
			$p^{B}_t\left({\cal I}_t(y),y\right))$& Black put option price\\
			$p_t(S_t,t)$& quantity inside the ATS put expected value\\
			$P_t\left(x\right)$ & put option price at value date with maturity $t$ and moneyness $x$\\
			$\left\{S_t\right\}_{t\geq 0}$& sequence of positive random variables \\
			$t$ & time-to-maturity\\
			$x$&  option moneyness, $x=\log \frac{K}{F_0(t)}$\\
			$y$&  {\it moneyness degree}, $y:=x/\sqrt{t}$\\
			$ \alpha$ & tempered stable parameter of ATS\\
			$ \beta$& scaling parameter of $k_t$\\
			$\Gamma(*)$ & gamma function evaluated in $*$\\
			$\delta$ & scaling parameter of ${\eta}_t$\\
				$\eta_t$ &  skew parameter of ATS\\
			$\bar{\eta}$& constant part of the  skew parameter of ATS\\
			$\hat{\xi}_t$& implied volatility skew term\\
			$\hat{\xi}_0$& short-time skew term, i.e. limit for $t$ that goes to zero of $\hat{\xi}_t$\\
			$\bar{ \sigma}$ &  constant diffusion parameter of ATS\\
			$\hat{\sigma}_t$& ATM implied volatility, equal to ${\cal I}_t(0)$   \\
			$\hat{\sigma}_0$&  short-time ATM implied volatility, i.e. limit for $t$ that goes to zero of $\hat{\sigma}_t$  \\
			${\cal P}_{S_t}$ & probability density function of $S_t$\\
			$\varphi_t$&  deterministic drift term of ATS \\
		
			\bottomrule
		\end{tabular}
		\thispagestyle{plain}
	\end{center}	
\begin{appendices}

\section{Basic properties}

We report some useful results for the proofs in Section 3. In the following lemmas we consider $S_t$ of \textbf{Definition \ref{definition::S_t}} with Laplace transform ${\cal L}_t(u;\,k_t;\;\alpha)$,  at a given time $t>0$.  The proofs that follow are for the $\alpha\in (0,1)$ case. Similar proofs hold in  the $\alpha=0$ case.

\begin{lemma}
\label{theorem::Fractional_Moments}
   Let $s\in (0,1)$, then
    \begin{equation}\label{equation::fractional_moments}
        \mathbb{E}\left[S_t^s\right]=\int_{0}^{\infty}\frac{{\cal L}_t(u;\,k_t;\;\alpha)-1}{\Gamma(-s)u^{s+1}}du\;\;,
    \end{equation}
where $\Gamma$ is the Gamma function.
\end{lemma}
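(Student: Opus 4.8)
The plan is to reduce the claim to a classical integral representation of the power function and then to exchange expectation and integration. The starting point is the elementary identity: for every $x \ge 0$ and $s\in(0,1)$,
\[
x^s = \frac{1}{\Gamma(-s)}\int_0^\infty \frac{e^{-ux}-1}{u^{s+1}}\,du\;\;.
\]
For $x=0$ both sides vanish. For $x>0$ I would change variable $v=ux$ to get $x^s\int_0^\infty\frac{e^{-v}-1}{v^{s+1}}\,dv$, so it suffices to prove $\int_0^\infty\frac{e^{-v}-1}{v^{s+1}}\,dv=\Gamma(-s)$. This integral converges precisely because $s\in(0,1)$: near $v=0$ the integrand is $O(v^{-s})$, and near $v=\infty$ it is $O(v^{-s-1})$. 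Integrating by parts with $e^{-v}-1$ against $v^{-s-1}\,dv$ kills both boundary terms (again using $s<1$) and leaves $-\frac1s\int_0^\infty v^{-s}e^{-v}\,dv=-\frac1s\Gamma(1-s)=\Gamma(-s)$, via the functional equation $\Gamma(1-s)=-s\,\Gamma(-s)$.

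Next I would apply the identity pointwise with $x=S_t$, obtaining
\[
S_t^{\,s} = \frac{1}{\Gamma(-s)}\int_0^\infty \frac{e^{-uS_t}-1}{u^{s+1}}\,du\;\;,
\]
and take $\mathbb{E}[\cdot]$ on both sides. The interchange of expectation and integral is justified by Tonelli's theorem: since $s\in(0,1)$ one has $\Gamma(-s)<0$, while $e^{-uS_t}-1\le 0$, so the integrand $\dfrac{e^{-uS_t}-1}{\Gamma(-s)\,u^{s+1}}$ is nonnegative (and jointly measurable). Hence
\[
\mathbb{E}\!\left[S_t^{\,s}\right] = \int_0^\infty \frac{\mathbb{E}\!\left[e^{-uS_t}\right]-1}{\Gamma(-s)\,u^{s+1}}\,du = \int_0^\infty \frac{{\cal L}_t(u;k_t,\alpha)-1}{\Gamma(-s)\,u^{s+1}}\,du\;\;,
\]
where the last equality is the definition of ${\cal L}_t$ in \textbf{Definition \ref{definition::S_t}}. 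Both sides are finite: from $x^s\le 1+x$ and $\mathbb{E}[S_t]=1$ we get $\mathbb{E}[S_t^{\,s}]\le 2$.

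I do not expect a genuine obstacle here; the only care needed is analytic bookkeeping, namely the convergence of $\int_0^\infty(e^{-v}-1)v^{-s-1}\,dv$ at both endpoints together with the vanishing of the boundary terms in the integration by parts, all of which rely on $s\in(0,1)$, and the positivity/measurability check enabling Tonelli. The $\alpha=0$ case is identical, since the argument uses only $\mathbb{E}[e^{-uS_t}]={\cal L}_t(u;k_t,\alpha)$ and not the explicit form of the Laplace transform.
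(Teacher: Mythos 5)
Your proposal is correct and follows essentially the same route the paper intends: the paper's proof is a one-line appeal to ``elementary calculus and Fubini's Theorem'' (citing Urbanik), and you have simply supplied the details --- the representation $x^s=\Gamma(-s)^{-1}\int_0^\infty (e^{-ux}-1)u^{-s-1}\,du$ verified by integration by parts and $\Gamma(1-s)=-s\,\Gamma(-s)$, followed by the interchange of expectation and integral (your use of Tonelli, justified by the sign observation $\Gamma(-s)<0$ and $e^{-uS_t}-1\le 0$, is the nonnegative form of the Fubini step the paper invokes). No gaps.
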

\begin{proof}
By elementary calculus and Fubini's Theorem \citep[see, e.g.,][Lemma 4, p.325]{urbanik1993moments}
\end{proof}
\begin{lemma}\label{theorem::Inverse_Moments}
Let $n$ be a positive integer, then \[
\mathbb{E}[S_t^{-n}]=\Gamma(n)^{-1}\int_{0}^{\infty}u^{n-1}{\cal L}_t(u;\,k_t;\;\alpha) du\;\;.
\]
 \end{lemma}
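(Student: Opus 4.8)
The plan is to reproduce, for a negative integer power, the Gamma-integral argument already used for fractional powers in \textbf{Lemma \ref{theorem::Fractional_Moments}}. The starting point is the elementary identity
\[
x^{-n}=\frac{1}{\Gamma(n)}\int_0^\infty u^{n-1}e^{-ux}\,du\;,\qquad x>0\;,
\]
which is nothing but the normalisation of the Gamma$(n)$ density. Since $S_t$ is a positive random variable by \textbf{Definition \ref{definition::S_t}}, this holds almost surely with $x=S_t$, so taking expectations of both sides gives
\[
\mathbb{E}\!\left[S_t^{-n}\right]=\frac{1}{\Gamma(n)}\,\mathbb{E}\!\left[\int_0^\infty u^{n-1}e^{-uS_t}\,du\right]\;.
\]

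Next I would exchange expectation and integral. The integrand $(u,\omega)\mapsto u^{n-1}e^{-uS_t(\omega)}$ is non-negative and jointly measurable, so Tonelli's theorem applies unconditionally and
\[
\mathbb{E}\!\left[S_t^{-n}\right]=\frac{1}{\Gamma(n)}\int_0^\infty u^{n-1}\,\mathbb{E}\!\left[e^{-uS_t}\right]du=\frac{1}{\Gamma(n)}\int_0^\infty u^{n-1}\,{\cal L}_t(u;\,k_t,\,\alpha)\,du\;,
\]
the last equality being the definition of the Laplace transform in \textbf{Definition \ref{definition::S_t}}. At this level the formula already holds as an identity in $[0,\infty]$, with no further work.

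The one point that deserves care --- and the step I would single out as the main (mild) obstacle --- is to check that this common value is actually finite, i.e. that the integral on the right converges, equivalently that $S_t$ possesses finite negative integer moments. Near $u=0$ the integrand behaves like $u^{n-1}$ because ${\cal L}_t$ is continuous with ${\cal L}_t(0;k_t,\alpha)=1$, and $u^{n-1}$ is integrable at the origin for every $n\ge 1$. For the tail one uses the explicit form of the Laplace exponent: for $\alpha\in(0,1)$,
\[
\ln{\cal L}_t(u;\,k_t,\,\alpha)=\frac{t}{k_t}\,\frac{1-\alpha}{\alpha}\left\{1-\left(1+\frac{u\,k_t}{(1-\alpha)\,t}\right)^{\alpha}\right\}\sim-\frac{t}{k_t}\,\frac{1-\alpha}{\alpha}\left(\frac{k_t}{(1-\alpha)\,t}\right)^{\alpha}u^{\alpha}\quad(u\to\infty)\;,
\]
so ${\cal L}_t(u;k_t,\alpha)$ decays like a stretched exponential $e^{-c_t u^{\alpha}}$ with $c_t>0$, hence faster than any power of $u$; thus $u^{n-1}{\cal L}_t(u;k_t,\alpha)$ is integrable on $[1,\infty)$ and the integral converges for every $n$. (In the $\alpha=0$ case one has instead ${\cal L}_t(u;k_t,0)=(1+u\,k_t/t)^{-t/k_t}$, a polynomial tail, and the same argument goes through under the usual Gamma-law condition $t/k_t>n$.) Having established finiteness, the chain of equalities above is exactly the claim, which completes the proof.
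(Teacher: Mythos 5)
Your proof is correct and is exactly the standard argument the paper invokes when it writes ``by elementary calculus and Fubini's Theorem'' with a citation: the Gamma-integral representation of $x^{-n}$, Tonelli for the interchange, and the definition of the Laplace transform. Your additional finiteness check (including the caveat that for $\alpha=0$ the integral converges only when $t/k_t>n$) is a sensible refinement the paper leaves implicit, but it does not change the substance of the argument.
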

\begin{proof}
By elementary calculus and Fubini's Theorem \citep[see, e.g.,][Ch.2, p.148]{cressie1981moment}
\end{proof}

\begin{lemma}\label{lemma:bound_char}$ $
\begin{enumerate}
    \item For all $t>0$, $c\geq 1$ and $u\geq 0$ \[1-{\cal L}_t(u;\;k_t,\;\alpha)\leq 1-e^{-cu}\;\; .\]
    \item  If $\beta\geq1$,  ${\cal L}_t(u;\;k_t,\;\alpha)$ is non decreasing in $t$.
\end{enumerate}
\end{lemma}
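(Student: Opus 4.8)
The plan is to prove the two statements separately, in each case starting from the explicit expression for $\ln{\cal L}_t$ in \textbf{Definition \ref{definition::S_t}} together with the normalisation $\mathbb{E}[S_t]=1$ recorded just after it.

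For Part 1 I would simply invoke Jensen's inequality. Since $x\mapsto e^{-ux}$ is convex for every $u\ge 0$ and $\mathbb{E}[S_t]=1$, we get ${\cal L}_t(u;k_t,\alpha)=\mathbb{E}\left[e^{-uS_t}\right]\ge e^{-u\mathbb{E}[S_t]}=e^{-u}$. If $c\ge 1$ and $u\ge 0$ then $cu\ge u$, hence $e^{-cu}\le e^{-u}\le{\cal L}_t(u;k_t,\alpha)$, which is exactly $1-{\cal L}_t(u;k_t,\alpha)\le 1-e^{-cu}$. One could equivalently argue ``by hand'', bypassing $\mathbb{E}[S_t]=1$: from $\ln(1+x)\le x$ in the $\alpha=0$ case and from Bernoulli's inequality $(1+x)^{\alpha}\le 1+\alpha x$ for $\alpha\in(0,1)$, $x\ge 0$, in the other case, one checks directly that $\ln{\cal L}_t(u;k_t,\alpha)\ge -u$, which suffices.

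For Part 2 the key remark is that $\ln{\cal L}_t$ depends on $t$ only through $v:=k_t/t=\bar k\,t^{\beta-1}$: substituting $k_t=vt$ gives
\[
\ln{\cal L}_t(u;k_t,\alpha)=
\begin{cases}
\dfrac{1-\alpha}{\alpha\,v}\Bigl\{1-\bigl(1+\tfrac{uv}{1-\alpha}\bigr)^{\alpha}\Bigr\}, & 0<\alpha<1,\\[2mm]
-\dfrac{1}{v}\ln\!\bigl(1+uv\bigr), & \alpha=0 .
\end{cases}
\]
When $\beta\ge 1$ the map $t\mapsto v=\bar k\,t^{\beta-1}$ is non-decreasing, so it suffices to show that, for each fixed $u\ge 0$, the right-hand side above is a non-decreasing function of $v>0$; exponentiating then yields the monotonicity of ${\cal L}_t$ in $t$. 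Differentiating in $v$ and discarding positive factors, the $\alpha=0$ case reduces to $\ln(1+w)\ge w/(1+w)$ for $w\ge 0$, while the $0<\alpha<1$ case, after setting $w:=1+uv/(1-\alpha)\ge 1$, reduces to $(1-\alpha)\,w^{\alpha}+\alpha\,w^{\alpha-1}\ge 1$, which is the weighted arithmetic--geometric mean inequality with weights $1-\alpha$ and $\alpha$ applied to $w^{\alpha}$ and $w^{\alpha-1}$ (their weighted geometric mean being $w^{0}=1$).

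I do not expect a genuine obstacle here. The only points requiring some care are verifying that the $t$-dependence of $\ln{\cal L}_t$ really enters only through $v$, so that the problem genuinely collapses to a one-variable monotonicity statement, and checking the two scalar inequalities above, both of which are elementary. Accordingly I would present Part 1 via Jensen and Part 2 via the reduction to the single variable $v$ together with the $\ln$ and weighted AM--GM inequalities.
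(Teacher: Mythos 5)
Your proof is correct, and both parts take a route that differs in its details from the paper's. For Part 1 the paper works directly on the explicit formula: it notes that $-\ln{\cal L}_t(u;k_t,\alpha)-cu$ vanishes at $u=0$ and has non-positive $u$-derivative $\bigl(1+\tfrac{k_tu}{t(1-\alpha)}\bigr)^{\alpha-1}-c\le 1-c\le 0$; your primary argument instead goes through Jensen's inequality and the normalisation $\mathbb{E}[S_t]=1$, which is shorter and makes transparent that the bound is really the statement $\mathcal{L}_t(u)\ge e^{-u\mathbb{E}[S_t]}$ for any positive random variable of unit mean, independent of the tempered-stable form of the Laplace transform (your Bernoulli/$\ln(1+x)\le x$ fallback is essentially the paper's calculus argument in disguise). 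For Part 2 the paper fixes $s<t$, forms the difference $h(u;s,t)$ of the two log-Laplace transforms, and shows it is zero at $u=0$ with non-negative $u$-derivative, using that $k_t/t$ is non-decreasing in $t$ for $\beta\ge 1$; you exploit the same structural fact but collapse the problem to monotonicity in the single variable $v=k_t/t=\bar k\,t^{\beta-1}$ and differentiate in $v$, reducing to $\ln(1+w)\ge w/(1+w)$ for $\alpha=0$ and to the weighted AM--GM inequality $(1-\alpha)w^{\alpha}+\alpha w^{\alpha-1}\ge 1$ for $\alpha\in(0,1)$. I checked both scalar inequalities and the reduction to $v$; they are correct, and your reparametrisation also makes the $\beta=1$ case (constancy in $t$) immediate. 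The paper's two-time comparison avoids differentiating the quotient in $v$ and so needs only the elementary monotonicity of $w\mapsto(1+w)^{-(1-\alpha)}$, but the two arguments are of comparable length and rest on the same observation that $t$ enters only through $k_t/t$.
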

\begin{proof}
Let us observe that
\begin{align*}
    1-{\cal L}_t(u;\;k_t,\;\alpha)&\leq 1-e^{-cu}\\
		\frac{t}{k_t}
		\displaystyle \frac{1-\alpha}{\alpha}
		\left \{		\left(1+\frac{u \; k_t}{(1-\alpha)t}\right)^\alpha-1 \right \}-cu&\leq 0\;\;.
\end{align*}
The last inequality is true for any  $c\geq 1$ and $u \geq 0$ because the left hand side is null in $u=0$ and its first order derivative w.r.t. $u$ is negative: $$\frac{1}{\left(1+\frac{k_tu}{t(1-\alpha)}\right)^{1-\alpha}}-c<0\;\;.$$\\
This proves the first point.\\

We demonstrate that the logarithm of ${\cal L}_t(u;\;k_t,\;\alpha)$ is not decreasing. Consider a positive $t$, $s\in(0,t)$ and

 \[
h(u;\;s,\;t):=\frac{t}{k_t}\left \{1-		\left(1+\frac{u \; k_t}{(1-\alpha)t}\right)^\alpha \right \}-\frac{s}{k_s}\left \{1-		\left(1+\frac{u \; k_s}{(1-\alpha)s}\right)^\alpha \right \}\;\;.
\]
We observe that $h(0;\;s,\;t)=0$ and the first order derivative  \[\displaystyle \frac{\partial h(u;\;s,\;t)}{\partial u}=\frac{1}{\left(1+\frac{k_su}{s(1-\alpha)}\right)^{1-\alpha}}-\frac{1}{\left(1+\frac{k_tu}{t(1-\alpha)}\displaystyle\right)^{1-\alpha}}\;\;\]
 is non negative $\forall u>0$ because $k_t/t$ is non decreasing in $t$, if $\beta>1$, and is constant in $t$, if $\beta=1$.
Thus, $h(u;\;s,\;t)\geq 0$, $\forall u\geq0$, and ${\cal L}_t(u;\;k_t,\;\alpha)$ is non decreasing w.r.t. $t$. This proves point 2
\end{proof}
\begin{lemma}$ $
 \begin{enumerate}
\label{lemma:conv_distr} 
\item If $\beta<1$ $S_t$ goes to zero in distribution as $t$ goes to zero.
 \item If $\beta>1$ $S_t$ goes to one in distribution as $t$ goes to zero.
 \item If $\beta=1$ the distribution of $S_t$ does not depend from $t$.
\end{enumerate}
\end{lemma}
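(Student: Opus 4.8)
The plan is to read everything off from the Laplace transform ${\cal L}_t(u;k_t,\alpha)$ in \textbf{Definition \ref{definition::S_t}} and to invoke the continuity theorem for Laplace transforms of nonnegative random variables: a family of positive random variables converges in distribution precisely when its Laplace transforms converge pointwise on $[0,\infty)$ to a function continuous at the origin, the limit being then the Laplace transform of the limit law. Thus the lemma reduces to computing $\lim_{t\to 0}{\cal L}_t(u;k_t,\alpha)$, for each fixed $u\ge 0$, in the three regimes; throughout I substitute $t/k_t=\bar k^{-1}t^{1-\beta}$ and $k_t/t=\bar k t^{\beta-1}$. Point 3 is then immediate: for $\beta=1$ both $t/k_t$ and $k_t/t$ are constant, so the expression for $\ln{\cal L}_t$ in \textbf{Definition \ref{definition::S_t}} carries no residual $t$-dependence, and the law of $S_t$ is the same for every $t$.

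For point 1 ($\beta<1$, with $\alpha\in(0,1)$), fix $u>0$. Since $\beta-1<0$ the inner quantity $u k_t/((1-\alpha)t)=u\bar k t^{\beta-1}/(1-\alpha)$ blows up as $t\to 0$, so $\bigl(1+u k_t/((1-\alpha)t)\bigr)^\alpha\sim(\bar k u/(1-\alpha))^\alpha t^{\alpha(\beta-1)}$, and after multiplying by $\tfrac{t}{k_t}\tfrac{1-\alpha}{\alpha}$ one obtains $\ln{\cal L}_t(u)\sim -C\,u^\alpha\,t^{(1-\alpha)(1-\beta)}$ with $C>0$; the $\alpha=0$ branch is handled separately and gives $\ln{\cal L}_t(u)$ of order $t^{1-\beta}\ln(1/t)$. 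In both cases the exponent of $t$ is strictly positive, hence $\ln{\cal L}_t(u)\to 0$, i.e. ${\cal L}_t(u)\to 1$, the Laplace transform of the Dirac mass at $0$; together with ${\cal L}_t(0)=1$ the continuity theorem yields that $S_t$ converges in distribution to $0$.

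For point 2 ($\beta>1$), now $k_t/t=\bar k t^{\beta-1}\to 0$, so for fixed $u$ the inner quantity tends to $0$ and $1-(1+v)^\alpha\sim-\alpha v$ (respectively $-\ln(1+v)\sim -v$ when $\alpha=0$); multiplying by $\tfrac{t}{k_t}\tfrac{1-\alpha}{\alpha}$ the factors $t^{1-\beta}$ and $t^{\beta-1}$ cancel exactly and $\ln{\cal L}_t(u)\to -u$, i.e. ${\cal L}_t(u)\to e^{-u}$, the Laplace transform of the Dirac mass at $1$, so $S_t$ converges in distribution to $1$. A shorter route for this case uses only the first two moments recorded after \textbf{Definition \ref{definition::S_t}}: $\mathbb{E}[S_t]=1$ and $Var[S_t]=k_t/t=\bar k t^{\beta-1}\to 0$, so Chebyshev's inequality already gives $S_t\to 1$ in probability and hence in distribution.

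The only delicate point, in each of the two nontrivial regimes, is the final step: one must note that the candidate limits $1$ and $e^{-u}$ are genuine Laplace transforms, continuous at $u=0$, which is exactly what licenses the continuity theorem; apart from this (and the routine separate treatment of the $\alpha=0$ branch and of the exponents of $t$) there is no real obstacle.
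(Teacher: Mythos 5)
Your proposal is correct and follows essentially the same route as the paper: compute the pointwise limit of $\ln{\cal L}_t(u;k_t,\alpha)$ in each regime of $\beta$ and conclude by the continuity theorem for Laplace transforms, with point 3 immediate from the $t$-independence of the transform when $\beta=1$. Your explicit asymptotic exponents, the separate $\alpha=0$ branch, and the Chebyshev shortcut for $\beta>1$ via $Var[S_t]=k_t/t\to 0$ are welcome additional detail but do not change the argument.
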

\begin{proof}

Recall that convergence in the Laplace transform implies convergence in distribution \citep[see, e.g.,][Th.B.9, p.308]{kijima1997markov}. \\
We compute the limit of $S_t$ Laplace transform for $\beta<1$. By using the fact that $k_t/t$ goes to infinity as $t$ goes to zero we obtain
\begin{equation*}
    \lim_{t\to 0}{\cal L}_t(u;\;k_t,\;\alpha) =  \lim_{t\to 0} e^{	 \frac{t}{k_t}
		 \frac{1-\alpha}{\alpha}
		\left \{1-		\left(1+\frac{u \; k_t}{(1-\alpha)t}\right)^\alpha \right \} }=1\;\;.
\end{equation*}

 Thus, $S_t$ converges in distribution to the constant zero. This proves point 1. \\
 
 We compute the limit of $S_t$ Laplace transform for $\beta>1$. By using the fact that  $k_t/t$  goes to zero as $t$ goes to zero we obtain
\begin{equation*}
 \lim_{t\to 0}{\cal L}_t(u;\;k_t,\;\alpha) =    \lim_{t\to 0} e^{	 \frac{t}{k_t}
		 \frac{1-\alpha}{\alpha}
		\left \{1-		\left(1+\frac{u \; k_t}{(1-\alpha)t}\right)^\alpha \right \} }=e^{-u}\,\,.
\end{equation*}
 Thus, $S_t$ converges in distribution to the constant one. This proves point 2.\\
 
 Point 3 follows from the fact that, if $\beta=1$, ${\cal L}_t(u;\;k_t,\;\alpha)$ is constant in $t$

\end{proof}

\begin{lemma}
\label{lemma::limit_sqrt}
\begin{equation}
     \lim_{t\to 0} \mathbb{E}[\sqrt{S_t}] = \begin{cases}
            0 &\text{ if }\beta<1\\
            1 &\text{ if } \beta>1\\
            D &\text{ if } \beta=1
     \end{cases}\;\;,\label{equation::limit_sqrt}
\end{equation}
where $D$ is a positive constant.
\end{lemma}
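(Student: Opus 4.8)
The plan is to dispatch the three regimes of $\beta$ by combining \textbf{Lemma \ref{lemma:conv_distr}} with a uniform integrability argument. First I would record the elementary fact that the family $\{\sqrt{S_t}\}_{t>0}$ is bounded in $L^2$, uniformly in $t$, because $\mathbb{E}\left[\left(\sqrt{S_t}\right)^2\right]=\mathbb{E}[S_t]=1$; hence it is uniformly integrable, and I may invoke the standard fact that convergence in distribution together with uniform integrability implies convergence of the expectations.

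For $\beta<1$, by \textbf{Lemma \ref{lemma:conv_distr}} point 1 we have $S_t\to 0$ in distribution as $t\to 0$, so $\sqrt{S_t}\to 0$ in distribution by the continuous mapping theorem, and the uniform integrability above gives $\mathbb{E}[\sqrt{S_t}]\to 0$. As an alternative that stays closer to the tools already set up, I could instead use the fractional-moment representation of \textbf{Lemma \ref{theorem::Fractional_Moments}} with $s=1/2$ (recalling $\Gamma(-1/2)=-2\sqrt{\pi}$),
\[
\mathbb{E}[\sqrt{S_t}]=\int_0^\infty\frac{1-{\cal L}_t(u;k_t,\alpha)}{2\sqrt{\pi}\,u^{3/2}}\,du\;\;,
\]
observe that $1-{\cal L}_t(u;k_t,\alpha)\to 0$ pointwise (as in the proof of \textbf{Lemma \ref{lemma:conv_distr}}) and that the integrand is dominated by the integrable function $u\mapsto\left(1-e^{-u}\right)/\left(2\sqrt{\pi}\,u^{3/2}\right)$ thanks to \textbf{Lemma \ref{lemma:bound_char}} point 1 with $c=1$, and then conclude by dominated convergence. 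For $\beta>1$, \textbf{Lemma \ref{lemma:conv_distr}} point 2 gives $S_t\to 1$ in distribution, and the same uniform-integrability argument yields $\mathbb{E}[\sqrt{S_t}]\to 1$; equivalently one may use $\left|\sqrt{a}-1\right|\le|a-1|$ for $a\ge 0$ together with Jensen to get $\mathbb{E}\left|\sqrt{S_t}-1\right|\le\mathbb{E}\left|S_t-1\right|\le\sqrt{Var\left[S_t\right]}=\sqrt{\bar{k}\,t^{\beta-1}}\to 0$.

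For $\beta=1$, \textbf{Lemma \ref{lemma:conv_distr}} point 3 says the law of $S_t$ is the same for all $t$, so $\mathbb{E}[\sqrt{S_t}]$ equals a constant $D$ not depending on $t$; I would then argue $D>0$ because $S_t$ is a positive random variable of unit mean (hence $\mathbb{P}(S_t>0)>0$), and $D\le\sqrt{\mathbb{E}[S_t]}=1$ by Jensen's inequality for the concave map $\sqrt{\cdot}$, so in fact $D\in(0,1]$. The only step requiring genuine care is the case $\beta<1$: there $\sqrt{\cdot}$ is unbounded, so convergence in distribution alone does not transfer to the mean, and one really needs either the uniform integrability or the dominated-convergence estimate on the integral representation; the other two cases are essentially immediate from \textbf{Lemma \ref{lemma:conv_distr}}.
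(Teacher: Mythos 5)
Your proof is correct, but it takes a genuinely different route from the paper's. The paper works entirely through the Laplace-transform machinery it sets up in Appendix A: it writes $\mathbb{E}[\sqrt{S_t}]=\int_0^\infty\frac{{\cal L}_t(u;k_t,\alpha)-1}{\Gamma(-1/2)u^{3/2}}\,du$ via \textbf{Lemma \ref{theorem::Fractional_Moments}}, dominates the integrand using \textbf{Lemma \ref{lemma:bound_char}} point 1 (with $c=2$, yielding the explicit uniform bound $\mathbb{E}[\sqrt{S_t}]\le\sqrt{2}$), and then passes to the limit by dominated convergence using the pointwise limits of ${\cal L}_t$ computed in \textbf{Lemma \ref{lemma:conv_distr}} — i.e.\ your ``alternative'' argument for $\beta<1$ is in fact the paper's main argument for all three regimes. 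Your primary route instead exploits the elementary observation that $\mathbb{E}[(\sqrt{S_t})^2]=\mathbb{E}[S_t]=1$ gives $L^2$-boundedness, hence uniform integrability, so that the distributional limits of \textbf{Lemma \ref{lemma:conv_distr}} transfer directly to the expectations; this is shorter, avoids the fractional-moment representation entirely, and your variance estimate $\mathbb{E}|\sqrt{S_t}-1|\le\sqrt{\bar{k}t^{\beta-1}}$ for $\beta>1$ even yields a convergence rate, which the paper's argument does not make explicit. What the paper's route buys is self-containment within the toolkit it has already built (no appeal to the Skorokhod/Vitali folklore behind ``convergence in distribution plus uniform integrability implies convergence of means'') and a concrete numerical bound on $D$; your Jensen bound $D\le 1$ is in fact sharper than the paper's $D\le\sqrt{2}$, and your positivity argument for $D$ is sound since $S_t$ is positive with unit mean.
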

\begin{proof}
  Recall that $S_t$ is a positive r.v. and $\mathbb{E}[S_t]=1$. Then, its moment of order $1/2$ is finite. 
By \textbf{Lemma \ref{theorem::Fractional_Moments}}  \[ \mathbb{E}\left[\sqrt{S_t}\right]=\int_{0}^{\infty}\frac{{\cal L}_t(u;\;k_t,\;\alpha)-1}{\Gamma(-1/2)u^{3/2}}du\;\;, \] where $\frac{-1}{\Gamma(-1/2)}\approx 3.45$. 
By \textbf{Lemma \ref{lemma:bound_char}} point 1 with $c=2$, the positive quantity  $(1-{\cal L}_t(u;\;k_t,\;\alpha))/u^{3/2}$ is lower or equal than $(1-e^{-2u})/u^{3/2}$. Thus, \begin{equation}\label{equation::bound_sqrt}
    0 \leq \mathbb{E}\left[\sqrt{S_t}\right]\leq\frac{-1}{\Gamma(-1/2)} \int^\infty_0 \frac{ 1-e^{-2u}}{u^{3/2}}du = \frac{-4}{\Gamma(-1/2)}\int_{0}^{\infty}\frac{e^{-2u}}{u^{1/2}}du= \sqrt{2}\;\;,
\end{equation} where the first equality is obtained from integration by parts and the second from the definition of $\Gamma$. Inequality (\ref{equation::bound_sqrt}) has two consequences. First, if $\beta=1$, \begin{equation}
      \lim_{t\to 0} \mathbb{E}[\sqrt{S_t}]= \mathbb{E}[\sqrt{S_t}]:=D\leq \sqrt{2}\;\;, \label{equation::limit_sqrt_beta_1}
\end{equation}  because, by \textbf{Lemma \ref{lemma:conv_distr}} point 3,  $\mathbb{E}[\sqrt{S_t}]$ is constant w.r.t. to time.
Second, we can apply the dominated convergence theorem to (\ref{equation::limit_sqrt}) for all values of $\beta$. Recall that the limits for $t$ that goes to zero of ${\cal L}_t(u;\;k_t,\;\alpha)$ for $\beta<1$ and for $\beta>1$ are computed in the proof of \textbf{Lemma \ref{lemma:conv_distr}}.\\
If $\beta<1$ \begin{align}\label{equation::limit_sqrt_beta_<1}
   \lim_{t\to 0} \mathbb{E}[\sqrt{S_t}]=\lim_{t\to 0} \frac{-1}{\Gamma(-1/2)}\int_{0}^{\infty}\frac{1-{\cal L}_t(u;\;k_t,\;\alpha)}{u^{3/2}}du =0 \;\;.
\end{align}
If $\beta>1$ 
\begin{align}
    \lim_{t\to 0} \mathbb{E}[\sqrt{S_t}]  &=\lim_{t\to 0}\frac{-1}{\Gamma(-1/2)} \int_{0}^{\infty}\frac{1-{\cal L}_t(u;\;k_t,\;\alpha)}{\Gamma(-1/2)u^{3/2}}du \nonumber\\ 
     &=\frac{-1}{\Gamma(-1/2)}\int_{0}^{\infty}  \frac{1-e^{-u}}{u^{3/2}}du=\frac{-2}{\Gamma(-1/2)}\int_{0}^{\infty}\frac{e^{-u}}{u^{1/2}}du=1\;\;, \label{equation::limit_sqrt_beta_>1}
\end{align}
 
 where the third equality is obtained from integration by parts and the third by the definition of $\Gamma$.
Equalities (\ref{equation::limit_sqrt_beta_1}), (\ref{equation::limit_sqrt_beta_<1}) and (\ref{equation::limit_sqrt_beta_>1})  prove the thesis

\end{proof}

\begin{lemma}\label{Lemma::short_time_gamma}$ $\\
Consider ${\varphi_t}$ in (\ref{equation::power_scaling_param}).  For every $\beta$ and $\delta$   in the additive process boundaries of \textbf{Theorem \ref{theorem:semplified_f}} 
 \begin{enumerate}
   \item \begin{equation} \varphi_t t=t\bar{\sigma}^2 \eta_t -t\bar{\sigma}^4\eta_t^2  k_t/2+O\left(t\eta_t^3k_t^2 \right)\;\;, \label{eq::mart_lapl} \end{equation}  where the second term $t\bar{\sigma}^4 \eta_t^2  k_t/2$ goes to zero faster than $ t\bar{\sigma}^2\eta_t$ as $t$ goes to zero. 
\item $$\frac{\varphi_t}{\bar{\sigma}^2\eta_t} \leq 1 \;\;.$$
\item $$\lim_{t\to 0} \frac{\varphi_t}{\bar{\sigma}^2\eta_t}=1\;\;,\;\;\text{for }\;\;\delta>-\min(1,\beta)\;\;. $$
 \end{enumerate}

\end{lemma}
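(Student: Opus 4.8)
The plan is to compute $\varphi_t t$ in closed form directly from Definition \ref{definition::S_t} and (\ref{equation::power_scaling_param}), and then to Taylor–expand the resulting elementary expression. I would treat the case $\alpha\in(0,1)$ in detail; the case $\alpha=0$ uses the logarithm in place of the power and is handled by the identical steps, so I would only indicate the changes.

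First I would substitute $u=t\bar{\sigma}^2\eta_t$ into the Laplace transform of Definition \ref{definition::S_t}, obtaining for $0<\alpha<1$
\[
\varphi_t t \;=\; -\ln {\cal L}_t\big(t\bar{\sigma}^2\eta_t;k_t,\alpha\big)\;=\;\frac{t}{k_t}\,\frac{1-\alpha}{\alpha}\Big\{\big(1+x_t\big)^{\alpha}-1\Big\},\qquad x_t:=\frac{\bar{\sigma}^2\eta_t k_t}{1-\alpha}=\frac{\bar{\sigma}^2\bar{\eta}\bar{k}}{1-\alpha}\,t^{\delta+\beta}.
\]
I would then record the algebraic identity $t\bar{\sigma}^2\eta_t=\tfrac{t}{k_t}(1-\alpha)x_t$, which gives the compact form $\varphi_t/(\bar{\sigma}^2\eta_t)=\big((1+x_t)^{\alpha}-1\big)/(\alpha x_t)$. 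The structural fact I would extract first is that condition~2 of \textbf{Theorem \ref{theorem:semplified_f}} forces $\delta>-\beta$ on the whole admissible region except at the degenerate point $\beta=\delta=0$ (one checks the two branches of the $\min$); hence $\delta+\beta>0$ and $x_t\to0$ as $t\to0$, even though $\eta_t$ itself may diverge.

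With this in hand the three points are short. Point~2 follows directly from \textbf{Lemma \ref{lemma:bound_char}} point~1 with $c=1$: it gives ${\cal L}_t(u;k_t,\alpha)\ge e^{-u}$, hence $-\ln {\cal L}_t(u;k_t,\alpha)\le u$, and with $u=t\bar{\sigma}^2\eta_t$ this reads $\varphi_t t\le t\bar{\sigma}^2\eta_t$. For point~1 I would expand $(1+x_t)^{\alpha}=1+\alpha x_t-\tfrac{\alpha(1-\alpha)}{2}x_t^2+O(x_t^3)$, substitute into the displayed formula, and simplify; the factors $\alpha$ and $1-\alpha$ cancel, leaving $\varphi_t t=t\bar{\sigma}^2\eta_t-\tfrac12 t\bar{\sigma}^4\eta_t^2 k_t+O(t\eta_t^3 k_t^2)$, i.e. (\ref{eq::mart_lapl}); the ratio of the second term to the first is $\tfrac12\bar{\sigma}^2\bar{\eta}\bar{k}\,t^{\delta+\beta}\to0$, which is the ``faster than'' assertion. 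Point~3 is then immediate from the compact form and $x_t\to0$ (the hypothesis $\delta>-\min(1,\beta)$ again forces $\delta+\beta>0$, whether $\beta\le1$ or $\beta>1$), since $\big((1+x)^{\alpha}-1\big)/(\alpha x)\to1$ as $x\to0$. For $\alpha=0$ one has $\varphi_t t=\tfrac{t}{k_t}\ln(1+\bar{\sigma}^2\eta_t k_t)$, and the elementary facts $\ln(1+y)\le y$, $\ln(1+y)=y-\tfrac12y^2+O(y^3)$, $\ln(1+y)/y\to1$ yield the three points verbatim.

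I do not expect a genuine obstacle here: the content is bookkeeping of the $O$-terms (making sure the remainder in (\ref{eq::mart_lapl}) is really $O(t\eta_t^3 k_t^2)$ and not merely $O(t\eta_t^2 k_t)$) together with the elementary verification that the admissible-region constraints of \textbf{Theorem \ref{theorem:semplified_f}} force $\delta+\beta>0$, which is exactly what legitimises expanding around $x_t=0$. The degenerate case $\beta=\delta=0$ is trivial, all parameters being constant there so that (\ref{eq::mart_lapl}) holds with an $O(t)$ remainder.
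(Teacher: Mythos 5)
Your proposal is correct and follows essentially the same route as the paper: substitute $u=t\bar{\sigma}^2\eta_t$ into the Laplace exponent, note that the admissible region forces $\eta_t k_t=O(t^{\beta+\delta})\to 0$, Taylor-expand $(1+x_t)^\alpha$ for point 1 and the limit for point 3, and reduce point 2 to $(1+z)^\alpha\le 1+\alpha z$ (your invocation of Lemma \ref{lemma:bound_char} with $c=1$ is just that same inequality packaged differently). No gaps.
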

\begin{proof}

We prove the asymptotic expansion (\ref{eq::mart_lapl}). 
In the additive process boundaries of \textbf{Theorem \ref{theorem:semplified_f}} at least either $\beta=\delta=0$ or $\delta>-\min(1,\beta)$. In the former case (\ref{eq::mart_lapl}) is trivial. In the latter, thanks to (\ref{equation::power_scaling_param}), both  $t \eta_t =t^{1+\delta}\bar{\eta}$ and $\eta_t k_t=t^{\beta+\delta}\bar{\eta}\bar{k}$ go to zero as $t$ goes to zero. Using the Taylor series expansion
\begin{align*}
    \varphi_t t&=
 \frac{t(1-\alpha)}{k_t}\left\{ \frac{ \bar{\sigma}^2\, \eta_t\, k_t}{1-\alpha}- \frac{\bar{\sigma}^4\,\eta_t^2 \, k_t^2}{2(1-\alpha)}+O\left(\eta_t^3\,k_t^3\right)\right\}\\
		&=t\,\bar{\sigma}^2\,\eta_t-t\,\bar{\sigma}^4\,\eta_t^2\,k_t/2+O\left(t\,\eta_t^3\,\kappa_t^2\right)\:\;.
\end{align*}
This proves point 1.

We prove that  $\varphi_t /(\bar{\sigma}^2\eta_t)\leq 1$. We substitute the definition of $\varphi_t $ in (\ref{equation::power_scaling_param}), for $\alpha>0$, in (\ref{eq::mart_lapl}) and we get
\begin{equation}
    \varphi_t /(\bar{\sigma}^2\eta_t)=\frac{(1-\alpha)}{\alpha \bar{\sigma}^2  \eta_t k_t  }\left(\left(1+\frac{\bar{\sigma}^2\eta_t k_t}{1-\alpha}\right)^{\alpha}-1\right)\leq 1 \;\;. \label{eq::inequalit_phi}
\end{equation}
We define $z
:=\frac{\bar{\sigma}^2\eta_t k_t}{1-\alpha}$. Then, (\ref{eq::inequalit_phi}) is equivalent to \[
	\left(1+z\right)^{\alpha}\leq 1+\alpha z\;\;, \]
which is a well known inequality. This proves point 2.\\

Point 3 is straightforward, given point 1, because, if $\delta>-\min(1,\beta)$, $\eta_t k_t$ goes to zero as $t$ goes to zero
\end{proof}

\section{Short-time limits}
\begin{lemma}\label{lemma_general_limit}
Consider a family of positive random variables $X_t$ s.t. $\lim_{t \to 0}X_t=X$ in distribution and a sequence of functions $g_t(z)\geq 0$ and uniformly bounded s.t. $\lim_{t\to 0}g_t(z) = g(z)$.\\ If $\exists\, \tau >0$ s.t. for $t\in(0,\tau)$
\begin{enumerate}[i)]
    \item $g_t(z)$ is    Lipschitz  continuous  with bounded Lipschitz constant,
    \item $|g_t(z)-g(z)|<h(z)$  with $\lim_{z\to \infty}h(z)=0$,
\end{enumerate}
 then $$\lim_{t \to 0}\mathbb{E}[g_t(X_t)]=\mathbb{E}[g(X)]\;\;.$$

\end{lemma}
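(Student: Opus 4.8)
The plan is to reduce the statement to two classical facts: that uniform convergence of the integrands passes to the expectation, and that convergence in distribution tests bounded continuous functions. The bridge between them is the observation that, under hypotheses (i)--(ii), the convergence $g_t\to g$ is in fact uniform on the whole half-line, not merely pointwise.

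First I would record the elementary properties of the limit $g$. Since every $g_t$, $t\in(0,\tau)$, is $L$-Lipschitz for one common constant $L$ and bounded by one common constant $M$, the pointwise limit $g$ is itself $L$-Lipschitz (hence continuous) and bounded by $M$; in particular $g$ is a bounded continuous function, so $\mathbb{E}[g(X)]$ makes sense and $g$ is a legitimate test function for convergence in distribution.

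Next comes the heart of the argument: upgrading $g_t\to g$ to uniform convergence on $[0,\infty)$. Fix $\varepsilon>0$. Using $\lim_{z\to\infty}h(z)=0$, choose $R$ so large that $h(z)<\varepsilon$ for every $z>R$; by (ii) this already gives $\sup_{z>R}|g_t(z)-g(z)|\le\varepsilon$ for all $t\in(0,\tau)$. On the compact interval $[0,R]$ the family $\{g_t\}_{t\in(0,\tau)}$ is equicontinuous (common Lipschitz bound) and converges pointwise to $g$, which is a standard sufficient condition for uniform convergence on a compact set; hence there is $\tau'\in(0,\tau)$ with $\sup_{z\in[0,R]}|g_t(z)-g(z)|<\varepsilon$ whenever $t<\tau'$. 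Combining the two regions yields $\|g_t-g\|_\infty<\varepsilon$ for $t<\tau'$, i.e. $\|g_t-g\|_\infty\to 0$ as $t\to0$.

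Finally I would conclude by the triangle inequality
\[
\bigl|\mathbb{E}[g_t(X_t)]-\mathbb{E}[g(X)]\bigr|
\le \mathbb{E}\bigl[\,|g_t(X_t)-g(X_t)|\,\bigr]+\bigl|\mathbb{E}[g(X_t)]-\mathbb{E}[g(X)]\bigr|
\le \|g_t-g\|_\infty+\bigl|\mathbb{E}[g(X_t)]-\mathbb{E}[g(X)]\bigr| ,
\]
using that $g_t(X_t)$ and $g(X_t)$ live on the same probability space. The first summand vanishes as $t\to0$ by the uniform convergence just established; the second vanishes because $X_t\to X$ in distribution and $g$ is bounded and continuous. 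The main obstacle is precisely the uniform-convergence step: one must split $[0,\infty)$ into the tail $z>R$, controlled by (ii), and the bulk $[0,R]$, controlled by equicontinuity, since neither hypothesis alone delivers global uniform convergence — and one must keep in mind that if (i) were read merely as "each $g_t$ has a finite Lipschitz constant'' (allowed to blow up as $t\to0$) the statement would be false, so the common-constant reading is what makes the argument work.
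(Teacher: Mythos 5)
Your proof is correct and follows essentially the same route as the paper: equicontinuity from the common Lipschitz bound gives uniform convergence of $g_t$ to $g$ on compacts (the paper invokes Ascoli--Arzel\`a for this), hypothesis (ii) controls the tail $z>R$, and convergence in distribution handles $\mathbb{E}[g(X_t)]\to\mathbb{E}[g(X)]$ since $g$ is bounded and continuous. The only cosmetic difference is that you assemble these into a global bound $\|g_t-g\|_\infty\to 0$ before taking expectations, whereas the paper splits the expectation over $\{X_t<K\}$ and $\{X_t>K\}$ directly; you also make explicit the continuity of $g$ needed for the portmanteau step, which the paper leaves implicit.
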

\begin{proof}
     It is possible to apply the Ascoli-Arzel\'a theorem \citep[see, e.g.,,][ Th.7.25, p.158]{rudin1976principles} on every compact set $[0,K],\;K>0$, because a sequence of Lipschitz  continuous functions with bounded Lipschitz constant is equicontinous on any compact set. Thus, a sub-sequence of  $g_t(z)$ converges uniformly to $g(z)$ in any $[0,K]$.
     For every $\epsilon>0$, $\exists\,K$ s.t.
    \begin{align*}
        \lim_{t \to 0 }\mathbb{E}[\left|g_t(X_t)-g(X_t)\right|]& = \lim_{t \to 0 }\mathbb{E}\left[ \left|g_t(X_t)-g(X_t)\right|\mathbb{1}_{X_t<K}\right]+\lim_{t \to 0 }\mathbb{E}\left[\left|g_t(X_t)-g(X_t)\right|\mathbb{1}_{X_t>K}\right]<\epsilon\;\:.
    \end{align*}
    The first expected value goes to zero because $g_t(z)$ converges uniformly to $g(z)$ on $[0,K]$, as proven above via Ascoli-Arzel\'a theorem. It exists $K$ s.t. it is possible to bound the second with $\epsilon$ because $h(z)$ goes to zero as $z$ goes to infinity. \\ Moreover, $g(z)$ is bounded because it is the limit of a uniformly bounded sequence and
     \[ \lim_{t\to 0}\mathbb{E}\left[\left|g(X_t)-g(X)\right|\right] = 0\;\;.\]
     by definition of convergence in distribution, because $g(z)$ is bounded.
     We have that
    \[ 0\leq \lim_{t \to 0 } \mathbb{E}[\left|g_t(X_t)-g(X)\right|]\leq \lim_{t \to 0 }\left\{\mathbb{E}[\left|g_t(X_t)-g(X_t)\right|]+ \mathbb{E}\left[\left|g(X_t)-g(X)\right|\right]\right\}=0\;\;,\]
    this proves the thesis
    
\end{proof}

\begin{lemma}\label{lemma::Limit}
For $\delta =-1/2$, let  $X_t$ be a sequence of positive random variable s.t. $X_t\rightarrow{X}$ in distribution for $t$ that goes to zero.\\ Then,
    $$\lim_{t\to 0}\mathbb{E}\left[N\left(\bar{\sigma}\bar{\eta} \left(-\sqrt{X_t}+\varphi_t /(\bar{\sigma}^2\sqrt{X_t} \eta_t)\right)-\bar{\sigma} \sqrt{tX_t}/2\right)\right]= \mathbb{E}\left[N(\bar{\sigma}\bar{\eta}(-\sqrt{X}+1/\sqrt{X}))\right]\;\;.$$
\end{lemma}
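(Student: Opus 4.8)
The plan is to recognise the integrand as $g_t(X_t)$ for a suitable family of functions and then to invoke \textbf{Lemma \ref{lemma_general_limit}}. Set $b_t:=\varphi_t/(\bar\sigma^2\eta_t)$ and, for $z>0$,
\[
g_t(z):=N\!\left(\bar\sigma\bar\eta\left(-\sqrt z+\frac{b_t}{\sqrt z}\right)-\frac{\bar\sigma\sqrt{tz}}{2}\right),\qquad g(z):=N\!\left(\bar\sigma\bar\eta\left(-\sqrt z+\frac{1}{\sqrt z}\right)\right),
\]
both extended by $g_t(0)=g(0)=1$, which is the continuous extension since the arguments diverge to $+\infty$ as $z\to0^{+}$. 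Then $0\le g_t\le 1$, so the family is non-negative and uniformly bounded; by \textbf{Lemma \ref{Lemma::short_time_gamma}}, point 2, $b_t\le1$ (and $b_t>0$ since $\varphi_t>0$), hence actually $0\le g_t(z)\le g(z)$ for all $z>0$ and all $t>0$; and by \textbf{Lemma \ref{Lemma::short_time_gamma}}, point 3 (applicable because $\delta=-1/2>-\min(1,\beta)$ on the whole admissible region of \textbf{Theorem \ref{theorem:semplified_f}}, where $\delta=-1/2$ forces $\beta>1/2$), $b_t\to1$; together with $\sqrt{tz}\to0$ this gives $g_t(z)\to g(z)$ pointwise, with $g$ continuous and bounded. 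Condition (ii) of \textbf{Lemma \ref{lemma_general_limit}} is then immediate: $|g_t(z)-g(z)|\le g(z)=:h(z)$ and $h(z)=N\!\left(\bar\sigma\bar\eta(-\sqrt z+1/\sqrt z)\right)\to0$ as $z\to\infty$.

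The one delicate point, and the main obstacle, is condition (i): a Lipschitz bound for $g_t$ that is uniform in $t$ for small $t$. Writing $w_t(z)$ for the argument of $N$ inside $g_t$, one has $g_t'(z)=N'(w_t(z))\,w_t'(z)$ with
\[
w_t'(z)=-\frac{\bar\sigma\bar\eta}{2}z^{-1/2}\left(1+\frac{b_t}{z}\right)-\frac{\bar\sigma\sqrt{t}}{4}z^{-1/2}\,;
\]
so $|w_t'(z)|$ is bounded by a constant for $z\ge1$, where $N'\le1/\sqrt{2\pi}$ handles the rest, but it grows like $z^{-3/2}$ as $z\to0^{+}$. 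The key observation is that on that same range $w_t(z)$ is of order $+b_t z^{-1/2}$: using $b_t\ge1/2$ for $t$ below some $\tau$, one checks $w_t(z)\ge c\,z^{-1/2}$ for $z$ below a threshold depending only on $\bar\sigma,\bar\eta$, whence $N'(w_t(z))\le\frac{1}{\sqrt{2\pi}}e^{-c^2 z^{-1}/2}$ and the Gaussian factor dominates the polynomial singularity, $|g_t'(z)|\le C\,z^{-3/2}e^{-c' z^{-1}}\to0$ as $z\to0^{+}$. Patching the three ranges (a neighbourhood of $0$, an intermediate compact interval, and $z\ge1$) yields $\sup_{z\ge0}|g_t'(z)|\le L$ with $L$ independent of $t\in(0,\tau)$; hence each $g_t$ is $L$-Lipschitz, and in particular the family is equicontinuous with bounded Lipschitz constant.

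With (i) and (ii) verified, \textbf{Lemma \ref{lemma_general_limit}} applied to the positive random variables $X_t\to X$, the functions $g_t\to g$, and the limit $g$ gives $\lim_{t\to0}\mathbb{E}[g_t(X_t)]=\mathbb{E}[g(X)]$, which is exactly
\[
\lim_{t\to0}\mathbb{E}\!\left[N\!\left(\bar\sigma\bar\eta\left(-\sqrt{X_t}+\frac{\varphi_t}{\bar\sigma^2\sqrt{X_t}\,\eta_t}\right)-\frac{\bar\sigma\sqrt{tX_t}}{2}\right)\right]=\mathbb{E}\!\left[N\!\left(\bar\sigma\bar\eta\left(-\sqrt X+\frac{1}{\sqrt X}\right)\right)\right]\,.
\]
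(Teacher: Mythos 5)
Your proposal is correct and follows essentially the same route as the paper's proof: the same $g_t$, $g$, an appeal to \textbf{Lemma \ref{lemma_general_limit}}, and the same key observation that near $z=0$ the argument of $N'$ grows like $z^{-1/2}$ so the Gaussian decay dominates the $z^{-3/2}$ singularity of $w_t'$, giving a $t$-uniform Lipschitz bound. Your choice of dominating function $h=g$, obtained from the monotonicity $g_t\le g$ via $\varphi_t/(\bar\sigma^2\eta_t)\le 1$, is a slightly cleaner packaging of condition (ii) than the paper's piecewise bound, but it is not a different argument.
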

\begin{proof}
     Define  $$g_t(z):=N\left(\bar{\sigma}\bar{\eta} \left(-\sqrt{z}+\varphi_t /(\bar{\sigma}^2\sqrt{z} \eta_t)\right)-\bar{\sigma} \sqrt{t\,z}/2\right)\;\; \mbox{and}\;\; g(z):=N(\bar{\sigma}\bar{\eta}(-\sqrt{z}+1/\sqrt{z}))\;\;.$$ We emphasize that $g_t(z)$ is uniformly bounded by one
    and $g_t(z)$ converges point-wise to $g(z)$ because, thanks to \textbf{Lemma \ref{Lemma::short_time_gamma}} point 3, $\lim_{t\to 0}\varphi_t/(\bar{\sigma}^2\eta_t)=1$. \\
We prove that  $\exists\, \tau\in(0,1)$ s.t. the derivative of $g_t(z)$ is uniformly bounded, if $t\in(0,\tau)$.  Fix $\tau\,\in(0,1)$ s.t. 
\[ 
\begin{cases} 
\displaystyle  \frac{\varphi_\tau }{ \bar{\sigma}^2\,\eta_\tau}&>\frac{2}{3}\\
\displaystyle \frac{\bar{\sigma}\bar{\eta}}{\bar{\sigma}\bar{\eta}+\bar{\sigma} \sqrt{\tau}/2}&>\frac{3}{4}\\

\end{cases}\;\; .\]
The following hold for $t<\tau$, 
\begin{align}
&\left|\frac{\partial g_t}{\partial z}\right|\nonumber\\&= N'\left(\bar{\sigma}\bar{\eta} \left(-\sqrt{z}+\varphi_t /(\bar{\sigma}^2\eta_t\, \sqrt{z} )\right)-\bar{\sigma} \sqrt{zt}/2\right)\left|\bar{\sigma}\bar{\eta}\left(-1/(2 \sqrt{z})-\varphi_t /(2\bar{\sigma}^2\eta_t\,z^{3/2} )\right)-\bar{\sigma}\sqrt{t}/(4\sqrt{z})\right|\nonumber\\ 
&= N'\left(\bar{\sigma}\bar{\eta} \left(-\sqrt{z}+\varphi_t /(\bar{\sigma}^2\eta_t\, \sqrt{z} )\right)-\bar{\sigma} \sqrt{zt}/2\right)\left( 1+\varphi_t /(\bar{\sigma}^2 \eta_t z)+\sqrt{t}/(2\bar{\eta})\right)\bar{\sigma}\bar{\eta}/\left(2\sqrt{z}\right)\nonumber\\ 
&\leq   N'\left(\bar{\sigma}\bar{\eta}\left(-\sqrt{z}+\varphi_t /(\bar{\sigma}^2\eta_t\, \sqrt{z} )\right)-\bar{\sigma} \sqrt{zt}/2\right) \left( 1+1/z +1/(2\bar{\eta})\right)\bar{\sigma}\bar{\eta}/\left(2\sqrt{z}\right)\label{eq:l1}\\
&\leq \left[ \frac{1}{\sqrt{2\pi}} \mathbb{1}_{D_2} +N'\left(\bar{\sigma}\bar{\eta} \left(-\sqrt{z}+2/(3\sqrt{z} )\right)-\tau\bar{\sigma} \sqrt{z}/2\right)\mathbb{1}_{D_1}+ N'\left(\bar{\sigma}\bar{\eta} \left(-\sqrt{z}+1/\sqrt{z} \right)\right) \mathbb{1}_{D_3}\right]\cdot\nonumber\\
&\;\;\;\left(1+1/z +1/(2\bar{\eta})\right)\bar{\sigma}\bar{\eta}/\left(2\sqrt{z}\right) := M(z)\label{eq:l2}
\;\;.
\end{align}
Inequality (\ref{eq:l1}) holds because, by \textbf{Lemma \ref{Lemma::short_time_gamma}} point 2, $\varphi_t/(\bar{\sigma}^2\eta_t)<1$ and $\tau \in (0,1)$. Let us observe that (\ref{eq:l1}) is the product of  positive quantities. In (\ref{eq:l2}) we bound from above only the first factor, the only one that still depends from $t$. Inequality (\ref{eq:l2}) is deduced by dividing the domain of $z\in \mathbb{R^+}$ in the three sets $D_1\equiv (0,1/2]$, $D_2\equiv (1/2,3/2]$ and $D_3\equiv (3/2,\infty)$.\\ For $z\in D_2$, we bound the first factor with its maximum $\frac{1}{\sqrt{2\pi}}$.\\ For $z \in D_1$, we observe that for $t<\tau$\[
\bar{\sigma}\bar{\eta} \left(-\sqrt{z}+\varphi_t /(\bar{\sigma}^2\eta_t\, \sqrt{z} )\right)-\bar{\sigma} \sqrt{zt}/2>\bar{\sigma}\bar{\eta} \left(-\sqrt{z}+2/(3\sqrt{z} )\right)-\tau\bar{\sigma} \sqrt{z}/2>0\;\;.
\]
Hence, because $N'$ is a decreasing function of its argument in $\mathbb{R}^+$,\[
N'\left(  \bar{\sigma}\bar{\eta}\left(-\sqrt{z}+\varphi_t /(\bar{\sigma}^2\eta_t\, \sqrt{z} )\right)-\bar{\sigma} \sqrt{zt}/2\right)\leq  N'\left(\bar{\sigma}\bar{\eta} \left(-\sqrt{z}+2/(3\sqrt{z} )\right)-\tau\bar{\sigma} \sqrt{z}/2\right)\;\;,\;\;z\in D_1\;\;.
\]
Finally, for $z\in D_3$
\begin{equation}\label{eq:ineq_l3}
\bar{\sigma}\bar{\eta} \left(-\sqrt{z}+\varphi_t/(\bar{\sigma}^2\eta_t\sqrt{z} )\right)-\bar{\sigma}\sqrt{t z}/2<\bar{\sigma}  \bar{\eta} \left(   -\sqrt{z}+1/\sqrt{z}\right)<0\;\;.
\end{equation}
Thus, because $N'$ is an increasing function of its argument in $\mathbb{R}^-$
\[
N'\left(  \bar{\sigma}\bar{\eta} \left(-\sqrt{z}+\varphi_t /(\bar{\sigma}^2 \eta_t\sqrt{z})\right)-\bar{\sigma}\sqrt{t z}/2\right)<N'\left(\bar{\sigma}  \bar{\eta} \left(   -\sqrt{z}+1/\sqrt{z}\right)\right)\;\;,\;\; z\in D_3\;\;.
\]
Notice that $M(z)$ is  positive and bounded on $\mathbb{R}^+$; this implies that the derivatives of $g_t(z)$ is uniformly bounded.
Thus, the sequence $g_t(z)$ is Lipschitz  continuous in $z$ with bounded Lipschitz constant on $(0,\tau)$.\\ Moreover, for $t<\tau<1$ we have that
\begin{align*}
      \left|g_t(z)-g(z)\right|& \leq \mathbb{1}_{z\in(0,1]}+N'\left(\bar{\sigma}\bar{\eta} \left(-\sqrt{z}+1/\sqrt{z} \right)\right)(\bar{\sigma}\sqrt{zt}/2+\bar{\sigma}\bar{\eta}(1-\varphi_t /(\bar{\sigma}^2\eta_t ))/\sqrt{z}) \mathbb{1}_{z\in(1,\infty)} \\
     &\leq\mathbb{1}_{z\in(0,1]}+ N'\left(\bar{\sigma}\bar{\eta} \left(-\sqrt{z}+1/\sqrt{z} \right)\right)(\bar{\sigma}\sqrt{z}/2+\bar{\sigma}\bar{\eta}/\sqrt{z}) \mathbb{1}_{z\in(1,\infty)} := h(z)\;\;.
      \end{align*}
      In the first inequality we divide the domain of $z\in \mathbb{R}^+$ in two sets, $D_1\equiv (0,1]$ and $D_2\equiv(1,\infty)$. In the first domain the difference is bounded by one. In the second set, notice that (\ref{eq:ineq_l3}) is still valid for $z>1$; then, the difference is lower than $N'$ computed on the max of the arguments of $N$ multiplied by the  positive difference of the arguments of $N$.
   The second inequality holds because $\varphi_t/(\bar{\sigma}^2\eta_t)$ is positive and $t<1$. We observe that $h(z)$ goes to zero as $z$ goes to infinity.\\
Notice that $X_t$ converges to $X$ in distribution, $g_t(z)$ is a sequence of positive function uniformly bounded,  Lipschitz  continuous  with bounded Lipschitz constant on $(0,\tau)$, and $\lim_{z\to \infty}h(z)$=0. Thus, we prove the thesis via \textbf{Lemma \ref{lemma_general_limit}}
\end{proof}
\begin{lemma}\label{lemma::limit_distro}
For $t>0$,\begin{equation}
    \sup_{z}\left|\mathbb{P}(S_t<z)-N\left((z-1)\sqrt{\frac{t}{k_t}}\right)\right|\leq \frac{2-\alpha }{1-\alpha}\sqrt{\frac{k_t}{t}}\;\;, \label{eq:distro_approx}
\end{equation}
where $S_t$  is the random variable of \textbf{Definition \ref{definition::S_t}} with Laplace transform  $ {\cal L}_t \left(u;\;k_t,\;\alpha\right)$.\\
Moreover, if $\beta>1$, \begin{enumerate}
    \item \[\lim_{t\to 0}\mathbb{P}(S_t<1)=\lim_{t\to 0}\mathbb{P}(S_t\geq 1)=\lim_{t\to 0} \mathbb{P}\left(S_t\leq \frac{\varphi_t }{\bar{\sigma}^2\eta_t} \right)=\frac{1}{2}\;\;.\]
    \item \[\lim_{t\to 0}\mathbb{P}(S_t\leq 1-t^q)=\begin{cases}
           1/2\;\; &\mbox{if}\;\;q> \frac{\beta-1}{2}\\
                      N(-1/\sqrt{\bar{k}})\;\; &\mbox{if}\;\;q=\frac{\beta-1}{2}\\
           0\;\; &\mbox{if}\;\;q< \frac{\beta-1}{2} 

    \end{cases}\;\;.\]
        \item  \[\lim_{t\to 0}\mathbb{P}(S_t\leq 1+t^q)=\begin{cases}
        	1/2\;\; &\mbox{if}\;\;q> \frac{\beta-1}{2}\\
        	N(1/\sqrt{\bar{k}})\;\; &\mbox{if}\;\;q=\frac{\beta-1}{2}\\
        	1\;\; &\mbox{if}\;\;q< \frac{\beta-1}{2} 
        	
        \end{cases}\;\;.\]

\end{enumerate}
\end{lemma}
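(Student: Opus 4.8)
The plan is to prove the uniform bound (\ref{eq:distro_approx}) first, as a Berry--Esseen estimate for the infinitely divisible law of $S_t$, and then to read off the three limits by specializing $z$ and invoking \textbf{Lemma \ref{Lemma::short_time_gamma}}.

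For (\ref{eq:distro_approx}): for each fixed $t>0$ the Laplace transform $\mathcal{L}_t(u;k_t,\alpha)$ is that of a tempered stable subordinator, so $S_t$ is infinitely divisible with no Gaussian component, zero drift, and a L\'evy measure $\nu_t$ on $(0,\infty)$ with finite moments of all orders. Differentiating the cumulant generating function $K_t(u):=\ln\mathcal{L}_t(-u)$ at $u=0$ (equivalently, computing $\int x^j\,\nu_t(dx)$) gives $\mathbb{E}[S_t]=1$, $\mathrm{Var}(S_t)=\int x^2\,\nu_t(dx)=k_t/t$ and $\int_0^\infty x^3\,\nu_t(dx)=\tfrac{2-\alpha}{1-\alpha}(k_t/t)^2$; since $\nu_t$ is carried by $(0,\infty)$, $\int|x|^3\,\nu_t(dx)$ equals the same quantity. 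I would then invoke a Berry--Esseen-type inequality for infinitely divisible laws with finite third moment, bounding the Kolmogorov distance between $S_t$ and the Gaussian with matching mean and variance by $\int|x|^3\,\nu_t(dx)\,/\,\mathrm{Var}(S_t)^{3/2}$ (the absolute constant in the inequality being at most $1$), which here equals exactly $\tfrac{2-\alpha}{1-\alpha}\sqrt{k_t/t}$; this is (\ref{eq:distro_approx}). A self-contained proof would combine Esseen's smoothing inequality with the second-order expansion of the characteristic function of $S_t$ near the origin plus a tail bound, or instead decompose $S_t$ into $n$ i.i.d. summands with L\'evy measure $\nu_t/n$, apply the classical i.i.d. bound, and let $n\to\infty$. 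Finally, (\ref{eq:distro_approx}) holds with $\mathbb{P}(S_t\le z)$ in place of $\mathbb{P}(S_t<z)$ as well, since the normal CDF is continuous.

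For the limits, take $\beta>1$, so $k_t/t=\bar{k}\,t^{\beta-1}\to 0$ and the right-hand side of (\ref{eq:distro_approx}) vanishes; hence $\mathbb{P}(S_t\le z)=N\big((z-1)\sqrt{t/k_t}\big)+o(1)$, uniformly in $z$. Evaluating at $z=1$ gives $N(0)=\tfrac12$, and $\mathbb{P}(S_t\ge 1)=1-\mathbb{P}(S_t<1)$, which are the first two statements of point 1. For the third, \textbf{Lemma \ref{Lemma::short_time_gamma}} (points 1 and 3) gives $\varphi_t/(\bar{\sigma}^2\eta_t)-1=-\tfrac12\bar{\sigma}^2\eta_t k_t\,(1+o(1))$, hence $\big(\varphi_t/(\bar{\sigma}^2\eta_t)-1\big)\sqrt{t/k_t}=-\tfrac12\bar{\sigma}^2\bar{\eta}\sqrt{\bar{k}}\;t^{\,\delta+(\beta+1)/2}(1+o(1))\to 0$ (here $\delta+(\beta+1)/2>0$ on the admissible range for $\beta>1$), so $\mathbb{P}\big(S_t\le\varphi_t/(\bar{\sigma}^2\eta_t)\big)\to N(0)=\tfrac12$. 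For points 2 and 3, put $z=1\mp t^{q}$; then $(z-1)\sqrt{t/k_t}=\mp\,t^{\,q-(\beta-1)/2}/\sqrt{\bar{k}}$, which converges to $0$, to $\mp1/\sqrt{\bar{k}}$, or to $\mp\infty$ according as $q>(\beta-1)/2$, $q=(\beta-1)/2$, or $q<(\beta-1)/2$; composing with $N$ and using the uniform bound yields the three displayed values.

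The step I expect to be the real obstacle is the Berry--Esseen estimate, obtained in exactly the stated form: the relevant absolute constant being $\le 1$, the distance taken against the Gaussian with matched first two moments, and the third-order term expressed through the L\'evy measure rather than a central moment. After that, the cumulant computation and the three limits are routine, using only \textbf{Lemma \ref{Lemma::short_time_gamma}} and the identity $t/k_t=t^{1-\beta}/\bar{k}$.
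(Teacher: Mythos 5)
Your proposal is correct and follows essentially the paper's route: the paper establishes (\ref{eq:distro_approx}) by exactly the ``self-contained'' alternative you mention --- decomposing $S_t$ (after centering and scaling) into $n$ i.i.d.\ summands with Laplace transform ${\cal L}_t(u;\,k_t n,\,\alpha)$, applying the classical Berry--Esseen bound, and letting $n\to\infty$ so that the Lyapunov ratio tends to $\frac{2-\alpha}{1-\alpha}\sqrt{k_t/t}$, which matches your cumulant computation $\kappa_3=\frac{2-\alpha}{1-\alpha}(k_t/t)^2$. The three limits are then obtained exactly as you do, by evaluating the Gaussian approximation at $z=1$, $z=\varphi_t/(\bar{\sigma}^2\eta_t)$ (using \textbf{Lemma \ref{Lemma::short_time_gamma}} and $\delta+(\beta+1)/2>0$) and $z=1\mp t^q$.
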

\begin{proof}
We use an approach similar to  \citet[][Th.4.7, p.4271]{kuchler2013tempered}. Given $t>0$, $n \in \mathbb{N}$ we define $X_t^i:=S_t^i-1$ for $i=1,2,...,n$ with $S_t^i$ independent positive random variables with Laplace transform $ {\cal L}_t \left(u;\;k_t\, n,\;\alpha\right)$. The  standard deviation of  $S_t^i$ is $\Sigma_t^n:=\sqrt{k_t n/t}$. We define\\ $Q_t^n:=\sum_{i=1}^n X_t^i/(\sqrt{n}\Sigma_t^n)$.
Notice that $Q_t^n+\sqrt{n}/\Sigma_t^n$ has the same law of $\sqrt{t/k_t}S_t$ by identity in Laplace transform because 
\[\mathbb{E}\left[e^{-u\left(P^n_t+\sqrt{n}/\Sigma_t^n\right)}\right]=\mathbb{E}\left[e^{-u\sum_{i=1}^n S_t^i/(\sqrt{n}\Sigma_t^n)}\right]=e^{\sum_{i=1}^n \frac{t(1-\alpha)}{k_t
 \alpha n }\left(1-\left(1+\frac{u\sqrt{k_t}}{\sqrt{t}(1-\alpha)}\right)^\alpha\right)} = \mathbb{E}\left[e^{-u\sqrt{\frac{t}{k_t}}S_t}\right]\;\;. \]
Thus, $\sqrt{k_t/t}\;Q_t^n+1$ is equal in distribution to $S_t$.
Moreover, for any $t>0$

 \[\sup_{z}\left|\mathbb{P}(Q_t^n<z)-N(z)\right|\leq \frac{\mathbb{E}\left[\left|X_t^i\right|^3\right]}{\left(\Sigma_t^n\right)^3\sqrt{n}}<\frac{\mathbb{E}\left[(S_t^i)^3\right]+1}{(\Sigma_t^n)^3\sqrt{n}}=t^{3/2}\frac{2+3\frac{k_t n}{t}+\frac{(2-\alpha)k_t^2 n^2}{(1-\alpha)t^2}}{k_t^{3/2}n^2}=\frac{2-\alpha }{1-\alpha}\sqrt{\frac{k_t}{t}}+O\left(\frac{1}{n}\right)\;\;.\]
 The first inequality holds thanks to the Berry-Esseen theorem \citep[see, e.g.,][Th.3.4.17, p.136]{durrett2019probability}. The first equality is obtained by substituting the third moment of $S_t^i$ and in the last equality we emphasize the leading term in $1/n$.
Thus, $\forall\;\epsilon>0$ it exists  $n$  such that 
\[
\sup_{z}\left|\mathbb{P}(Q_t^n<z)-N(z)\right|<\frac{2-\alpha }{1-\alpha}\sqrt{\frac{k_t}{t}}+\epsilon \;\;.
\]
 By definition of cumulative distribution  function, we get (\ref{eq:distro_approx}).

Equation (\ref{eq:distro_approx}) allow us to prove the limits of the probability.
\begin{enumerate}
    \item \[
\mathbb{P}(S_t\leq 1) =N(0)+O\left(t^{(\beta-1)/2}\right) = 1/2+O\left(t^{(\beta-1)/2}\right)\;\;,
\] where the first equality is due to (\ref{eq:distro_approx}) and second term goes to zero because $\beta>1$. Moreover,  \[P\left(S_t\leq\frac{\varphi_t}{\bar{\sigma}^2\eta_t}\right) = N\left({\frac{\varphi_t-\bar{\sigma}^2\eta_t}{\bar{\sigma}^2\eta_t}\sqrt{\frac{t}{k_t}}}\right)+O\left(t^{(\beta-1)/2}\right)\;\;, \]
where $\lim_{t\to 0}N\left({\frac{\varphi_t-\bar{\sigma}^2\eta_t}{\bar{\sigma}^2\eta_t}\sqrt{\frac{t}{k_t}}}\right)=\frac{1}{2}$ thanks to \textbf{Lemma \ref{Lemma::short_time_gamma}} point 3 observing that \[\left({\frac{\varphi_t}{\bar{\sigma}^2\eta_t}-1}\right)\sqrt{\frac{t}{k_t}} = \sqrt{t}\bar{\sigma}^2\eta_t\sqrt{k_t} +O\left(t^{2\delta+(3\beta+1)/2}\right) = o(1)\;\;,\]
because, in the additive process boundaries, for $\beta>1$, $\delta+(\beta+1)/2>\delta+1>0$.
\item \begin{equation*}
\mathbb{P}(S_t\leq 1-t^q) =N\left(-t^q\sqrt{\frac{t}{k_t}}\right)+ O\left(t^{(\beta-1)/2}\right)\;\;,
\end{equation*}
where $N\left(-t^q\sqrt{\frac{t}{k_t}}\right)$ goes to $1/2$ if $q>(\beta-1)/2$, to $N(-1/(\sqrt{\bar{k}})$  if $q=(\beta-1)/2$, and to $0$ if $q<(\beta-1)/2$. We emphasize that the second term goes to zero as $O\left(t^{(\beta-1)/2}\right)$.
\item \begin{equation*}
	\mathbb{P}(S_t\leq 1+t^q) =N\left(t^q\sqrt{\frac{t}{k_t}}\right)+ O\left(t^{(\beta-1)/2}\right)\;\;,
\end{equation*}
where $N\left(t^q\sqrt{\frac{t}{k_t}}\right)$ goes to $1/2$ if $q>(\beta-1)/2$, to $N(1/(\sqrt{\bar{k}})$  if $q=(\beta-1)/2$, and to $1$ if $q<(\beta-1)/2$.  We emphasize that the second term goes to zero as $O\left(t^{(\beta-1)/2}\right)$.
\end{enumerate}

\end{proof}

\begin{lemma}\label{lemma::Increasing}
If $\delta=-1/2$, $\exists H>1$ s.t. \[m(z):= N'\left(-l_t^{z}+\frac{\bar{\sigma}\sqrt{zt}}{2}\right)\bar{\sigma}\sqrt{z}\;\;\;,\;\;z>0\;\;,\] is increasing  for $z\in[1,H]$ for sufficiently small $t$, where $l_t^z$ is the quantity defined in equation (\ref{equation::lt}).

\end{lemma}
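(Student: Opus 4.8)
The plan is to reduce the statement to a continuity argument on the logarithmic derivative of $m$. First I would use $\delta=-1/2$ to put the quantity inside $N'$ in a manageable form. Since then $\eta_t=\bar\eta\,t^{-1/2}$, the term $\bar\sigma\eta_t\sqrt{zt}=\bar\sigma\bar\eta\sqrt{z}$ is independent of $t$, and by \textbf{Lemma \ref{Lemma::short_time_gamma}} point 1 one has $\varphi_t\sqrt{t}=\bar\sigma^2\bar\eta\,(1+o(1))$ as $t\to 0$. Writing
\[
b_t(z):=-l_t^{z}+\frac{\bar\sigma\sqrt{zt}}{2}=\bar\sigma\bar\eta\sqrt{z}-\frac{\varphi_t\sqrt{t}}{\bar\sigma\sqrt{z}}+\frac{\bar\sigma\sqrt{t}}{2}\sqrt{z}\;\;,
\]
it follows that $b_t(z)\to b_0(z):=\bar\sigma\bar\eta\bigl(\sqrt{z}-1/\sqrt{z}\bigr)$ pointwise on $(0,\infty)$, and indeed uniformly on every compact subinterval; the same holds for $\partial_z b_t$, whose limit is $\partial_z b_0(z)=\tfrac{\bar\sigma\bar\eta}{2\sqrt{z}}\bigl(1+1/z\bigr)$.

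Next, since $m(z)=\frac{\bar\sigma\sqrt{z}}{\sqrt{2\pi}}\,e^{-b_t(z)^2/2}>0$, showing $m$ increasing on $[1,H]$ is equivalent to showing $\partial_z\log m(z)>0$ there. A direct computation gives
\[
\frac{\partial}{\partial z}\log m(z)=\frac{1}{2z}-b_t(z)\,\partial_z b_t(z)=:G(z,t)\;\;,\qquad
\partial_z b_t(z)=\frac{\bar\sigma\bar\eta}{2\sqrt{z}}+\frac{\varphi_t\sqrt{t}}{2\bar\sigma z^{3/2}}+\frac{\bar\sigma\sqrt{t}}{4\sqrt{z}}\;\;.
\]
Setting $G(z,0):=\frac{1}{2z}-b_0(z)\,\partial_z b_0(z)$, the uniform convergence just noted shows that $G$ extends to a continuous function on $[1,2]\times[0,\tau]$ for some $\tau>0$. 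At the corner $z=1$, $t=0$ we have $b_0(1)=0$, hence $G(1,0)=\tfrac12>0$.

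Finally, by continuity of $G$ on the compact set $[1,2]\times[0,\tau]$ together with $G(1,0)>0$, there exist $H\in(1,2]$ and $\tau'\in(0,\tau]$ such that $G(z,t)>0$ for all $(z,t)\in[1,H]\times(0,\tau']$. For such $t$ this gives $\partial_z\log m(z)>0$ on $[1,H]$, so $m$ is increasing there, which is the claim. The only mildly delicate point is the uniform convergence $b_t\to b_0$ and $\partial_z b_t\to\partial_z b_0$ on $[1,2]$ (equivalently $\varphi_t\sqrt{t}\to\bar\sigma^2\bar\eta$); this is immediate from \textbf{Lemma \ref{Lemma::short_time_gamma}} point 1, since the remaining $z$-dependence enters only through the elementary functions $\sqrt{z}$ and $1/\sqrt{z}$, which are uniformly continuous and bounded away from $0$ and $\infty$ on $[1,2]$.
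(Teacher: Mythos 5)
Your proposal is correct and follows essentially the same route as the paper: differentiate $m$, divide out the positive factor $N'(b_t(z))\bar\sigma$, and reduce the claim to the positivity of $\tfrac{1}{2z}-b_t(z)\,\partial_z b_t(z)$ near $z=1$ for small $t$, using Lemma \ref{Lemma::short_time_gamma} to control $\varphi_t\sqrt{t}$. The only difference is cosmetic: the paper finishes by explicitly solving the resulting quadratic inequality in $z$ and bounding its positive root below by some $H>1$, whereas you conclude by continuity of the extended function $G$ at the corner $(z,t)=(1,0)$, where $b_0(1)=0$ gives $G(1,0)=\tfrac12>0$ — a slightly softer but equally valid finish.
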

\begin{proof}
We compute the derivative w.r.t. z of $m(z)$ and study its sign at short-time.
\begin{align*}
 \frac{\partial m(z)}{\partial z}  &=  \frac{\frac{\partial N'\left(-l_t^{z}+\frac{\bar{\sigma}\sqrt{zt}}{2}\right)\bar{\sigma}\sqrt{z}}{\partial z}}{ N'\left(\bar{\sigma}\bar{\eta}\left(\sqrt{z}-\frac{\varphi_t }{\bar{\sigma}^2 \eta_t\sqrt{z}}\right)+\frac{\bar{\sigma}\sqrt{tz}}{2}\right)\bar{\sigma}}\\&=  \frac{1}{2\sqrt{z}}
       -2\left(\bar{\sigma}\bar{\eta}\left(\sqrt{z}-\frac{\varphi_t}{\bar{\sigma}^2\sqrt{z} \eta_t}\right)+\frac{\bar{\sigma}\sqrt{tz}}{2}\right)\left(\bar{\sigma}\bar{\eta}\left(\frac{1}{2}+\frac{\varphi_t }{2\bar{\sigma}^2 \eta_t z}\right)+\frac{\bar{\sigma}\sqrt{t}}{4}\right)\\
      &= \frac{1}{z^{3/2}}\left(\frac{z}{2}-z^2\left(\bar{\sigma}\bar{\eta}+\bar{\sigma}\frac{\sqrt{t}}{2}\right)^2+\bar{\sigma}^2\bar{\eta}^2\frac{\varphi_t^2 }{\bar{\sigma}^4\eta_t^2}\right)\;\;.
\end{align*}
The derivative is positive if \[
0<z< \frac{1/2+\sqrt{1/4+4\left(\bar{\sigma}\bar{\eta}+\bar{\sigma}\frac{\sqrt{t}}{2}\right)^2\bar{\sigma}^2\bar{\eta}^2\frac{\varphi_t }{\bar{\sigma}^4\eta_t^2}}{}}{2\left(\bar{\sigma}\bar{\eta}+\bar{\sigma}\frac{\sqrt{t}}{2}\right)^2}\;\;.
\]
Notice that  $\exists$ $\tau$ and $H>1$ such that for every $t<\tau$
the derivative is positive if $z<H$ because for sufficiently small time \[
  \frac{1/2+\sqrt{1/4+4\left(\bar{\sigma}\bar{\eta}+\bar{\sigma}\frac{\sqrt{t}}{2}\right)^2\bar{\sigma}^2\bar{\eta}^2\frac{\varphi_t^2 }{\bar{\sigma}^4\eta_t^2}}{}}{2\left(\bar{\sigma}\bar{\eta}+\bar{\sigma}\frac{\sqrt{t}}{2}\right)^2}>\frac{1/2}{2(\bar{\sigma}\bar{\eta}+\bar{\sigma}\sqrt{t}/2)}+\frac{\bar{\eta}}{\bar{\eta}+\sqrt{t}/2}\;\frac{\varphi_t }{\bar{\sigma}^2\eta_t }>H>1\;\;,
 \]
 where the first inequality is obtained by bounding from below $1/4$ with $0$ inside the square root and the second holds because, by \textbf{Lemma \ref{Lemma::short_time_gamma}} point 3,
 \[\lim_{t\to 0} \frac{1/2}{2(\bar{\sigma}\bar{\eta}+\bar{\sigma}\sqrt{t}/2)}+\frac{\bar{\eta}}{\bar{\eta}+\sqrt{t}/2}\;\frac{\varphi_t }{\bar{\sigma}^2\eta_t } = \frac{1}{4\bar{\sigma}^4\bar{\eta}^2}+1\;\;. \]
 Thus, $m(z)$ is increasing in  [1,H] for sufficiently small $t$
\end{proof}

\end{appendices}

\end{document}